\newtheorem{theorem}{Theorem}[section]
\newtheorem{proposition}[theorem]{Proposition}
\theoremstyle{definition}
\newtheorem{definition}[theorem]{Definition}
\theoremstyle{remark}
\newtheorem{remark}[theorem]{Remark}
\numberwithin{equation}{section}
\newcommand{\beq}{\begin{equation}}
\newcommand{\eeq}{\end{equation}}
\newcommand{\TT}{\mathbb{T}}
\newcommand{\ZZ}{\mathbb{Z}}
\newcommand\bbT{\mathbb T}
\newcommand{\sU}{{\sf U}}
\DeclareMathOperator{\Tr}{Tr}
\renewcommand{\cL}{\mathcal{L}}
\newcommand{\gH}{\mathfrak{H}}
\newcommand{\bu}{\bullet}
\newcommand{\hide}[1]{}
\newcommand {\be}{\begin{equation}}
\newcommand {\ee}{\end{equation}}
\newcommand{\h}{\begin{eqnarray*}}
\newcommand{\e}{\end{eqnarray*}}
\begin{document}


\title[Loop Hori Formulae for T-duality and Twisted Bismut-Chern Character]
{Loop Hori Formulae for T-duality and Twisted Bismut-Chern Character}


\author{Fei Han}
\address{Department of Mathematics,
National University of Singapore, Singapore 119076}
\email{mathanf@nus.edu.sg}

 \author{Varghese Mathai}
\address{School of Mathematical Sciences,
University of Adelaide, Adelaide 5005, Australia}
\email{mathai.varghese@adelaide.edu.au}

\subjclass[2010]{Primary 55N91, Secondary 58D15, 58A12, 81T30, 55N20}
\keywords{}
\date{}

\maketitle

\begin{abstract} The main purpose of this paper is to 
establish the loop space formulation of T-duality in the presence of background flux. In particular, we construct a loop space analogue of the Hori formula, termed \textbf{the loop Hori map}, and demonstrate that it induces a quasi-isomorphism between the exotic twisted equivariant cohomologies on the free loop spaces of the T-dual sides. Spacetime, when viewed as the constant loops, is a submanifold of loop space. The duality that we prove on loop space restricts to the T-duality  with $H$-flux on spacetime. This significantly refines the earlier work of the authors in \cite{HM15} where T-duality was established after 
localisation to the base space. The construction of the loop Hori map is an application of our generalization of the Bismut--Chern character~\cite{HM15}, originally introduced in the loop space interpretation of the Atiyah--Singer index theorem by Atiyah--Witten~\cite{A85} and Bismut~\cite{B85}.

\end{abstract}

\tableofcontents

\section*{Introduction}


\subsection{Brief Background on T-Duality with \( H \)-Flux and the Hori Map}
In \cite{BEM04a,BEM04b}, T-duality in a background flux was studied {\em mathematically} for the first time. On the \emph{physics} side, the story descends from
Buscher’s rules and their many refinements; see, among others,
\cite{Buscher87,Buscher88,RocekVerlinde92,Alvarez-Intro95,
BergshoeffHullOrtin95,Alvarez2,DuffLuPope98,
GurrieriLouisMicuWaldram03,KachruSchulzTripathyTrivedi03,Hori}
for a representative account of T–duality with background flux.

Given a T-dual pair of circle bundles with connections and $H$-fluxes, 
\begin{center}
\begin{tikzcd}
(Z, A, H)\arrow[rd, "\pi"] & & (\widehat{Z}, \widehat A, \widehat H) \arrow[ld, "\widehat{\pi}"'] \\
 & X & 
\end{tikzcd}
\end{center}
it was shown in \cite{BEM04a, BEM04b} that defined via the correspondence space 
\begin{equation} 
\xymatrix @=6pc @ur { Z \ar[d]_{\pi} &
Z\times_X  \widehat Z, \ar[d]_{\widehat p} \ar[l]^{p} \\
 X & \widehat Z \ar[l]^{\widehat \pi}}
\end{equation}
the {\bf Hori map}, 
\begin{align}
T_*: (\Omega^{\bullet}(Z)^{\TT}, d+H) &\to (\Omega^{\bullet+1}(\widehat{Z})^{\widehat \TT}, -(d+\widehat{H}))\\
\omega \quad &\mapsto \int_{(Z\times_X  \widehat Z)/\widehat Z}\omega \wedge e^{-A\wedge \widehat{A}  }, \label{Horiexp}
\end{align}
is a chain map isomorphism between the twisted, $\mathbb{Z}_2$-graded complexes. Here, $\bullet$ indicates the parity, i.e., whether the degree is even or odd. One can also define the dual Hori map
\begin{align*}
\widehat T_*: (\Omega^{\bullet}(\widehat Z)^{\widehat \TT}, d+\widehat H) &\to (\Omega^{\bullet+1}(Z)^{\TT}, -(d+H))\\
\omega \quad &\mapsto \int_{(Z\times_X  \widehat Z)/ Z}\omega \wedge e^{A\wedge\widehat{A}},
\end{align*} and T-duality can be stated as
\be \label{tdual} T\circ \widehat T=-Id, \ \ \widehat T\circ T=-Id.\ee In particular, this induces an isomorphism on the twisted cohomology:
\begin{align*}
T: H^{\bullet}_{d+H}(Z) &\to H^{\bullet+1}_{d+\widehat{H}}(\widehat{Z}).
\end{align*}
The Chern class is exchanged with the $H$-flux. So in general the topologies of $Z$ and the T-dual $\widehat Z$ are different.

{On the other hand, T-duality in string theory can be realised as a transformation acting on the
worldsheet fields in the 2-dimensional nonlinear sigma model \cite{HLSUZ}.
In \cite{KL, HLSUZ2}, T-duality is studied with supersymmetry in the target space of the sigma model.  This
realisation is straightforward when there is an abelian isometry of the target
space which at the same time is a symmetry of the worldsheet action.
One finds the transformation on the worldsheet fields as
well as Buscher's rules \cite{Buscher} which yield the dual
background. The duality transformation is symmetric in the sense that
one can start from either theory (the original or the dual one) and
obtain the other one via the same gauging process \cite{Alvarez2}. A relevant account of the sigma model is in \cite{W88}.}

\subsection{Preliminary Study of the Loop Space Perspective of T-Duality with \( H \)-Flux}
In this paper, we complete the loop space formulation of T-duality in the presence of a background \( H \)-flux, continuing the program initiated in~\cite{HM15}. Specifically, we explore the behavior of T-duality at the level of free loop spaces by looping the classical T-duality diagram, i.e looking at the following diagram:
\begin{center}
\begin{tikzcd}
LZ \arrow[rd, "L\pi"] & & L\widehat{Z} \arrow[ld, "L\widehat{\pi}"'] \\
 & LX & 
\end{tikzcd}
\end{center}
This loop space framework offers the geometric side of T-duality from the perspective of the \emph{1-dimensional nonlinear sigma model with \( H \)-flux}, revealing deeper structural connections between geometry, topology, and string theory.

In~\cite{HM15}, we introduced the \emph{exotic twisted \(S^1 \)-equivariant cohomology} for the loop space \( LM \) of a smooth manifold \( M \) equipped with gerbe data 
\[
(H, B_\alpha, F_{\alpha\beta}, (L_{\alpha\beta}, \nabla^{L_{\alpha\beta}})).
\]
This exotic cohomology is defined using \( S^1 \)-invariant differential forms on \( LM \) with coefficients in the holonomy line bundle \( \mathcal{L}^B \) associated to the gerbe with connection. The differential in this complex is given by the \emph{equivariantly flat superconnection}
\[
\nabla^{\mathcal{L}^B} - i_K + \overline{H},
\]
in the sense of~\cite{MQ, Q}, where \( K \) is the rotation vector field on \( LM \), and \( \overline{H} \) is a degree-3 \( S^1 \)-invariant differential form on \( LM \) canonically determined by the 3-form curvature \( H \) of the gerbe.

We proved that the \emph{completed periodic version} of this cohomology theory,
\[
h^\bullet_{S^1}(LM, \mathcal{L}^B; \overline{H}) := H\left( \Omega^\bullet(LM, \mathcal{L}^B)^{S^1}[[u, u^{-1}]],\ \nabla^{\mathcal{L}^B} - u i_K + u^{-1} \overline{H} \right),
\]
has a Borel–Witten type \emph{localization theorem}, which identifies this cohomology with the twisted de Rham cohomology of the fixed point submanifold of \( LM \), namely the original manifold \( M \) (see Theorem~\ref{local}):
\[
h^\bullet_{S^1}(LM, \nabla^{\mathcal{L}^B}; H) \overset{\mathrm{res}}{\cong} H^\bullet\left( \Omega(M)[[u, u^{-1}]],\ d + u^{-1} H \right) \cong H^\bullet(M, H)[[u, u^{-1}]].
\]

Applying this localization to both sides of the T-duality correspondence, we obtain the following commutative diagram:

\begin{equation}
\begin{gathered} \label{localT}
\xymatrix@C=4.5em@R=3em{
h_{L\TT, \, S^1}^\bullet(LZ, \mathcal{L}^B; \overline{H}) 
\ar[d]_{\mathrm{res}}^{\cong}  
\ar@{..>}[r]^{\ ?\ } &
h_{L\widehat \TT, \, S^1}^{\bullet+1}(L\widehat{Z}, \mathcal{L}^{\widehat{B}}; \overline{\widehat{H}}) 
\ar[d]_{\mathrm{res}}^{\cong} \\
H^\bullet(\Omega(Z)^{\mathbb{T}}[[u, u^{-1}]],\ d + u^{-1} H) 
\ar[r]_{T}^{\cong} & 
H^{\bullet+1}(\Omega(\widehat{Z})^{\widehat{\mathbb{T}}}[[u, u^{-1}]],\ d + u^{-1} \widehat{H})
}
\end{gathered}
\end{equation}

\noindent

Here, for simplicity, we continue to use the symbol \( T \), even though it denotes a slight modification of the classical Hori transform~\eqref{Horiexp}, in which the kernel \( e^{-A\wedge \widehat{A}}   \) is replaced by \( e^{-u A\wedge \widehat{A}  } \). The notation \( h_{L\mathbb{T},\, S^1} \) and \( h_{L\widehat{\mathbb{T}},\, S^1} \) refers to the use of \( L\mathbb{T} \)-invariant and \( L\widehat{\mathbb{T}} \)-invariant forms, respectively, in the exotic twisted equivariant cohomology theory. This refinement does not affect the localization process.

This diagram (\ref{localT})  illustrates that, after localization, a duality exists between the loop space of \( Z \) and its T-dual counterpart, the loop space of \( \widehat{Z} \). \textbf{To fully capture the loop space perspective of T-duality, it is necessary to construct a \emph{direct} formulation of the Hori transform from \( LZ \) to \( L\widehat{Z} \), without relying on the restriction localization to the constant loops. }

\subsection{The Loop Hori Map and T-Duality on Loop Spaces: Main Results} 
In what follows, we present our approach to this problem. The construction, the loop space analogue of the Hori formula, termed \textbf{the loop Hori map}, is the central objective of this paper. We achieve this by an application of our generalization of the Bismut--Chern character~\cite{HM15}, originally introduced in the loop space interpretation of the Atiyah--Singer index theorem by Atiyah--Witten~\cite{A85} and Bismut~\cite{B85},  and by defining an ``integration along the fiber'' operation in the loop space of circle bundles, which constitutes another key novelty of this paper. 

{\bf We will see that the loop Hori map, together with the related duality isomorphisms on the loop spaces, makes the T-duality between the spacetimes \( Z \) and \( \widehat{Z} \) appear as a shadow of theories formulated on their loop spaces.}

$\, $

Let \( \mathcal{L}_h^B \) denote the \textbf{horizontal holonomy line bundle} with connection on \( LZ \), associated to the gerbe with connection \( (H, B_\alpha, F_{\alpha\beta}, (L_{\alpha\beta}, \nabla^{L_{\alpha\beta}})) \) on \( Z \). By \emph{horizontal}, we mean that the transition functions are built using the \textbf{horizontal holonomy} \( \mathrm{hol}_h(\nabla^{L_{\alpha\beta}}) \) defined with respect to the horizontal vector field \( K^h \) along loops. One has a horizontal analogue of Theorem~\ref{flat}:
\[
\left( \nabla^{\mathcal{L}_h^B} - i_{K^h} + \overline{H} \right)^2 + L_{K^h} = 0.
\]
For the horizontal theory, one still has a localization isomorphism (see Theorem \ref{local}):
\[
H\left( \Omega^\bullet(LZ, \mathcal{L}^B_h)^{L\mathbb{T}}[[u, u^{-1}]],\ \nabla^{\mathcal{L}_h^B} - u i_{K^h} + u^{-1} \overline{H} \right) 
\underset{\cong}{\overset{\mathrm{res}}{\longrightarrow}} 
H^\bullet\left( \Omega(Z)^{\mathbb{T}}[[u, u^{-1}]],\ d + u^{-1} H \right).
\]

Similarly, on the dual side \( \widehat{Z} \), one has the \textbf{horizontal holonomy line bundle} \( \mathcal{L}_h^{\widehat{B}} \) with connection \( \nabla^{\mathcal{L}_h^{\widehat{B}}} \), arising from the gerbe with connection \( (\widehat{H}, \widehat{B}_\alpha, \widehat{F}_{\alpha\beta}, (\widehat{L}_{\alpha\beta}, \nabla^{\widehat{L}_{\alpha\beta}})) \) on \( \widehat{Z} \). This bundle satisfies the horizontal flatness condition:
\[
\left( \nabla^{\mathcal{L}_h^{\widehat{B}}} - i_{\widehat{K}^h} + \overline{\widehat{H}} \right)^2 + L_{\widehat{K}^h} = 0.
\]
Moreover, one has a parallel localization isomorphism on the dual side:
\[
H\left( \Omega^\bullet(L\widehat{Z}, \mathcal{L}_h^{\widehat{B}})^{L\widehat{\mathbb{T}}}[[u, u^{-1}]],\ \nabla^{\mathcal{L}_h^{\widehat{B}}} - u i_{\widehat{K}^h} + u^{-1} \overline{\widehat{H}} \right) 
\underset{\cong}{\overset{\mathrm{res}}{\longrightarrow}} 
H^\bullet\left( \Omega(\widehat{Z})^{\widehat{\mathbb{T}}}[[u, u^{-1}]],\ d + u^{-1} \widehat{H} \right).
\]

{\em Our loop Hori map \( LT_* \) will be a morphism of the form}
\[
LT_*: \Omega^\bullet(LZ, \mathcal{L}^B_h)^{L\mathbb{T}}[[u, u^{-1}]] 
\longrightarrow 
\Omega^{\bullet+1}(L\widehat{Z}, \mathcal{L}_h^{\widehat{B}})^{L\widehat{\mathbb{T}}}[[u, u^{-1}]].
\]
It is constructed via the \emph{loop correspondence space}
\begin{equation} 
\xymatrix @=6pc @ur { LZ \ar[d]_{L\pi} &
LZ\times_{LX}  L\widehat Z, \ar[d]_{L\widehat p} \ar[l]^{Lp} \\
 LX & L\widehat Z \ar[l]^{L\widehat \pi}}
\end{equation}

In light of the original Hori transform~\eqref{Horiexp}, the construction of the loop Hori map \( LT_* \) hinges on addressing two key questions: \newline
(i) What is the appropriate analogue of the kernel \( e^{-u A\wedge \widehat{A} } \) in the loop space setting? \newline
(ii) What replaces the integration along the fiber \( \int_{(Z \times_X \widehat{Z})/\widehat Z} \) when working at the level of loop spaces?

We address these questions as follows:

\noindent (i) We show that the appropriate loop space analogue of the kernel \( e^{-u A\wedge \widehat{A}} \) is given by the \textbf{horizontal twisted Bismut--Chern character} of the \textbf{Poincaré gerbe module}. This character arises from the geometry of the correspondence space \( Z \times_X \widehat{Z} \), equipped with a flat gerbe structure. The horizontal twisted Bismut--Chern character is a differential form on the loop correspondence space \( LZ \times_{LX} L\widehat{Z} \), valued in the tensor product of horizontal holonomy line bundles, and plays the role of the integral kernel in the loop Hori map.

{Let us elaborate. As reviewed in Section~\ref{reviewT} and indicated in Section~\ref{corrgerbe}, one has the gerbe data over the space \( Z \) \[
(H, B_\alpha, F_{\alpha\beta}, (L_{\alpha\beta}, \nabla^{L_{\alpha\beta}})),
\]
and on the dual space \( \widehat{Z} \), the corresponding dual gerbe data 
\[
(\widehat{H}, \widehat{B}_\alpha, \widehat{F}_{\alpha\beta}, (\widehat{L}_{\alpha\beta}, \nabla^{\widehat{L}_{\alpha\beta}})).
\]
\noindent\emph{In the present context, we work over the base manifold \( X \) using a Brylinski-type cover \( \{U_\alpha\} \), which is particularly suited for constructions involving loop space. Despite this variation, all arguments from Section~\ref{reviewT} for T-duality remain valid without essential modification.}}

{The correspondence space \( Z \times_X \widehat{Z} \) is locally covered by open sets of the form
\[
U_\alpha \times \mathbb{T} \times \widehat{\mathbb{T}}.
\]
On the overlaps
\[
U_{\alpha\beta} \times \mathbb{T} \times \widehat{\mathbb{T}} = (U_\alpha \cap U_\beta) \times \mathbb{T} \times \widehat{\mathbb{T}},
\]
we consider
\[
\left( \widehat{p}^* \widehat{L}_{\alpha\beta} \otimes \left( p^* L_{\alpha\beta} \right)^{-1},\ 
1 \otimes \widehat{p}^* \nabla^{\widehat{L}_{\alpha\beta}} - p^* \nabla^{L_{\alpha\beta}} \otimes 1 \right).
\]
which is a gerbe $ \mathcal{CG} $ with connection, referred to as the \textbf{correspondence gerbe with connection},
 whose curvature is \( H - \widehat{H} \).} The horizontal holonomy line bundle of this gerbe is 
$
(Lp)^*\mathcal{L}^B_h)^{-1} \otimes (L\widehat{p})^*\mathcal{L}_h^{\widehat{B}}
$
over the loop correspondence space \( LZ \times_{LX} L\widehat{Z} \). The diagram below summarizes this structure:

{\footnotesize
\begin{equation}
\label{eq:T-duality-loop}
\begin{gathered}
\xymatrix@C=4em@R=3em{
& ((Lp)^*\mathcal{L}^B_h)^{-1} \otimes (L\widehat{p})^*\mathcal{L}_h^{\widehat{B}} \ar[d] & \\
(\mathcal{L}^B_h, \nabla^{\mathcal{L}_h^B}) \ar[d] 
& (LZ \times_{LX} L\widehat{Z},\ \overline H - \overline{\widehat H}) \ar[dl]_{Lp} \ar[rd]^{L\widehat{p}} 
& (\mathcal{L}_h^{\widehat{B}}, \nabla^{\mathcal{L}_h^{\widehat{B}}}) \ar[d] \\
(LZ, \overline{H}) \ar[rd]_{L\pi} 
& & (L\widehat{Z}, \overline{\widehat{H}}) \ar[ld]^{L\widehat{\pi}} \\
& LX &
}
\end{gathered}
\end{equation}
}

Meanwhile, the space \( Z \times_X \widehat{Z} \) is a \( \TT \times \widehat{\TT} \)-bundle over the base \( X \), and on each fiber \( \TT \times \widehat{\TT} \), there exists a \textbf{Poincaré line bundle}. Although these line bundles do not globally patch together to form a single line bundle over \( Z \times_X \widehat{Z} \), we will show that they assemble into a gerbe module \( \mathcal{P} \) over the correspondence gerbe \( \mathcal{CG} \). This gerbe module carries a canonical connection \( \nabla^\mathcal{P} = \{\nabla^{\mathcal{P}_\alpha}\} \), induced from the standard antisymmetric connection on the Poincaré line bundles. Moreover, we demonstrate the existence of local 1-forms \( \Theta_\alpha \) on the base \( X \) such that the modified connection \( \nabla^{\mathcal{P}_\alpha} + \Theta_\alpha \) forms a gerbe module connection compatible with the connection on \( \mathcal{CG} \). Note that $\{d\Theta_\alpha\}$ patch together to be a closed 2 form on $X$, which we denote by $\Lambda$. 

We then take the horizontal twisted Bismut--Chern character of this gerbe module:
\[
\mathrm{BCh}^h_{\widehat{H} - H}(\mathcal{P}, \nabla^{\mathcal{P}} + \Theta) 
\in \Omega^\bullet\left(LZ \times_{LX} L\widehat{Z},\ ((Lp)^*\mathcal{L}^B_h)^{-1} \otimes (L\widehat{p})^*\mathcal{L}_h^{\widehat{B}} \right)^{L\mathbb{T} \times L\widehat{\mathbb{T}},\ K^h},
\]
where the right-hand side denotes the space of \( (L\mathbb{T} \times L\widehat{\mathbb{T}}) \)-invariant differential forms with coefficients in the indicated bundle, and satisfying \( L_{K^h}\omega = 0 \).

This form, \( \mathrm{BCh}^h_{\widehat{H} - H}(\mathcal{P}, \nabla^{\mathcal{P}} + \Theta) \), plays the role of the kernel \( e^{-u \widehat{A} \wedge A} \) in the definition of the loop Hori map. We will see that \( \mathrm{BCh}^h_{\widehat{H} - H}(\mathcal{P}, \nabla^{\mathcal{P}} + \Theta) \) reduces to the twisted Chern character on the constant loop space of \( LZ \times_{LX} L\widehat{Z} \), namely, the correspondence space \( Z \times_X \widehat{Z} \):
\begin{equation}
\mathrm{Ch}_{\widehat{H} - H}(\mathcal{P}, \nabla^{\mathcal{P}} + \Theta) = e^{-(A \wedge \widehat{A} + \Lambda)},
\end{equation}
thus generalizing the Chern character of the Poincaré line bundle with connection in the standard case. Recall that \( A \) and \( \widehat{A} \) are the connections on the dual sides, and compare this kernel with the one appearing in (\ref{Horiexp}).

There is a dual version
\[
\mathrm{BCh}^h_{H - \widehat{H}}(\mathcal{P}^{-1}, \nabla^{\mathcal{P}^{-1}} - \Theta) 
\in \Omega^\bullet\left( LZ \times_{LX} L\widehat{Z},\, (Lp)^*\mathcal{L}_h^B \otimes \left((L\widehat{p})^* \mathcal{L}_h^{\widehat{B}}\right)^{-1} \right)^{L\mathbb{T} \times L\widehat{\mathbb{T}}, K^h},
\]
which will be used to construct the dual loop Hori map.

Details of this construction will be provided in Section~\ref{Poincare}.

$\, $

\noindent
(ii) Secondly, we need to find an appropriate analogue of the fiber integration operation \( \int_{(Z \times_X \widehat{Z})/Z} \) appearing in~\eqref{Horiexp}. Recall that
\[
\int_{(Z \times_X \widehat{Z})/Z} \omega = i_v \omega,
\]
where \( \omega \in \Omega^\bullet(Z)^{\mathbb{T}} \), and \( v \) denotes the vertical invariant vector field associated with the principal circle bundle \( \mathbb{T} \to Z \to X \).

For \( \omega \in \Omega^*(LZ)^{L\mathbb{T}} \), we define
\[
\int_{LZ/LX} \omega := i_{v_L} \omega,
\]
where \( v_L \) is the vector field on \( LZ \) induced from \( v \), defined pointwise by \( v_L|_{\gamma(t)} = v|_{\gamma(t)} \) for any loop \( \gamma \in LZ \). We refer to the operation \( \int_{LZ/LX} \) as the \textbf{integration along the fiber in the loop space of the circle bundle}.

Section~\ref{integration} is devoted to the justification of this definition and the terminology. We begin by reviewing Bismut's family version of the Berline--Vergne~\cite{BV} and Duistermaat--Heckman~\cite{DH82} localization formulas for fiber integration in finite-dimensional geometry. We then discuss how this family localization formula can be formally extended to the loop bundle \( LF \to LM \to LB \) associated with a fiber bundle \( F \to M \to B \), leading to a loop space analogue of the family Atiyah--Witten localization formula (see formula~\eqref{loopintegration}). Finally, in the last subsection, we specialize to the case of a principal circle bundle and introduce the operation \( \int_{LZ/LX} \). We demonstrate that this operation agrees with the formal family Atiyah--Witten localization formula introduced in Section~\ref{loopbundle}, and we investigate several of its important properties.

$\, $

After all these preparations, we are ready to construct the \textbf{loop Hori map} and establish loop space T-duality in the presence of background \( H \)-flux (see Section~\ref{LoopHori} for full details). 

The \textbf{loop Hori map} 
\[
LT_*: \Omega^\bullet(LZ, \mathcal{L}^B_h)^{L\mathbb{T}}[[u, u^{-1}]] \longrightarrow \Omega^{\bullet+1}(L\widehat{Z}, \mathcal{L}_h^{\widehat{B}})^{L\widehat{\mathbb{T}}}[[u, u^{-1}]]
\]
is defined by
\[
LT_* G := \int_{LZ \times_{LX} L\widehat{Z} / L\widehat{Z}} 
BCh^h_{\widehat{H}-H}(\mathcal{P}, \nabla^{\mathcal{P}}+\Theta) \widehat{\otimes} (Lp)^*G.
\]
The \textbf{dual loop Hori map}
\[
L\widehat{T}_*: \Omega^\bullet(L\widehat{Z}, \mathcal{L}_h^{\widehat{B}})^{L\widehat{\mathbb{T}}}[[u, u^{-1}]] 
\longrightarrow \Omega^{\bullet+1}(LZ, \mathcal{L}^B_h)^{L\mathbb{T}}[[u, u^{-1}]]
\]
is defined by
\[
L\widehat{T}_* \widehat{G} := \int_{LZ \times_{LX} L\widehat{Z} / LZ} 
BCh^h_{H - \widehat{H}}(\mathcal{P}^{-1}, \nabla^{\mathcal{P}^{-1}} - \Theta) \widehat{\otimes} (L\widehat{p})^* \widehat{G}.
\]

$\, $

Our results for loop space T-duality are \newline

\noindent  {\bf Theorem A} (Theorem \ref{main1}) \leavevmode
\begin{itemize}
    \item[(i)] The loop Hori map and its dual are inverses up to a sign:
    \[
    L\widehat{T}_* \circ LT_* = -\mathrm{Id}, \quad LT_* \circ L\widehat{T}_* = -\mathrm{Id}.
    \]
    
    \item[(ii)] Both maps are chain maps with respect to the twisted differentials. Specifically,
    \begin{align*}
        &\left( \nabla^{\mathcal{L}_h^{\widehat{B}}} - u\, i_{\widehat{K}^h} + u^{-1} \overline{\widehat{H}} \right) \circ LT_*
        = LT_* \circ \left( \nabla^{\mathcal{L}_h^B} - u\, i_{K^h} + u^{-1} \overline{H} \right), \\
        &\left( \nabla^{\mathcal{L}_h^B} - u\, i_{K^h} + u^{-1} \overline{H} \right) \circ L\widehat{T}_*
        = L\widehat{T}_* \circ \left( \nabla^{\mathcal{L}_h^{\widehat{B}}} - u\, i_{\widehat{K}^h} + u^{-1} \overline{\widehat{H}} \right).
    \end{align*}
\end{itemize}

$\, $

As a consequence, we have \newline

\noindent  {\bf Theorem B} (Theorem \ref{main2}) The loop Hori map \newline
\begin{footnotesize}
\[
LT_* \colon 
\left(
\Omega^\bullet(LZ, \mathcal{L}^B_h)^{L\mathbb{T}, K^h}[[u, u^{-1}]],\ 
\nabla^{\mathcal{L}_h^B} - u\, i_{K^h} + u^{-1} \overline{H}
\right)
\longrightarrow 
\left(
\Omega^{\bullet+1}(L\widehat{Z}, \mathcal{L}_h^{\widehat{B}})^{L\widehat{\mathbb{T}}, \widehat{K}^h}[[u, u^{-1}]],\ 
\nabla^{\mathcal{L}_h^{\widehat{B}}} - u\, i_{\widehat{K}^h} + u^{-1} \overline{\widehat{H}}
\right)
\]
\end{footnotesize}
is a quasi-isomorphism.

Let
\[
h_{L\mathbb{T}, K^h}^\bullet(LZ, \mathcal{L}^B_h; \overline{H}) := H\left(
\Omega^\bullet(LZ, \mathcal{L}^B_h)^{L\mathbb{T}, K^h}[[u, u^{-1}]],\ 
\nabla^{\mathcal{L}^B_h} - u i_{K^h} + u^{-1} \overline{H}
\right)
\]
denote the cohomology of the horizontal exotic twisted \( L\mathbb{T} \)-equivariant complex.

By Theorem~\ref{local}, the restriction to constant loops induces an isomorphism:
\[
\mathrm{res}: h_{L\mathbb{T}, K^h}^\bullet(LZ, \mathcal{L}^B_h; \overline{H}) 
\xrightarrow[\cong]{\ \ \ \mathrm{res} \ \ \ } 
H^\bullet(\Omega(Z)^{\mathbb{T}}[[u, u^{-1}]],\ d + u^{-1}H),
\]
and similarly on the dual side.

\vspace{1em}

Restricting the loop Hori map to the constant loop spaces, we obtain a slight modification of the classical Hori map~\eqref{eqn:Hori}. The \textbf{modified Hori map}
\[
T_*': \Omega^\bullet(Z)^{\mathbb{T}}[[u, u^{-1}]] \longrightarrow \Omega^{\bullet+1}(\widehat{Z})^{\widehat{\mathbb{T}}}[[u, u^{-1}]]
\]
is defined by
\[
T_*'(G) := \int_{Z \times_X \widehat{Z} / \widehat{Z}} e^{-u(A \wedge \widehat{A} + \Lambda)} \cdot p^*G.
\]
The \textbf{dual modified Hori map}
\[
\widehat{T}_*': \Omega^\bullet(\widehat{Z})^{\widehat{\mathbb{T}}}[[u, u^{-1}]] \longrightarrow \Omega^{\bullet+1}(Z)^{\mathbb{T}}[[u, u^{-1}]]
\]
is defined by
\[
\widehat{T}_*'(\widehat{G}) := \int_{\widehat{Z} \times_X Z / Z} e^{u(A \wedge \widehat{A} + \Lambda)} \cdot \widehat{p}^*(\widehat{G}).
\]

\noindent
Since \( \Lambda \) is a closed 2-form on \( X \), it follows that the modified Hori maps \( T_*' \) and \( \widehat{T}_*' \) retain the same formal properties as the original Hori maps described in Theorem~\ref{thm:T-duality}.

$\, $

\noindent{\bf Theorem C} (Theorem \ref{main3}) There is a commutative diagram:
\begin{equation}
\begin{gathered}
\xymatrix@C=4.5em@R=3em{
h_{L\mathbb{T}, K^h}^\bullet(LZ, \mathcal{L}^B_h; \overline{H}) 
\ar[d]_{\mathrm{res}}^{\cong}  
\ar[r]^{LT} & 
h_{L\widehat{\mathbb{T}}, \widehat{K}^h}^{\bullet+1}(L\widehat{Z}, \mathcal{L}_h^{\widehat{B}}; \overline{\widehat{H}}) 
\ar[d]_{\mathrm{res}}^{\cong} \\
H^\bullet(\Omega(Z)^{\mathbb{T}}[[u, u^{-1}]], d + u^{-1}H) 
\ar[r]_{T'} & 
H^{\bullet+1}(\Omega(\widehat{Z})^{\widehat{\mathbb{T}}}[[u, u^{-1}]], d + u^{-1} \widehat{H})
}
\end{gathered}
\end{equation}

Theorem A, B and C represent our solutions to the loop space T-duality with $H$-flux. In the bulk of the paper, Section \ref{graded}, we give a further generalization of the loop Hori formula, namely the {\bf graded loop Hori formula}. 

A paper of Belov, Hull and Minasian treating the double field theoretic approach to T-duality is \cite{BHM}. 
In §4 of that work T-duality is recast in the language of canonical quantization of the sigma–model phase space, 
thereby extending the perspective of \cite{Alvarez-Intro95}.

Adopting the notation of our paper, let $Z \to X$ be a principal circle bundle with fibre $S^1$. 
Consider the holonomy line bundle $\mathcal L_H \to LZ$ associated to a gerbe with connection on $Z$, 
and pull it back to $T^*LZ$. The connection on $\mathcal L_H$ has curvature a closed $2$-form on $LZ$, 
which we again pull back to $T^*LZ$. Twisting the standard symplectic form by this curvature yields a 
twisted symplectic structure $\omega_Z$ on $T^*LZ$.

Apply the same construction to the correspondence space 
\[
Y \;=\; Z \times_X \widehat Z,
\]
equipped with closed $3$-form $H - \widehat H$. One obtains a twisted symplectic form $\omega_Y$ on $T^*LY$, 
invariant under the action of $\widehat S^1 \times S^1$, and similarly a twisted form $\omega_{\widehat Z}$ on $T^*L\widehat Z$.

The main structural statement, Theorem~4.1 of \cite{BHM}, is that
the symplectic reduction of the Hamiltonian space $(T^*LY,\omega_Y)$ by the $\widehat S^1$–action yields 
$(T^*LZ,\omega_Z)$, whereas reduction by the $S^1$–action yields $(T^*L\widehat Z,\omega_{\widehat Z})$.
In particular, for sigma–model actions on circle bundles essentially governed by the symplectic form, 
\cite{BHM} exhibits T-duality as a symmetry of a line bundle on $T^*LY$.

Our viewpoint is complementary: we study D-brane charges and their transformation under the 
extension of T-duality to loop space, rather than the line-bundle symmetry on the doubled phase space.

$\, $

\noindent{\bf Acknowledgements.} Fei Han was partially supported by the grant AcRF A-8000451-00-00 from National University of Singapore. Varghese Mathai was supported by funding from the Australian Research Council, through the Australian Laureate Fellowship FL170100020. V.M presented these results at the recent conference, {\em BV-formalism and quantum field theories,} at the Mittag-Leffler Institute, Stockholm. He thanks 
Maxim Zabzine for positive feedback.


\section{Integration Along the Fiber in Loop Spaces}\label{integration}

The purpose of this section is to introduce a version of integration along the fiber for looped principal circle bundles, which will play a key role in formulating the loop Hori formula for the loop space perspective of T-duality.

In Section~\ref{finite}, we begin by reviewing the family version of the Berline--Vergne~\cite{BV} and Duistermaat--Heckman~\cite{DH82} localization formula for integration along the fiber, as developed by Bismut~\cite{B86} in the finite-dimensional setting.

In Section~\ref{loopbundle}, we consider a fiber bundle \( F \to M \to B \), and study integration along the fibers of the associated loop bundle \( LF \to LM \to LB \). We demonstrate how, when the original bundle is equipped with a flat connection, one can formally apply the finite-dimensional family localization results from Section~\ref{finite} to this looped setting, thereby obtaining a version of the family Atiyah--Witten localization formula on free loop spaces (see formula~\eqref{loopintegration}).

In Section~\ref{circle bundle}, we focus on the special case of a principal circle bundle \( \TT \to M \to B \). For the corresponding loop bundle \( L\TT \to LM \to LB \), we define a rigorous notion of fiber integration \( \int_{LM/LB} \omega \) for an \( L\TT \)-invariant form \( \omega \) on \( LM \) (see Definition \ref{intfiberloop}). We demonstrate that this operation agrees with the formal family Atiyah--Witten localization formula introduced in Section~\ref{loopbundle}, thereby justifying the use of the notation \( \int_{LM/LB} \). We refer to this construction as \emph{integration along the fiber in the looped circle bundles}.


\subsection{Family Localization in Finite Dimension} \label{finite} We start from the case of a general fiber bundle.

Let \( \pi: M \to B \) be a smooth fiber bundle with compact, connected, oriented fiber \( C \), a smooth manifold. We equip this bundle with a set of geometric data as described in~\cite{B86, B86I}.

Suppose the fiber bundle is endowed with a connection, i.e., a choice of a \textbf{horizontal subbundle} \( T^H M \subset TM \) of the tangent bundle, such that:
\begin{equation} \label{split}
    TM = T^H M \oplus TC,
\end{equation}
where \( TC \) denotes the \textbf{vertical tangent bundle}. In particular, for each \( y \in B \), the differential \( \pi_* \) induces an isomorphism between \( T^H M \) and \( T_{\pi(y)}B \).

Let \( g^{TB} \) be a Riemannian metric on \( B \), which lifts to a scalar product on \( T^H M \). We also assume \( TC \) is equipped with a scalar product \( g^{TC} \). The scalar products on \( T^H M \) and \( TC \) together define a Riemannian metric on \( TM \), by declaring \( T^H M \perp TC \). Hence, \( M \) becomes a Riemannian manifold.

For \( x \in M \) and \( Y \in T_{\pi(x)}B \), we denote by \( Y^H \in T^H M \) the \textit{horizontal lift} of \( Y \), characterized by
\[
\pi_* Y^H = Y.
\]

Let \( \nabla^B \) denote the Levi-Civita connection on \( TB \). We define a natural connection \( \nabla^H \) on \( T^H M \) by:
\[
\nabla^H_Y Z^H := (\nabla^B_Y Z)^H, \quad \text{for } Y, Z \in TB,
\]
\[
\nabla^H_Y Z^H := 0, \quad \text{for } Y \in TC, \ Z \in TB.
\]
Since \( T^H M \) inherits its scalar product from \( TB \), the connection \( \nabla^H \) is clearly metric with respect to this structure.

Let \( \nabla^M \) be the Levi-Civita connection on \( TM \), and let \( P_C \) (resp. \( P_H \)) denote the orthogonal projection onto \( TC \) (resp. \( T^H M \)). Then a connection \( \nabla^C \) on the vertical bundle \( TC \) is defined by:
\begin{equation} \label{verticalconn}
    \nabla^C_Y Z := P_C(\nabla^M_Y Z), \quad \text{for } Y \in TM, \ Z \in TC.
\end{equation}
The connection \( \nabla^C \) is compatible with the scalar product on \( TC \) by construction.

We now introduce a new connection on \( TM \), distinct from the Levi-Civita connection \( \nabla^M \). Let \( \bar{\nabla}^M \) denote the connection on \( TM \) that coincides with \( \nabla^H \) on the horizontal subbundle \( T^H M \), and with \( \nabla^C \) on the vertical subbundle \( TC \). The connection \( \bar{\nabla}^M \) is compatible with the Riemannian metric on \( TM \), and in particular, the splitting in~\eqref{split} is parallel with respect to \( \bar{\nabla}^M \). For a comparison between \( \bar{\nabla}^M \) and the Levi-Civita connection \( \nabla^M \), see~\cite{B86I}.

Now let \( X \) be a vertical Killing vector field on \( M \), and define the zero point set:
\[
M^X = \{ x \in M \mid X(x) = 0 \}.
\]
Assume that \( M^X \) is a fiber bundle over \( B \), with typical fiber denoted by \( C^X \). Let \( N^X \) be the normal bundle of \( M^X \) in \( M \), which is clearly a subbundle of \( TC \).

Define the infinitesimal action of \( X \) on \( N^X \) by:
\[
J_X : N^X \to N^X, \quad Y \mapsto \nabla_X Y.
\]

Let \( R \) denote the curvature tensor of \( TC \). For \( x \in M^X \) and \( Y, Z \in T_x M^X \), the normal bundle \( N^X \) is preserved under the action of the curvature \( R_x(Y, Z) \), and \( R_x(Y, Z) \) commutes with \( J_X \).

Under these assumptions, we have the following \emph{family version} of the Berline--Vergne~\cite{BV} and Duistermaat--Heckman~\cite{DH82} localization formula for integration along the fiber (see~\cite{B86} for details):
\begin{equation} \label{familylocal}
\int_{M/B} \omega = \int_{M^X/B} \frac{\omega}{\operatorname{Pf}\left[ \frac{J_X + R}{2\pi} \right]},
\end{equation}
where \( \omega \) is a differential form on \( M \) satisfying \( (d + i_X)\omega = 0 \), and \( \operatorname{Pf} \) denotes the Pfaffian of an antisymmetric matrix.


\subsection{Integration Along the Fiber in Loop Spaces} \label{loopbundle}

If $X$ is a finite dimensional smooth manifold, the \emph{free loop space} is
\[
  LX := C^\infty(S^1,X),
\] equipped with the Whitney $C^\infty$ topology (the
inverse limit of the $C^k$ topologies).  This makes $LX$ into a
\emph{Fréchet manifold}: for each $\gamma\in LX$ one chooses a
Riemannian metric on $X$ and uses the exponential map to identify a
neighbourhood of $\gamma$ with an open set in the Fréchet space
\[
  T_\gamma LX \;\cong\; \Gamma\!\big(S^1,\gamma^*TM\big),
\]
the space of smooth vector fields along $\gamma$.  

Looping the fiber bundle \( \pi: M \to B \), we obtain the induced fiber map on loop spaces:
\[
L\pi: LM \to LB.
\]
Let \( \{U_\alpha\} \) be a \emph{Brylinski open cover} of \( B \), i.e., a maximal open cover such that for any finite intersection \( U_{\alpha_I} := \bigcap_{i \in I} U_{\alpha_i} \), we have \( H^i(U_{\alpha_I}) = 0 \) for \( i = 2, 3 \), with \( |I| < \infty \) (see \cite{Bry}).

Assume that for each \( \alpha \), there exists a local trivialization
\[
\phi_{U_\alpha}: \pi^{-1}(U_\alpha) \to U_\alpha \times C,
\]
so that the bundle \( \pi: M \to B \) is locally trivial.

By applying the loop functor, we obtain a local trivialization of the looped bundle:
\[
L\phi_{U_\alpha}: \pi^{-1}(LU_\alpha) \to LU_\alpha \times LC,
\]
which endows \( LM \) with the structure of a smooth fiber bundle over \( LB \) with typical fiber \( LC \).

Let \( \gamma(t) \), \( 0 \leq t \leq 1 \), be a smooth loop in \( M \), i.e., a point in \( LM \). A tangent vector at \( \gamma \) is a smooth vector field \( X(\gamma(t)) \) along the loop \( \gamma \). If \( P_H[X(\gamma(t))] = X(\gamma(t)) \) for all \( t \), we call it a \emph{horizontal} tangent vector at \( \gamma \); similarly, if \( P_C[X(\gamma(t))] = X(\gamma(t)) \), we call it \emph{vertical}.

This defines a natural splitting of the tangent bundle:
\begin{equation} \label{loopsplit}
TLM = LT^H M \oplus LTC,
\end{equation}
where \( LT^H M \) and \( LTC \) denote the subbundles of horizontal and vertical tangent vectors in \( TLM \), respectively. This splitting induces a natural connection on the looped bundle \( L\pi: LM \to LB \).

On the tangent bundle \( TLB \), we define a natural scalar product \( g^{TLB} \) by
\[
g^{TLB}(X(\gamma(t)), Y(\gamma(t))) := \int_0^1 g^{TB}(X(\gamma(t)), Y(\gamma(t))) \, dt,
\]
for \( X(\gamma(t)), Y(\gamma(t)) \in TLB \). This scalar product on \( TLB \) induces, in turn, a scalar product \( g^{LT^H M} \) on the horizontal loop bundle \( LT^H M \). Specifically, for \( X(\gamma(t)), Y(\gamma(t)) \in LT^H M \), define
\[
g^{LT^H M}(X(\gamma(t)), Y(\gamma(t))) := \int_0^1 g^{TB}(\pi_*X(\gamma(t)), \pi_*Y(\gamma(t))) \, dt.
\]

Similarly, using the Riemannian metric \( g^{TC} \) on the vertical bundle \( TC \), one can define a scalar product \( g^{LTC} \) on \( LTC \). These scalar products on \( LT^H M \) and \( LTC \) together define a scalar product \( g^{TLM} \) on \( TLM \), by declaring the two subbundles to be orthogonal:
\[
TLM = LT^H M \oplus LTC.
\]

The bundles with connections
\[
(TB, \nabla^B), \quad (T^H M, \nabla^H), \quad (TC, \nabla^C), \quad (TM, \nabla^M), \quad \text{and } (TM, \bar{\nabla}^M)
\]
naturally induce the looped bundles with their respective looped connections:
\[
(TLB, \nabla^{LB}), \quad (LT^H M, \nabla^{LH}), \quad (LTC, \nabla^{LC}), \quad (TLM, \nabla^{LM}), \quad \text{and } (TLM, \bar{\nabla}^{LM}).
\]

It is straightforward to verify that \( \bar{\nabla}^{LM} \) preserves the scalar product \( g^{TLM} \), and that the splitting in~\eqref{loopsplit} is parallel with respect to \( \bar{\nabla}^{LM} \).

$\, $

Let \( K \) denote the vector field on \( LM \) that generates the natural rotation action on loops (i.e., the infinitesimal generator of loop rotation). Then the projection \( P_C(K) \) defines a \emph{vertical vector field} along loops, taking values in the vertical bundle \( LTC \).

Let \( \omega \in \Omega^*(LM) \) be a differential form satisfying the equivariant closedness condition:
\[
(d + i_K)\omega = 0.
\]
Fix a loop \( \gamma \in LB \), and restrict attention to the fiber \( L\pi^{-1}(\gamma) \subset LM \). Denote by \( i \) the inclusion map of the fiber into \( LM \). Then we have:
\[
(d + i_{P_C(K)})\, i^* \omega = 0.
\]
This naturally leads one to consider a \emph{family localization formula} analogous to~\eqref{familylocal}.

However, two major difficulties arise:
\begin{itemize}
  \item[(a)] The zero set of the vertical vector field \( P_C(K) \),
  \[
  LM^{P_C(K)} := \{ \gamma \in LM \mid P_C(K)_\gamma = 0 \},
  \]
  may fail to form a smooth fiber bundle over \( LB \).
  
  \item[(b)] The fibers of \( LM \to LB \) are infinite-dimensional, which presents analytic and geometric challenges not encountered in the finite-dimensional setting.
\end{itemize}

$\, $

\emph{In the following, we assume that the fiber bundle \( \pi: M \to B \), equipped with the connection from~\eqref{split}, is flat}. This means that the curvature tensor
\[
\Omega(X, Y) = P_C([X^H, Y^H]), \quad \text{for } X, Y \in TM,
\]
vanishes identically.

For a loop \( \gamma \in LM \), and any point \( \gamma(t) \) in the loop, one may parallel transport back along the projected path \( \pi \circ \gamma(s) \), for \( 0 \leq s \leq t \), to obtain a point \( P\gamma(t) \) in the fiber \( C_{\pi \circ \gamma(0)} \). Since the bundle is flat, the entire loop \( \gamma(t),\, 0 \leq t \leq 1 \) corresponds to a loop \( P\gamma(t) \) lying entirely in the single fiber \( C_{\pi \circ \gamma(0)} \). Moreover, the vertical component of the rotation vector field, \( P_C(K_{\gamma(t)}) \), is parallel transported to the corresponding rotation vector \( K_{P\gamma(t)} \) along the image loop.

Now consider the fiber bundle \( \widetilde{\pi}: \widetilde{M} \to LB \), whose fiber over a loop \( \beta(t) \in LB \) is the fiber \( C_{\beta(0)} \). This bundle has typical fiber \( C \), and inherits geometric data such as connections and metrics from the loop bundle \( L\pi: LM \to LB \). In what follows, we will use \( (T\widetilde{C}, \nabla^{\widetilde{C}}) \) to denote the vertical tangent bundle and the vertical connection on \( \widetilde{M} \), in order to distinguish it from \( (TC, \nabla^C) \) defined in~\eqref{verticalconn}.

From the above discussion, one observes that the loop bundle \( L\pi: LM \to LB \) can be identified with the \emph{vertical loop space} of the bundle \( \widetilde{\pi}: \widetilde{M} \to LB \), i.e.,
\[
L^{\mathbf{v}}\widetilde{\pi}: L^{\mathbf{v}} \widetilde{M} \to LB.
\]
Let \( P: LM \to L^{\mathbf{v}} \widetilde{M} \) denote the natural bundle isomorphism.

For a loop \( \gamma \in LM^{P_C(K)} \), i.e., a loop where \( K \) is entirely horizontal, it is easy to see that the image \( P\gamma \) is simply the constant loop at \( \gamma(0) \in LC_{\pi \circ \gamma(0)} \). Hence, the inclusion \( LM^{P_C(K)} \hookrightarrow LM \) corresponds under the map \( P \) to the inclusion \( \widetilde{M} \hookrightarrow L^{\mathbf{v}} \widetilde{M} \).

Note that \( L^{\mathbf{v}} \widetilde{M} \) carries a natural fiberwise circle action, whose fixed-point set is precisely the bundle 
\( \widetilde{M} \subset L^{\mathbf{v}} \widetilde{M} \).

Any differential form \( \omega \in \Omega^*(LM) \) naturally induces a form \( \omega^0 \in \Omega^*(\widetilde{M}) \). Indeed, for any vertical vector \( X_0 \) at a point \( p \in \widetilde{M} \), one may parallel transport \( p \) to a loop \( \gamma(t) \in LM^{P_C(K)} \), and \( X_0 \) to a time-dependent vector field \( X_t \) along \( \gamma(t) \). We then define
\[
\omega^0(X_0) := \omega(X(t)).
\]

$\,$

We are now in a position to apply the family localization formula~\eqref{familylocal}. Let \( \omega \in \Omega^*(LM) \) be a differential form satisfying the equivariant closure condition:
\[
(d + i_{P_C(K)}) \omega = 0.
\]
Then its pullback via the identification map \( P: LM \to L^{\mathbf{v}} \widetilde{M} \) satisfies:
\[
(d + i_{K_{P\gamma}}) \left( (P^{-1})^* \omega \right) = 0.
\]

By formally applying the family localization formula~\eqref{familylocal} to the vertical loop bundle \( L^{\mathbf{v}} \widetilde{M} \to LB \), and invoking the ideas from Atiyah--Witten~\cite{A85} and Bismut~\cite{B85} on integration and localization in loop space, we arrive at the following \emph{formal} formula (with the last integral rigorously defined, since the fiber is finite-dimensional):
\begin{equation} \label{loopintegration}
\int_{LM/LB} \omega = \int_{L^{\mathbf{v}} \widetilde{M} / LB} (P^{-1})^* \omega = \int_{\widetilde{M}/LB} \omega^0 \cdot \widehat{A}(T\widetilde{C}, \nabla^{\widetilde{C}}).
\end{equation}

Here, the differential form \( \widehat{A}(T\widetilde{C}, \nabla^{\widetilde{C}}) \) is the \emph{Hirzebruch \( \widehat{A} \)-form}, given explicitly by:
\begin{equation}
\widehat{A}(T\widetilde{C}, \nabla^{\widetilde{C}}) := \det{}^{1/2} \left( \frac{ \frac{\sqrt{-1}}{4\pi} (\nabla^{\widetilde{C}})^2 }{ \sinh\left( \frac{\sqrt{-1}}{4\pi} (\nabla^{\widetilde{C}})^2 \right) } \right),
\end{equation}
which appears, for example, in Section~1.6.3 of~\cite{Z01}.

\subsection{The Case of Principal Circle Bundles} \label{circle bundle}

Now let
\[
\begin{CD}
\mathbb{T} @>>> M \\
@. @V{\pi}VV \\
@. B
\end{CD}
\]
be a principal circle bundle equipped with a connection and the necessary geometric structures as described earlier. Let \( v \) denote the vector field generating the \( \mathbb{T} \)-action on \( M \).

Looping this bundle as in Section~\ref{loopbundle}, we obtain a principal \( L\mathbb{T} \)-bundle:
\[
\begin{CD}
L\mathbb{T} @>>> LM \\
@. @V{L\pi}VV \\
@. LB
\end{CD}
\]
The vector field \( v \) induces a looped vector field \( v_L \) on \( LM \), defined pointwise by \( v_L|_{\gamma(t)} = v|_{\gamma(t)} \).

We also consider the principal bundle
\[
\begin{CD}
\mathbb{T} @>>> \widetilde{M} \\
@. @V{\widetilde{\pi}}VV \\
@. LB
\end{CD}
\]
on which \( v \) similarly induces a vector field \( \widetilde{v} \) on \( \widetilde{M} \).

Suppose the original bundle \( \pi: M \to B \) is flat. Take a differential form \( \omega \in \Omega^*(LM) \) satisfying the equivariant closedness condition \( (d + i_{P_C(K)})\omega = 0 \). Then by the formal localization formula~\eqref{loopintegration}, one has:
\[
\int_{LM/LB} \omega = \int_{\widetilde{M}/LB} \omega^0 \cdot \widehat{A}(T\widetilde{\mathbb{T}}, \nabla^{\widetilde{\mathbb{T}}}).
\]
In this case, it is straightforward to see that the vertical tangent bundle \( T\widetilde{\mathbb{T}} \) is trivial and its curvature vanishes, i.e., \( (\nabla^{\widetilde{\mathbb{T}}})^2 = 0 \). Therefore,
\[
\widehat{A}(T\widetilde{\mathbb{T}}, \nabla^{\widetilde{\mathbb{T}}}) = 1.
\]
As a result, the formal formula simplifies to:
\begin{equation} \label{loopcirclebundle}
\int_{LM/LB} \omega = \int_{\widetilde{M}/LB} \omega^0.
\end{equation}

Thus, for the loop bundle of a flat principal circle bundle, we obtain a formal definition of fiber integration via~\eqref{loopcirclebundle}.

{\em Of course, in T-duality, the principal circle bundles in question are typically not flat.} {\em Nonetheless, motivated by the formal identity~\eqref{loopcirclebundle}, we will give a rigorous definition of \( \int_{LM/LB} \omega \) for \( \omega \in \Omega^*(LM)^{L\mathbb{T}} \), i.e., for \( L\mathbb{T} \)-invariant forms on \( LM \), in the following.}

$\, $

Let \( \omega \in \Omega^*(LM)^{L\mathbb{T}} \). Since the space of \( L\mathbb{T} \)-invariant vector fields on \( LM \) forms a rank-one module over \( C^\infty(LM) \), the contraction \( i_{v_L} \omega \) defines a well-defined differential form on \( LB \).

\begin{definition} \label{intfiberloop}
Let \( \omega \in \Omega^*(LM)^{L\mathbb{T}} \). Define the \emph{integration along the fiber} as:
\[
\int_{LM/LB} \omega := i_{v_L} \omega.
\]
\end{definition}

\begin{remark}
\begin{enumerate}[(a)]
    \item The condition \( (d + i_{P_C(K)}) \omega = 0 \) is not required in this definition. We only assume that \( \omega \) is \( L\mathbb{T} \)-invariant.
    
    \item When the circle bundle is flat, this definition agrees with the formal identity~\eqref{loopcirclebundle}:
    \[
    \int_{LM/LB} \omega = i_{\widetilde{v}} \omega^0 = \int_{\widetilde{M}/LB} \omega^0.
    \]
    
    \item This definition can generalize naturally to principal \( G \)-bundles for any Lie group \( G \), using contraction with respect to the looped volume form.
\end{enumerate}
\end{remark}

We now record some basic properties of the fiber integration operation.

\begin{theorem}
For \( \omega \in \Omega^*(LM)^{L\mathbb{T}} \), one has:
\begin{align}
\int_{LM/LB} d\omega &= - d \int_{LM/LB} \omega, \label{prop-d} \\
\int_{LM/LB} i_K \omega &= - i_K \int_{LM/LB} \omega. \label{prop-i}
\end{align}
\end{theorem}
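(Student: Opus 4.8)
The plan is to reduce both identities to standard operator identities for the interior product, applied to the single contraction operator $i_{v_L}$ that defines $\int_{LM/LB}$ in Definition~\ref{intfiberloop}. The key geometric observation to record first is that $v_L$ is the fundamental vector field of the \emph{constant-loop} circle subgroup $\mathbb{T}\hookrightarrow L\mathbb{T}$, namely the element of $L\mathfrak{t}=C^\infty(S^1,\mathfrak{t})$ that is constant in the loop parameter $t$: its value at $\gamma$ is $t\mapsto v|_{\gamma(t)}$, which is exactly $v_L$. Consequently $L\mathbb{T}$-invariance of $\omega$ forces in particular $\mathcal{L}_{v_L}\omega=0$. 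I would also record the compatibility $L\pi_* K = K$ of the rotation field under projection, which will be needed for descent to $LB$.

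For \eqref{prop-d}, I would invoke Cartan's magic formula on the Fréchet manifold $LM$,
\[
\mathcal{L}_{v_L} = d\, i_{v_L} + i_{v_L}\, d .
\]
Applying this to $\omega$ and using $\mathcal{L}_{v_L}\omega=0$ gives $i_{v_L}(d\omega) = -\,d(i_{v_L}\omega)$, which is exactly \eqref{prop-d} once one notes that $d\omega$ is again $L\mathbb{T}$-invariant, so both sides are legitimately forms on $LB$ by the rank-one argument of the Remark, and that the exterior derivative on basic (invariant horizontal) forms agrees with the one on $LB$. The only care needed is that Cartan's identity holds in the Fréchet setting for $v_L$; this is routine since $v_L$ generates a genuine smooth circle action.

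For \eqref{prop-i}, I would use the graded anticommutativity of interior multiplication, $i_{v_L} i_K = -\,i_K i_{v_L}$, which is a purely pointwise algebraic identity valid for any two vector fields. This yields $i_{v_L}(i_K\omega) = -\,i_K(i_{v_L}\omega)$, i.e.\ \eqref{prop-i}.

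The main obstacle, and the step I would treat most carefully, is the descent to $LB$ in \eqref{prop-i}. Unlike $d\omega$, the contracted form $i_K\omega$ need not be $L\mathbb{T}$-invariant, since $K$ fails to commute with the full loop-group action (it commutes only with the constant-loop subgroup generated by $v_L$); hence the left-hand side is not covered verbatim by Definition~\ref{intfiberloop}. I would resolve this by reading the identity from right to left: $i_{v_L}\omega$ is $L\mathbb{T}$-invariant and basic, hence of the form $L\pi^*\bar\eta$, and since $L\pi_* K = K$ one has $i_K(L\pi^*\bar\eta) = L\pi^*(i_K\bar\eta)$, so $-\,i_K i_{v_L}\omega$ manifestly descends to $LB$. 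The anticommutativity identity then shows $i_{v_L} i_K\omega$ descends as well, legitimizing the left-hand side and completing the argument.
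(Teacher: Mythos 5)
Your proposal is correct and follows essentially the same route as the paper: Cartan's formula $\mathcal{L}_{v_L}=d\,i_{v_L}+i_{v_L}d$ together with $\mathcal{L}_{v_L}\omega=0$ for \eqref{prop-d}, and anticommutativity of interior products (the paper phrases this via $[v_L,K]=0$) for \eqref{prop-i}. Your additional discussion of why the left-hand side of \eqref{prop-i} descends to $LB$ even though $i_K\omega$ need not be $L\mathbb{T}$-invariant is a point the paper glosses over, but it is a refinement of, not a departure from, the same argument.
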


\begin{remark}
In~\eqref{prop-i}, we slightly abuse notation: on the left-hand side, \( K \) denotes the rotation vector field on \( LM \), while on the right-hand side, it refers to the induced rotation vector field on \( LB \).
\end{remark}

\begin{proof}
Since \( \omega \) is \( L\mathbb{T} \)-invariant, we compute:
\[
\int_{LM/LB} d\omega = i_{v_L} d\omega = (L_{v_L} - d\, i_{v_L}) \omega = - d(i_{v_L} \omega) = - d \int_{LM/LB} \omega.
\]

To prove~\eqref{prop-i}, we use \( [v_L, K] = 0 \), hence
\[
\int_{LM/LB} i_K \omega = i_{v_L} i_K \omega = (i_{[v_L, K]} - i_K i_{v_L}) \omega = - i_K i_{v_L} \omega = - i_K \int_{LM/LB} \omega.
\]
The desired results follow.
\end{proof}

\section{The Twisted Bismut-Chern Character of the Poincar\'e Gerbe Module on the Correspondence Space} \label{twistedBCh}

In this section, we begin by reviewing the theory of exotic twisted equivariant cohomology on loop spaces and the twisted Bismut--Chern character, as developed in~\cite{HM15}, in Section~\ref{exotic}. Subsequently, in Section~\ref{reviewT}, we revisit the framework of T-duality with \( H \)-flux, following the foundational works~\cite{BEM04a, BEM04b}.

With these preparations in place, we proceed in Section~\ref{Poincare} to apply the loop space theory to the \emph{Poincar\'e gerbes module} associated with the spaces appearing in T-duality. This leads to the construction of the \emph{horizontal twisted Bismut--Chern character}, which plays a central role in formulating the loop Hori formula in Section~\ref{LoopHori}.

\subsection{Review of Exotic Twisted Equivariant Cohomology of Loop Space and Twisted Bismut-Chern Character} \label{exotic}

\subsubsection{Exotic Twisted Equivariant Cohomology of Loop Space} \label{reviewExotic}

Let \( M \) be a smooth manifold, and let \( \{\sU_\alpha\} \) be a Brylinski open cover of \( M \). To distinguish this general setting from the special case of principal circle bundles to be discussed later in the context of T-duality, we use serif font type to denote open subsets of \( M \).

Define the \emph{transgression map}
\begin{equation} \label{eqn:trans}
\tau: \Omega^\bullet(\sU_{\alpha_I}) \longrightarrow \Omega^{\bullet-1}(L\sU_{\alpha_I}),
\end{equation}
by
\begin{equation}
\tau(\xi_I) = \int_{S^1} \mathrm{ev}^*(\xi_I), \qquad \xi_I \in \Omega^\bullet(\sU_{\alpha_I}),
\end{equation}
where
\[
\mathrm{ev} : S^1 \times LM \to M, \quad (t, x) \mapsto x(t),
\]
is the evaluation map.

Let \( \omega \in \Omega^i(M) \). For each \( s \in [0, 1] \), define a differential form \( \widetilde{\omega}_s \in \Omega^i(LM) \) by:
\[
\widetilde{\omega}_s(X_1, \ldots, X_i)(x) := \omega\left(X_1|_{x(s)}, \ldots, X_i|_{x(s)}\right),
\]
where \( x \in LM \) and \( X_1, \ldots, X_i \) are vector fields on \( LM \) defined in a neighborhood of \( x \). One can verify that this construction satisfies
\[
d \widetilde{\omega}_s = \widetilde{d\omega}_s.
\]
Such constructions on loop space were previously studied in~\cite{B85}.

Now consider the averaged \( i \)-form on the loop space:
\[
\overline{\omega} := \int_0^1 \widetilde{\omega}_s \, ds \in \Omega^i(LM).
\]
This averaged form is \( S^1 \)-invariant; that is,
\[
L_K(\overline{\omega}) = 0,
\]
where \( K \) denotes the vector field on \( LM \) generating loop rotation. Moreover, it satisfies
\[
d\overline{\omega} = \overline{d\omega}, \qquad \tau(\omega) = i_K \overline{\omega}.
\]
We refer to \( \overline{\omega} \) as the \emph{average} of \( \omega \).

Let \( (H, B_\alpha, F_{\alpha\beta}, (L_{\alpha\beta}, \nabla^{L_{\alpha\beta}})) \) be a gerbe with connection on \( M \), where:
\begin{itemize}
  \item \( (H, B_\alpha, F_{\alpha\beta}) \) denotes the Deligne class of a closed integral 3-form \( H \),
  \item \( (L_{\alpha\beta}, \nabla^{L_{\alpha\beta}}) \) denotes a line bundle with connection on double overlaps \( \sU_\alpha \cap \sU_\beta \) defining the gerbe.
\end{itemize}

On each triple intersection \( \sU_\alpha \cap \sU_\beta \cap \sU_\gamma \), there exists a trivialization:
\begin{equation} \label{triv}
(L_{\alpha\beta}, \nabla^{L_{\alpha\beta}}) \otimes (L_{\beta\gamma}, \nabla^{L_{\beta\gamma}}) \otimes (L_{\gamma\alpha}, \nabla^{L_{\gamma\alpha}}) \simeq (\mathbb{C}, d).
\end{equation}

For any loop \( x \in L\sU_\alpha \cap L\sU_\beta \), i.e., \( x: S^1 \to \sU_\alpha \cap \sU_\beta \), consider the parallel transport equation in the line bundle with connection \( (L_{\alpha\beta}, \nabla^{L_{\alpha\beta}}) \):
\begin{equation} \label{ptran}
\nabla_{\dot{x}(s)} \tau^0_s = 0, \quad \tau^0_0 = \mathrm{Id},
\end{equation}
where \( \tau^0_s \in \mathrm{Hom}(L_{\alpha\beta}|_{x(0)}, L_{\alpha\beta}|_{x(s)}) \), and \( \dot{x}(s) \) denotes the tangent vector to the loop at time \( s \).

The holonomy of this parallel transport defines a smooth function
\begin{equation} \label{transition}
g_{\alpha\beta} = \mathrm{Tr}(\tau^0_1) = \mathrm{hol}(\nabla^{L_{\alpha\beta}})
\end{equation}
on \( L\sU_\alpha \cap L\sU_\beta \).

From the trivialization~\eqref{triv}, one sees that the holonomies satisfy the cocycle condition:
\[
g_{\alpha\beta} \cdot g_{\beta\gamma} \cdot g_{\gamma\alpha} = 1.
\]

The holonomy line bundle of the gerbe is defined by:
\begin{equation} \label{holbundle}
\mathcal{L}^B := \left( \coprod_{\alpha \in I} \{ \alpha \} \times L\sU_\alpha \times \mathbb{C} \right) \bigg/ \sim,
\end{equation}
where the equivalence relation is given by
\[
(\beta, x, w) \sim (\alpha, x, g_{\alpha\beta}(x) \cdot w), \quad \forall\, \alpha, \beta \in I,\ x \in L(\sU_\alpha \cap \sU_\beta),\ w \in \mathbb{C}.
\]
We denote by \( \sigma_\alpha = (x, 1) \) the local section of \( \mathcal{L}^B \) over \( L\sU_\alpha \).

\begin{proposition}[\textbf{Brylinski}, Section~6.1 in~\cite{Bry}] \label{gauge}
The system of 1-forms \( \{ -i_K \overline{B_\alpha} \} \) satisfies the gauge transformation law:
\[
d \ln g_{\alpha\beta}^{-1} = -i_K \overline{B_\alpha} + i_K \overline{B_\beta}.
\]
\end{proposition}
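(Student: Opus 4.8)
The plan is to reduce the statement to the single transgression identity
\(
d \ln g_{\alpha\beta} = i_K \overline{F_{\alpha\beta}},
\)
where \(F_{\alpha\beta}\) is the curvature of \(\nabla^{L_{\alpha\beta}}\), and then substitute the defining gerbe relation \(F_{\alpha\beta} = B_\alpha - B_\beta\). Morally this identity says that, on loop space, the variation of holonomy equals the transgression of the curvature. To make it concrete I would first exploit the Brylinski cover hypothesis: since \(H^2(\sU_\alpha \cap \sU_\beta) = 0\), the line bundle \(L_{\alpha\beta}\) is trivializable over the overlap, so I may write \(\nabla^{L_{\alpha\beta}} = d + A_{\alpha\beta}\) for a globally defined connection \(1\)-form \(A_{\alpha\beta}\) on \(\sU_\alpha \cap \sU_\beta\) with \(dA_{\alpha\beta} = F_{\alpha\beta}\). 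This is the step where the \(H^2\)-vanishing condition of the cover is genuinely used.

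Next I would compute the transition function explicitly in this trivialization. Solving the parallel transport equation~\eqref{ptran}, \(\nabla_{\dot x(s)}\tau^0_s = 0\), gives \(\tau^0_s = \exp\!\left(-\int_0^s A_{\alpha\beta}(\dot x(u))\,du\right)\), so that \eqref{transition} reads \(g_{\alpha\beta}(x) = \exp\!\left(-\oint_x A_{\alpha\beta}\right)\). Because the bundle has been trivialized, the quantity \(-\oint_x A_{\alpha\beta} = -\tau(A_{\alpha\beta})(x)\) is a single-valued function on \(L(\sU_\alpha \cap \sU_\beta)\) furnishing an honest branch of \(\ln g_{\alpha\beta}\); hence \(\ln g_{\alpha\beta} = -\tau(A_{\alpha\beta}) = -i_K \overline{A_{\alpha\beta}}\), using the identity \(\tau(\,\cdot\,) = i_K \overline{(\,\cdot\,)}\) recorded above. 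I would then differentiate: since \(\overline{A_{\alpha\beta}}\) is an average it is \(S^1\)-invariant, \(L_K \overline{A_{\alpha\beta}} = 0\), so Cartan's formula together with \(d\overline{A_{\alpha\beta}} = \overline{dA_{\alpha\beta}} = \overline{F_{\alpha\beta}}\) yields
\[
d \ln g_{\alpha\beta} = -\,d\, i_K \overline{A_{\alpha\beta}} = i_K\, d\overline{A_{\alpha\beta}} = i_K \overline{F_{\alpha\beta}}.
\]
Substituting \(F_{\alpha\beta} = B_\alpha - B_\beta\) and using linearity of averaging and contraction gives \(d \ln g_{\alpha\beta} = i_K \overline{B_\alpha} - i_K \overline{B_\beta}\), and passing to the inverse (so that \(d \ln g_{\alpha\beta}^{-1} = -d \ln g_{\alpha\beta}\)) produces the asserted law \(d \ln g_{\alpha\beta}^{-1} = -i_K \overline{B_\alpha} + i_K \overline{B_\beta}\).

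The one step that requires genuine care — and which I expect to be the main obstacle — is the passage from the \(U(1)\)-valued holonomy to its logarithmic differential: a priori \(\ln g_{\alpha\beta}\) is multivalued, so one must argue, as above, that trivializing over the Brylinski overlap produces a single-valued branch whose differential is the globally well-defined \(1\)-form \(dg_{\alpha\beta}/g_{\alpha\beta}\); alternatively, one may establish \(d \ln g_{\alpha\beta} = i_K \overline{F_{\alpha\beta}}\) directly from the variation-of-holonomy formula, taking care that the boundary contributions cancel precisely because \(x\) is a closed loop. Once this point is settled, the remaining sign bookkeeping — the relation \(d\tau = -\tau d\) arising from fiber integration over the closed \(S^1\), and the chosen gerbe convention \(F_{\alpha\beta} = B_\alpha - B_\beta\) — is routine, and the result follows.
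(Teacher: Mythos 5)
The first thing to note is that the paper contains no proof of this proposition at all: it is imported, with attribution, from Section~6.1 of \cite{Bry}, so there is no in-paper argument to compare yours against. Your proof is correct, and it is essentially the standard (Brylinski-style) argument. Each step checks out against the paper's conventions: the Brylinski-cover hypothesis gives \( H^2(\sU_\alpha \cap \sU_\beta) = 0 \), so \( L_{\alpha\beta} \) is trivializable and \( \nabla^{L_{\alpha\beta}} = d + A_{\alpha\beta} \) with \( dA_{\alpha\beta} = F_{\alpha\beta} \); solving the parallel transport equation~\eqref{ptran} in this trivialization gives \( g_{\alpha\beta}(x) = \exp\bigl(-\oint_x A_{\alpha\beta}\bigr) \), so \( -\tau(A_{\alpha\beta}) = -i_K \overline{A_{\alpha\beta}} \) is a global single-valued branch of \( \ln g_{\alpha\beta} \) whose differential is the branch-independent form \( dg_{\alpha\beta}/g_{\alpha\beta} \); and then \( L_K \overline{A_{\alpha\beta}} = 0 \), Cartan's formula, \( d\overline{\omega} = \overline{d\omega} \), and the gerbe relation \( F_{\alpha\beta} = B_\alpha - B_\beta \) from~\eqref{Bfield} yield \( d\ln g_{\alpha\beta} = i_K \overline{F_{\alpha\beta}} = i_K \overline{B_\alpha} - i_K \overline{B_\beta} \), which is the stated law for \( g_{\alpha\beta}^{-1} \). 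The one point on which the signs of the statement genuinely depend is the holonomy convention: with \( \nabla = d + A \), parallel transport solves \( h' + A(\dot{x})h = 0 \), so the holonomy is \( \exp\bigl(-\oint A\bigr) \) rather than \( \exp\bigl(+\oint A\bigr) \); you use the correct convention, and with the opposite one the proposition would come out with the opposite sign. Your concern about multivaluedness of the logarithm is also handled correctly: the trivialization furnishes an honest branch, and \( d\ln g_{\alpha\beta} \) means \( dg_{\alpha\beta}/g_{\alpha\beta} \), which is well defined in any case.
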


The previous proposition enables us to define a natural connection on the holonomy line bundle \( \mathcal{L}^B \). Specifically, let \( \nabla^{\mathcal{L}^B} \) be the connection on \( \mathcal{L}^B \) such that, over \( L\sU_\alpha \), under the trivialization given by the local section \( \sigma_\alpha \), the connection one-form is \( -i_K \overline{B_\alpha} \).

Consider the space \( \Omega^\bullet(LM, \mathcal{L}^B) \) of differential forms on the loop space \( LM \) with values in the holonomy line bundle \( \mathcal{L}^B \to LM \). Define the following superconnection:
\[
D_{\overline{H}} := \nabla^{\mathcal{L}^B} - i_K + \overline{H}.
\]
In~\cite{HM15}, it was shown that this superconnection is \( S^1 \)-equivariantly flat.

\begin{theorem}[\cite{HM15}] \label{flat}
The superconnection \( D_{\overline{H}} \) satisfies
\[
(D_{\overline{H}})^2 = 0
\]
on \( \Omega^\bullet(LM, \mathcal{L}^B)^{S^1} \).
\end{theorem}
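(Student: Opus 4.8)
The plan is to expand $(D_{\overline H})^2$ into graded (anti)commutators of its three odd summands $\nabla^{\mathcal{L}^B}$, $-i_K$ and $\overline H$, and then to check that everything cancels once we restrict to $S^1$-invariant forms. Since all three operators are odd, the square is the sum of the three individual squares and the three anticommutators,
\[
(D_{\overline H})^2 = (\nabla^{\mathcal{L}^B})^2 + i_K^2 + \overline{H}^2 - \{\nabla^{\mathcal{L}^B}, i_K\} + \{\nabla^{\mathcal{L}^B}, \overline{H}\} - \{i_K, \overline{H}\}.
\]
Two terms vanish immediately: $i_K^2 = 0$ because $i_K$ is a contraction, and $\overline{H}\wedge\overline{H} = 0$ because $\overline H$ is a $3$-form and hence anticommutes with itself. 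Next I would dispose of the two mixed terms involving $\overline H$. Because $\overline H$ is scalar-valued, a Leibniz computation shows $\{\nabla^{\mathcal{L}^B}, \overline{H}\}$ is multiplication by $d\overline H$; using $d\overline H = \overline{dH}$ from Section~\ref{reviewExotic} together with the closedness of $H$, this term is $0$. The same bookkeeping shows $\{i_K, \overline{H}\}$ is multiplication by the $2$-form $i_K\overline H$, so it contributes $-\,i_K\overline H$ to the total.

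At this stage $(D_{\overline H})^2 = (\nabla^{\mathcal{L}^B})^2 - \{\nabla^{\mathcal{L}^B}, i_K\} - i_K\overline H$, and the crux is to evaluate the curvature $(\nabla^{\mathcal{L}^B})^2$ and the covariant Lie derivative $L_K^{\nabla} := \{\nabla^{\mathcal{L}^B}, i_K\}$ on invariant forms. For the curvature I would use Brylinski's description in Proposition~\ref{gauge}: over $L\sU_\alpha$ the connection one-form of $\nabla^{\mathcal{L}^B}$ is $-i_K\overline{B_\alpha}$, so the curvature is $-d\,i_K\overline{B_\alpha}$. Applying Cartan's formula to the $S^1$-invariant form $\overline{B_\alpha}$ gives $d\,i_K\overline{B_\alpha} = L_K\overline{B_\alpha} - i_K\,d\overline{B_\alpha} = -\,i_K\,\overline{dB_\alpha}$, and since the gerbe data yields $dB_\alpha = H$ on $\sU_\alpha$, hence $\overline{dB_\alpha} = \overline H$, I obtain the key identity $(\nabla^{\mathcal{L}^B})^2 = i_K\overline H$.

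It remains to handle $L_K^{\nabla}$. Working in the local trivialization $\sigma_\alpha$, one has $L_K^{\nabla} = L_K + i_K(-i_K\overline{B_\alpha}) = L_K$, since the connection one-form is itself a contraction by $K$ and $i_K^2 = 0$; thus $L_K^{\nabla}$ coincides with the ordinary Lie derivative and vanishes on $\Omega^\bullet(LM, \mathcal{L}^B)^{S^1}$. Combining, $(D_{\overline H})^2 = i_K\overline H - 0 - i_K\overline H = 0$ on invariant forms, as claimed. I expect the main obstacle to be the curvature computation together with the sign bookkeeping in the graded expansion: one must confirm that $\mathcal{L}^B$ carries a genuine $S^1$-equivariant structure, which follows because the holonomy transition functions $g_{\alpha\beta}$ are invariant under loop rotation, so that $L_K^{\nabla}$ really reduces to $L_K$ and kills invariant forms, and one must check that the local identity $(\nabla^{\mathcal{L}^B})^2 = i_K\overline H$ is independent of the chart, which follows from the cocycle relation in Proposition~\ref{gauge}.
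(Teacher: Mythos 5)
Your proposal is correct and follows essentially the same route as the proof in \cite{HM15} (the present paper only quotes the theorem without reproving it): the underlying mechanism is the operator identity $(D_{\overline{H}})^2 + L_K = 0$, which the paper makes explicit in its horizontal analogue, and which you derive exactly as intended from the graded expansion, the local curvature computation $(\nabla^{\mathcal{L}^B})^2 = i_K\overline{H}$ via Proposition~\ref{gauge} and the Cartan formula, the relation $d\overline{B_\alpha}=\overline{H}$, and $dH=0$. Your closing remarks on the $S^1$-invariance of the holonomy transition functions and on chart-independence address precisely the points that make the local computation globally valid.
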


This \( S^1 \)-equivariant flatness allows one to define a cohomology theory on the loop space. The \textbf{completed periodic exotic twisted \( S^1 \)-equivariant cohomology},
\[
h^\bullet_{S^1}(LM, \mathcal{L}^B; \overline{H}),
\]
introduced in~\cite{HM15}, is defined as the cohomology of the complex
\[
\left( \Omega^\bullet(LM, \mathcal{L}^B)^{S^1}[[u, u^{-1}]],\ \nabla^{\mathcal{L}^B} - u i_K + u^{-1} \overline{H} \right),
\]
where \( \deg(u) = 2 \).

In~\cite{HM15}, the following localization theorem was established. It localizes the completed periodic exotic twisted \( S^1 \)-equivariant cohomology to the twisted de Rham cohomology of the fixed point submanifold of \( LM \), which is canonically identified with \( M \).

\begin{theorem}[\cite{HM15}] \label{local}
There is a natural isomorphism:
\[
h^\bullet_{S^1}(LM, \nabla^{\mathcal{L}^B}; \overline{H}) \cong H^\bullet\left( \Omega(M)[[u, u^{-1}]], d + u^{-1} H \right) \cong H^\bullet(M, H)[[u, u^{-1}]].
\]
\end{theorem}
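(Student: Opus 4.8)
The plan is to prove that restriction to the constant loops $M \hookrightarrow LM$, which is precisely the $S^1$-fixed-point set, induces the stated isomorphism, in the spirit of Borel--Witten localization adapted to the infinite-dimensional loop space. First I would construct the restriction chain map. On constant loops the holonomy transition functions $g_{\alpha\beta}$ of~\eqref{transition} are identically $1$, since parallel transport around a constant loop is the identity; hence the local sections $\sigma_\alpha$ glue to a canonical trivialization of $\mathcal{L}^B|_M$. Because the rotation field $K$ vanishes on constant loops, the connection one-form $-i_K\overline{B_\alpha}$ of $\nabla^{\mathcal{L}^B}$ and the term $-u\,i_K$ both restrict to $0$, so $\nabla^{\mathcal{L}^B}|_M = d$; and by the defining property of the average, $\overline{H}|_M = H$. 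Consequently
\[
\mathrm{res}\colon \bigl(\Omega^\bullet(LM,\mathcal{L}^B)^{S^1}[[u,u^{-1}]],\ \nabla^{\mathcal{L}^B}-u\,i_K+u^{-1}\overline{H}\bigr)\longrightarrow\bigl(\Omega(M)[[u,u^{-1}]],\ d+u^{-1}H\bigr)
\]
is a well-defined chain map, and it remains to see that it is a quasi-isomorphism.

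For this I would localize away from $M$. Choose the $S^1$-invariant $L^2$ metric on $LM$ and let $\alpha=\langle K,\,\cdot\,\rangle$ be the dual one-form, so that $f:=i_K\alpha=|K|^2$ is strictly positive on $U:=LM\setminus M$, where $K\neq 0$. The $S^1$-invariant one-form $\beta:=\alpha/f$ then satisfies $i_K\beta=1$ on $U$, and wedging $Q:=\beta\wedge(\,\cdot\,)$ obeys, on $S^1$-invariant forms,
\[
\{D_u,Q\}=d\beta\wedge(\,\cdot\,)-u,\qquad D_u:=\nabla^{\mathcal{L}^B}-u\,i_K+u^{-1}\overline{H},
\]
using $\{\nabla^{\mathcal{L}^B},\beta\wedge\}=d\beta\wedge$, $\{i_K,\beta\wedge\}=i_K\beta=1$, and $\{\overline{H}\wedge,\beta\wedge\}=0$ (both forms being odd). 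Since $D_u^2=0$ on invariant forms (the $u$-rescaled form of Theorem~\ref{flat}), the even operator $A:=\{D_u,Q\}=d\beta\wedge-u$ commutes with $D_u$; hence, if $A$ is invertible, $h:=A^{-1}Q$ is a contracting homotopy, $\{D_u,h\}=\mathrm{id}$, for the complex over $U$. This makes the complement of $M$ acyclic, concentrating the cohomology in an arbitrarily small invariant tubular neighborhood of $M$.

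The heart of the matter is the invertibility of $A=-u\,(1-u^{-1}d\beta\wedge)$, equivalently of $1-u^{-1}d\beta\wedge$, on the completed periodic complex. On a finite-dimensional manifold $d\beta\wedge$ is nilpotent and the inverse is the terminating Neumann series $\sum_{k\geq 0}(u^{-1}d\beta\wedge)^k$; on the loop space $LM$ this series does not terminate, and its convergence is exactly what the completion $[[u,u^{-1}]]$ is engineered to supply: with $\deg u=2$ the operator $u^{-1}d\beta\wedge$ preserves total degree, so the series is summable in the completed complex. Controlling this regularized inverse, together with the parallel Fourier-mode analysis along the normal bundle to $M$ (on whose nonconstant modes $K$ acts with the nonzero weights $2\pi n$, $n\neq 0$, so that the same homotopy kills every transverse direction), is the principal obstacle and the precise point where the loop-space regularization of Atiyah--Witten~\cite{A85} and Bismut~\cite{B85}, organized by the $\widehat{A}$-form / zeta-regularized determinant over the nonzero modes, is essential. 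Granting it, the transverse retraction leaves exactly the de Rham complex of $M$ twisted by $u^{-1}H$, identifying the cohomology with the image of $\mathrm{res}$ and proving it a quasi-isomorphism.

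Finally, the second isomorphism $H^\bullet(\Omega(M)[[u,u^{-1}]],\,d+u^{-1}H)\cong H^\bullet(M,H)[[u,u^{-1}]]$ is the standard repackaging of the completed periodic complex as the $\mathbb{Z}_2$-graded twisted de Rham complex $(\Omega(M),\,d+H)$: within each total degree one collects the component forms and absorbs the corresponding integer powers of the invertible parameter $u$ (according to form degree), converting $d+u^{-1}H$ into $d+H$, so that the resulting cohomology is the twisted de Rham cohomology $H^\bullet(M,H)$ with coefficients extended over $[[u,u^{-1}]]$.
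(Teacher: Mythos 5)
The paper does not actually prove Theorem~\ref{local} here: it is imported verbatim from \cite{HM15}, so the comparison is with the argument given there, which is a twisted, line-bundle-coefficient adaptation of the Jones--Petrack fixed-point theorem \cite{JP}. Your skeleton matches that argument: restriction to constant loops is a chain map (your observations that $g_{\alpha\beta}=1$, $\nabla^{\mathcal{L}^B}|_M=d$, $\overline{H}|_M=H$ are exactly right), and acyclicity off the fixed-point set is obtained from the $S^1$-invariant one-form $\beta=\alpha/|K|^2$ with $i_K\beta=1$, the algebra $\{D_u,\beta\wedge\}=d\beta\wedge-u$ being correct, with invertibility of $1-u^{-1}d\beta\wedge$ supplied by the completion in $u$. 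That is precisely the mechanism in \cite{HM15}.

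Two points need repair. First, your appeal to the $\widehat{A}$-form and zeta-regularized determinants over the nonzero Fourier modes is misplaced: that regularization belongs to the \emph{integration} localization formula of Atiyah--Witten and Bismut (the subject of Section~\ref{integration} of this paper), not to the cohomological restriction isomorphism. The latter requires no analytic regularization at all --- only the formal summability of the Neumann series $\sum_{k\ge 0}(u^{-1}d\beta\wedge)^k$ in the completed periodic complex, which you had already correctly identified; ``granting'' a further analytic input here concedes a difficulty that does not exist. Second, the passage from ``the complex over $U=LM\setminus M$ is acyclic'' to ``$\mathrm{res}$ is a quasi-isomorphism'' is not automatic, since $M$ is closed rather than open in $LM$: one needs either the relative-cohomology long exact sequence together with a Mayer--Vietoris/tubular-neighborhood argument, or (as in \cite{JP}) a direct homotopy showing every $D_u$-cocycle is cohomologous to one determined by its restriction to $M$, and that a cocycle restricting to zero is a coboundary. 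Your ``transverse retraction'' sentence gestures at this but does not supply it. The final identification $H^\bullet(\Omega(M)[[u,u^{-1}]],d+u^{-1}H)\cong H^\bullet(M,H)[[u,u^{-1}]]$ by rescaling with powers of $u$ is standard and fine.
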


$\, $

\subsubsection{Twisted Bismut-Chern Character of Gerbe Modules}

In~\cite{BM00}, it was proposed that D-brane charges in the presence of a background \( H \)-flux take values in the twisted \( K \)-theory of spacetime \( M \), denoted \( K^\bullet(M, H) \). The Chern--Weil representatives of the twisted Chern character
\[
Ch_H: K^\bullet(M, H) \longrightarrow H^\bullet(M, H)
\]
were constructed and studied in~\cite{BCMMS}.

In~\cite{HM15}, it was further shown that the twisted Chern character \( Ch_H \) admits a refinement to a \emph{twisted Bismut--Chern character}
\[
BCh_H: K^\bullet(M, H) \longrightarrow h^\bullet_{S^1}(LM, \mathcal{L}^B; \overline{H}),
\]
which satisfies the following commutative diagram:
\begin{equation*}
\xymatrix@=3pc{
K^\bullet(M, H) \ar[dr]_{Ch_H} \ar[rr]^{BCh_H} && h^\bullet_{S^1}(LM, \mathcal{L}^B; \overline{H}) \ar[dl]^{\mathrm{res}} \\
& H^\bullet(\Omega(M)[[u, u^{-1}]], d + u^{-1} H) &
}
\end{equation*}

\smallskip

When \( H = 0 \), this construction reduces to the original Bismut--Chern character introduced in~\cite{B85}.

We begin by recalling the geometric representatives of twisted \( K \)-theory and the Chern--Weil representative of the twisted Chern character.

Let \( E = \{E_\alpha\} \) be a collection of (infinite-dimensional) Hilbert bundles \( E_\alpha \to \sU_\alpha \), each with structure group reduced to \( U_{\mathfrak{I}} \), the group of unitary operators on a fixed model Hilbert space \( \mathcal{H} \) of the form identity plus a trace-class operator. Here, \( \mathfrak{I} \) denotes the Lie algebra of trace-class operators on \( \mathcal{H} \).

Assume further that on double overlaps \( \sU_{\alpha\beta} \), there exist bundle isomorphisms
\begin{equation} \label{gerbeconn}
\phi_{\alpha\beta}: L_{\alpha\beta} \otimes E_\beta \xrightarrow{\cong} E_\alpha,
\end{equation}
which are compatible on triple overlaps due to the gerbe cocycle condition. In this case, the collection \( \{E_\alpha\} \) is called a \emph{gerbe module} for the gerbe \( \{L_{\alpha\beta}\} \).

A \emph{gerbe module connection} \( \nabla^E \) is a collection of connections \( \{\nabla^E_\alpha\} \), where each
\[
\nabla^E_\alpha = d + A^E_\alpha, \quad A^E_\alpha \in \Omega^1(\sU_\alpha) \otimes \mathfrak{I},
\]
and the curvatures \( F^E_\alpha \) satisfy the compatibility condition on overlaps:
\[
\phi_{\alpha\beta}^{-1} (F^E_\alpha) \phi_{\alpha\beta} = F^L_{\alpha\beta} \cdot I + F^E_\beta.
\]
Using the relation~\eqref{gerbeconn}, this condition can be rewritten as
\[
\phi_{\alpha\beta}^{-1} (B_\alpha \cdot I + F^E_\alpha) \phi_{\alpha\beta} = B_\beta \cdot I + F^E_\beta.
\]
It follows that the differential form
\[
\exp(-B) \cdot \mathrm{Tr}\left( \exp(-F^E) - I \right)
\]
is globally defined on \( M \) and has even total degree. Note that \( \mathrm{Tr}(I) = \infty \), so the subtraction of the identity is necessary to ensure convergence.


Let $E=\{E_{\alpha}\}$ and $E'=\{E'_{\alpha}\}$ be
be a {gerbe modules} for the gerbe $\{L_{\alpha\beta}\}$. Then an element of twisted K-theory $K^0(M, H)$
is represented by the pair $(E, E')$, see \cite{BCMMS}. Two such pairs $(E, E')$ and $(G, G')$ are equivalent
if $E\oplus G' \oplus K \cong E' \oplus G \oplus K$ as gerbe modules for some gerbe module $K$ for the gerbe $\{L_{\alpha\beta}\}$.
We can assume without loss of generality that these gerbe modules $E, E'$ are modeled on the same Hilbert space
$\gH$, after a choice of isomorphism if necessary.

Suppose that $\nabla^E, \nabla^{E'}$ are gerbe module connections on the gerbe modules $E, E'$ respectively. Then we can define the {\bf twisted Chern character} as
\begin{align*}
Ch_H &: K^0(M, H) \to H^{even}(M, H)\\
Ch_H(E, E')&= \exp(-B)\Tr\left(\exp(-F^E) - \exp(-F^{E'})\right)
\end{align*}
That this is a well defined homomorphism is explained in \cite{BCMMS,MS}. To define the twisted Chern character landing in $\left(\Omega^\bu(M)[[u, u^{-1}]]\right)_{(d+u^{-1}H)-cl}$, simply replace the above formula by
$$Ch_H(E, E')= \exp(-u^{-1}B)\Tr\left(\exp(-u^{-1}F^E) - \exp(-u^{-1}F^{E'})\right).$$

Let \( \xi \) be a bundle with connection. Denote by \( {}^\xi\tau^0_s \) the parallel transport along a loop \( \gamma \) from \( \gamma_0 \) to \( \gamma_s \), and let \( {}^\xi\tau^s_0 := ({}^\xi\tau^0_s)^{-1} \).

We define a local twisted Bismut--Chern character form
\[
BCh_{H, \alpha}(\nabla^E, \nabla^{E'}) \in \Omega^\bullet(L\sU_\alpha, \mathcal{L}^B)^{S^1}[[u, u^{-1}]]
\]
by the expression:
{\footnotesize
\begin{equation} \label{localBCh}
\begin{aligned}
&BCh_{H, \alpha}(\nabla^E, \nabla^{E'}) \\
 &= \left(1 + \sum_{n=1}^\infty (-u)^{-n} \int_{0 \leq s_1 \leq \cdots \leq s_n \leq 1} \widetilde{B_\alpha}_{s_1} \cdots \widetilde{B_\alpha}_{s_n} \right) \\
& \cdot \mathrm{Tr}\Bigg[ \Bigg(I + \sum_{n=1}^\infty (-u)^{-n} \int_{0 \leq s_1 \leq \cdots \leq s_n \leq 1} 
{}^{E_\alpha}\tau^0_{s_1}(\widetilde{F^E_\alpha}_{s_1}) \circ \cdots \circ {}^{E_\alpha}\tau^0_{s_n}(\widetilde{F^E_\alpha}_{s_n}) \Bigg) \circ {}^{E_\alpha}\tau^1_0 \\
&\quad - \Bigg(I + \sum_{n=1}^\infty (-u)^{-n} \int_{0 \leq s_1 \leq \cdots \leq s_n \leq 1} 
{}^{E'_\alpha}\tau^0_{s_1}(\widetilde{F^{E'}_\alpha}_{s_1}) \circ \cdots \circ {}^{E'_\alpha}\tau^0_{s_n}(\widetilde{F^{E'}_\alpha}_{s_n}) \Bigg) \circ {}^{E'_\alpha}\tau^1_0 \Bigg] \otimes \sigma_\alpha \\
=\
& e^{-u \overline{B_\alpha}} \cdot \mathrm{Tr}\Bigg[ \Bigg(I + \sum_{n=1}^\infty (-u)^{-n} \int_{0 \leq s_1 \leq \cdots \leq s_n \leq 1} 
{}^{E_\alpha}\tau^0_{s_1}(\widetilde{F^E_\alpha}_{s_1}) \circ \cdots \circ {}^{E_\alpha}\tau^0_{s_n}(\widetilde{F^E_\alpha}_{s_n}) \Bigg) \circ {}^{E_\alpha}\tau^1_0 \\
&\quad - \Bigg(I + \sum_{n=1}^\infty (-u)^{-n} \int_{0 \leq s_1 \leq \cdots \leq s_n \leq 1} 
{}^{E'_\alpha}\tau^0_{s_1}(\widetilde{F^{E'}_\alpha}_{s_1}) \circ \cdots \circ {}^{E'_\alpha}\tau^0_{s_n}(\widetilde{F^{E'}_\alpha}_{s_n}) \Bigg) \circ {}^{E'_\alpha}\tau^1_0 \Bigg] \otimes \sigma_\alpha
\end{aligned}
\end{equation}
}

\noindent
In the above expression, the volume form \( ds_1 \cdots ds_n \) in the integrals is omitted for brevity. The subtraction of terms involving \( E \) and \( E' \) is crucial to ensure that the trace is well-defined.

The local differential forms \( \{ BCh_{H, \alpha} \} \) patch together to form a global differential form
\[
BCh_H(\nabla^E, \nabla^{E'}) \in \Omega^\bullet(LM, \mathcal{L}^B)^{S^1}[[u, u^{-1}]],
\]
called the \textbf{twisted Bismut--Chern character form}.

\begin{theorem} \label{BCh prop}
The following properties hold:
\begin{enumerate}[(i)]
\item \( (\nabla^{\mathcal{L}^B} - u i_K + u^{-1} \overline{H}) BCh_H(\nabla^E, \nabla^{E'}) = 0 \);
\item The exotic twisted \( S^1 \)-equivariant cohomology class \( [BCh_H(\nabla^E, \nabla^{E'})] \) is independent of the choice of connections \( \nabla^E \), \( \nabla^{E'} \).
\end{enumerate}
\end{theorem}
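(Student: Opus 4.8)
The plan is to establish both statements by reducing to a computation on each chart $L\sU_\alpha$ and then invoking the Chen iterated–integral calculus underlying Bismut's original argument, corrected by the $B$-field. Since the local forms $BCh_{H,\alpha}(\nabla^E,\nabla^{E'})$ patch to the global form $BCh_H(\nabla^E,\nabla^{E'})$ and the operator $D=\nabla^{\mathcal{L}^B}-u\,i_K+u^{-1}\overline H$ is globally defined, it suffices for (i) to prove $D\bigl(BCh_{H,\alpha}\bigr)=0$ on each $L\sU_\alpha$. I would write $BCh_{H,\alpha}=P_\alpha\cdot T_\alpha\otimes\sigma_\alpha$, where $P_\alpha$ is the scalar $B$-field prefactor (the iterated integral of the $\widetilde{B_\alpha}_{s}$) and $T_\alpha$ is the $\Tr$ of the holonomy-with-curvature-insertions built from the gerbe-module curvatures $F^E_\alpha$ and $F^{E'}_\alpha$.

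The key observation is that $P_\alpha$ is engineered precisely to absorb the twist and the connection. Reading off from the simplex integral, and using $d\overline{B_\alpha}=\overline{dB_\alpha}=\overline H$ and $i_K\overline{B_\alpha}=\tau(B_\alpha)$ on $L\sU_\alpha$, one has $dP_\alpha=-u^{-1}\overline H\wedge P_\alpha$ and $i_KP_\alpha=-u^{-1}(i_K\overline{B_\alpha})\,P_\alpha$. Since the connection $\nabla^{\mathcal{L}^B}$ has local connection one-form $-i_K\overline{B_\alpha}$ relative to $\sigma_\alpha$, expanding $D(P_\alpha T_\alpha\otimes\sigma_\alpha)$ and inserting these two relations makes the $u^{-1}\overline H$-term cancel against $dP_\alpha$ and the connection term cancel against the $P_\alpha$-derivative produced by $-u\,i_K$. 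A short sign check then collapses the whole expression to $P_\alpha\cdot(d-u\,i_K)T_\alpha\otimes\sigma_\alpha$, so (i) reduces to the untwisted equivariant closedness $(d-u\,i_K)T_\alpha=0$.

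This last identity is the heart of the matter and the step I expect to be the main obstacle. It is the twisted analogue of Bismut's theorem \cite{B85}: one differentiates the iterated integral defining $T_\alpha$, uses the structure equation for the parallel transport operators ${}^{E_\alpha}\tau$ together with the Bianchi identity for $F^E_\alpha$ to rewrite the $d$-terms, and identifies $i_K$ (contraction with the loop-rotation field) with differentiation in the loop parameter, which produces boundary contributions on the simplex $0\le s_1\le\cdots\le s_n\le1$. The delicate point is that the interior boundary terms telescope while the endpoint terms at $s=0,1$ cancel by cyclicity of $\Tr$; the subtraction of the $E'$-block (and of $I$) guarantees trace-class convergence throughout and does not affect the cancellations, each block being treated identically. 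Proposition~\ref{gauge} ensures the transition data are consistent across charts, so the local vanishing assembles to $D\,BCh_H=0$ globally.

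For (ii) I would run the standard transgression argument. Given two gerbe-module connections $\nabla^E_0,\nabla^E_1$, join them by the affine family $\nabla^E_t=(1-t)\nabla^E_0+t\nabla^E_1$; the compatibility relation $\phi_{\alpha\beta}^{-1}(B_\alpha I+F^E_\alpha)\phi_{\alpha\beta}=B_\beta I+F^E_\beta$ shows the difference $\nabla^E_1-\nabla^E_0$ is a globally consistent $\mathfrak{I}$-valued one-form, so every $\nabla^E_t$ is again a gerbe-module connection. Varying the iterated-integral expression in $t$ produces, by the same manipulations as in (i), an explicit Chern--Simons-type primitive $CS_\alpha$ with $\tfrac{d}{dt}BCh_{H,\alpha}(\nabla^E_t,\nabla^{E'})=D\,CS_\alpha$; integrating over $t\in[0,1]$ gives $BCh_H(\nabla^E_1,\nabla^{E'})-BCh_H(\nabla^E_0,\nabla^{E'})=D\!\int_0^1 CS\,dt$, and symmetrically in the $E'$-slot. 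Hence the class $[BCh_H(\nabla^E,\nabla^{E'})]$ in $h^\bullet_{S^1}(LM,\mathcal{L}^B;\overline H)$ is independent of the chosen connections.
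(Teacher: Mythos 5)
The paper does not actually prove Theorem~\ref{BCh prop}: it is stated in the review subsection as a recollection of results from \cite{HM15}, so there is no in-paper proof to compare against. Your argument is nonetheless correct and is essentially the proof from \cite{HM15}: the chartwise factorization into the $B$-field prefactor $e^{-u^{-1}\overline{B_\alpha}}$ times the trace block, the cancellation of the connection one-form $-i_K\overline{B_\alpha}$ and of $u^{-1}\overline{H}$ against $i_K$ and $d$ of that prefactor, the reduction of (i) to the $u$-rescaled Bismut identity $(d-u\,i_K)T_\alpha=0$ of \cite{B85} (which applies chart by chart since each $E_\alpha$ is an honest Hilbert bundle, the $E'$- and $I$-subtractions ensuring trace-class convergence), and the Chern--Simons transgression along the affine path of gerbe-module connections for (ii).
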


\subsection{T-duality with H-flux} \label{reviewT}

In \cite{BEM04a, BEM04b}, spacetime $Z$ was compactified in one direction.
More precisely, $Z$
is a principal $\bbT$-bundle over $X$

\begin{equation}\label{eqn:MVBx}
\begin{CD}
\bbT @>>> Z \\
&& @V\pi VV \\
&& X \end{CD}
\end{equation}
 classified up to isomorphism by its first Chern class
{ $c_1(Z)\in H^2(X,\ZZ)$}. Assume that spacetime $Z$ is endowed with an $H$-flux which is
a representative in the
degree 3 Deligne cohomology of $Z$, that is
$H\in\Omega^3(Z)$ with integral periods ({\em for simplicity, here we drop factors of $\frac{1}{2\pi i}$ and in the following such normalization numbers are all dropped}),
together with
the following data. Consider a local trivialization $U_\alpha \times \TT$ of $Z\to X$, where
$\{U_\alpha\}$ is a good cover of $X$. Let $H_\alpha = H\Big|_{ U_\alpha \times \TT}
= d B_\alpha$, where $B_\alpha \in \Omega^2(U_\alpha \times \TT)$ and finally, 
\be \label{Bfield} 
B_\alpha -B_\beta = F_{\alpha\beta}
\in \Omega^1(U_{\alpha\beta} \times \TT).
\ee
 Then the choice of $H$-flux entails that we are given a local trivialization
 as above and locally defined 2-forms $B_\alpha$ on it, together with closed 2-forms $F_{\alpha\beta}$ defined on double overlaps,  that is, $(H, B_\alpha, F_{\alpha\beta})$. Also the first Chern class
 of $Z\to X$
  is represented in integral cohomology by $(F, A_\alpha)$ where
$\{A_\alpha\}$ is a connection 1-form on $Z\to X$ and $F = dA_\alpha$ is the curvature 2-form of $\{A_\alpha\}$.

The {  {T-dual}}  is another principal
$\bbT$-bundle over $M$, denoted by $\widehat Z$,
  {}
\begin{equation}\label{eqn:MVBy}
\begin{CD}
\widehat \bbT @>>> \widehat Z \\
&& @V\widehat \pi VV     \\
&& X \end{CD}
\end{equation}
To define it, we see that $\pi_* (H_\alpha) = d \pi_*(B_\alpha) = d {\widehat A}_\alpha$,
 so that $\{{\widehat A}_\alpha\}$ is a connection 1-form whose curvature $ d {\widehat A}_\alpha = \widehat F_\alpha =  \pi_*(H_\alpha)$
 that is, $\widehat F = \pi_* H$. So let $\widehat Z$ denote the principal
$\bbT$-bundle over $M$ whose first Chern class is  $\,\, c_1(\widehat Z) = [\pi_* H, \pi_*(B_\alpha)] \in H^2(X; \ZZ) $.

The Gysin
sequence for $Z$ enables us to define a T-dual $H$-flux
$[\widehat H]\in H^3(\widehat Z,\ZZ)$, satisfying
\begin{equation} \label{eqn:MVBc}
c_1(Z) = \widehat \pi_* \widehat H \,,
\end{equation}
 where $\pi_* $
and similarly $\widehat\pi_*$, denote the pushforward maps.
Note that $ \widehat H$ is not fixed by this data, since any integer
degree 3 cohomology class on $X$ that is pulled back to $\widehat Z$
also satisfies the requirements. However, $ \widehat H$ is
determined uniquely (up to coboundary) upon imposing
the condition $[H]=[\widehat H]$ on the correspondence space $Z\times_X \widehat Z$
as will be explained now.

The {\em correspondence space} (sometimes called the doubled space) is defined as
$$
Z\times_X  \widehat Z = \{(x, \widehat x) \in Z \times \widehat Z: \pi(x)=\widehat\pi(\widehat x)\}.
$$
Then we have the following commutative diagram,
\begin{equation} \label{eqn:correspondence}
\xymatrix @=6pc @ur { (Z, [H]) \ar[d]_{\pi} &
(Z\times_X  \widehat Z, [H]=[\widehat H]) \ar[d]_{\widehat p} \ar[l]^{p} \\ X & (\widehat Z, [\widehat H])\ar[l]^{\widehat \pi}}
\end{equation}
By requiring that
$$
p^*[H]={\widehat p}^*[\widehat H] \in H^3(Z\times_X  \widehat Z, \ZZ),
$$
determines $[\widehat H] \in H^3(  \widehat Z, \ZZ)$  uniquely, via an application of the Gysin sequence.

An alternate way to see this is is explained below.

Let $(H, B_\alpha, F_{\alpha\beta}, (L_{\alpha\beta}, \nabla^{L_{\alpha\beta}}))$  denote a gerbe with connection on $Z$.
We also choose a connection 1-form $A$ on $Z$.
Let $v$ denote the vectorfield generating the $S^1$-action on $Z$.
Then define 
\be \label{newconn} \widehat A_\alpha = -i_v B_\alpha\ee 
on the chart $U_\alpha$ and
the connection 1-form $\widehat A= \widehat A_\alpha +d\widehat\theta_\alpha$
on the chart $U_\alpha\times  \widehat \TT$, where $\theta_\alpha$ is the coordinate on $\widehat \TT$. 
In this way we get a T-dual circle bundle
$\widehat Z \to X$ with connection 1-form $\widehat A$.

Without loss of generality, we can assume that $H$ is $\TT$-invariant. Consider
$$
\Omega = H - A\wedge F_{\widehat A}
$$
where  $F_{\widehat A} = d {\widehat A}$ and $F_{A} = d {A}$ are the curvatures of $A$
and $\widehat A$ respectively. One checks that the contraction $i_v(\Omega)=0$ and
the Lie derivative $L_v(\Omega)=0$ so that $\Omega$ is a basic 3-form on $Z$, that is
$\Omega$ comes from the base $X$.

Setting
$$
\widehat H = F_A\wedge {\widehat A} + \Omega
$$
this defines the T-dual flux 3-form. One verifies that $\widehat H$ is a closed 3-form on $\widehat Z$.
It follows that on the correspondence space, one has as desired,
\begin{equation}
\widehat H = H + d (A\wedge \widehat A ).
\end{equation}

Our next goal is to determine the T-dual curving or B-field.
The Buscher rules imply that on the open sets $U_\alpha \times \TT\times \widehat \TT$ of the
correspondence space $Z\times_X \widehat Z$, one has
\begin{equation} \label{buscher}
\widehat B_\alpha = B_\alpha + A\wedge \widehat A - d\theta_\alpha \wedge d\widehat \theta _\alpha\,,
\end{equation}
Note that
\begin{equation}
i_v \widehat B_\alpha = i_v
\left( B_\alpha + A\wedge \widehat A - d\theta_\alpha \wedge d\widehat \theta _\alpha\right) =
-\widehat A_\alpha + \widehat A - d\widehat \theta_\alpha = 0
\end{equation}
so that $\widehat B_\alpha$ is indeed a 2-form on $\widehat Z$ and not just on the correspondence
space. Obviously, $d \widehat B_\alpha = \widehat H$. Following the descent equations one arrives at the complete
T-dual gerbe with connection, $(\widehat H, \widehat B_\alpha, \widehat F_{\alpha\beta},  (\widehat L_{\alpha\beta}, \nabla^{\widehat L_{\alpha\beta}}))$.
cf. \cite{BMPR}.

$\, $

The rules
for transforming the Ramond-Ramond (RR) fields can be encoded in the {\cite{BEM04a, BEM04b}} generalization of
{\em Hori's formula}
  {}
\begin{equation} \label{eqn:Hori}
T_*G =  \int_{Z\times_X  \widehat Z/\widehat Z} e^{ -A \wedge \widehat A }\ p^*G \,,
\end{equation}
 where $G \in \Omega^\bullet(Z)^\bbT$ is the total RR fieldstrength,
\begin{center}
$G\in\Omega^{even}(Z)^\bbT \quad$ for {   { Type IIA}};\\
$G\in\Omega^{odd}(Z)^\bbT \quad$ for {   { Type IIB}},\\
\end{center}
and where the right hand side of equation \eqref{eqn:Hori} is an invariant differential form on $Z\times_X\widehat Z$, and
the integration is along the $\bbT$-fiber of $Z$.

Recall that the twisted cohomology
is defined as the cohomology of the complex
$$H^\bullet(Z, H) = H^\bullet(\Omega^\bullet(Z), d_H=d+ H\wedge).$$
By the identity \eqref{eqn:Hori}, $T_*$ maps $d_H$-closed forms $G$ to $d_{\widehat
H}$-closed forms $T_*G$.
 So T-duality $T_*$  induces a map on twisted cohomologies,
$$
T : H^\bullet(Z, H) \to H^{\bullet +1}(\widehat Z, \widehat H).
$$
Define the Riemannian metrics on $Z$ and $\widehat Z$ respectively by
$$
g=\pi^*g_X+R^2\, A\odot A,\qquad \widehat g=\widehat\pi^*g_X+1/{R^2} \,\widehat A\odot\widehat A.
$$
where $g_X$ is a Riemannian metric on $X$.
 Then $g$ is $\TT$-invariant and the length of each circle fibre is $R$; $\widehat g$
 is $\widehat\TT$-invariant and the length of each circle fibre is $1/R$.

 The following theorem summarizes the main consequence of T-duality for principal circle bundles in a background flux.

\begin{theorem}[T-duality isomorphism \cite{BEM04a,BEM04b}]\label{thm:T-duality}
In the notation above, and
with the above choices of Riemannian metrics and flux forms, the map \eqref{eqn:Hori}
$$
T_*\colon\Omega^{\overline k}(Z)^\TT\to\Omega^{\overline{k+1}}(\widehat Z)^{\widehat\TT},
$$
for $k=0,1$, (where $\overline k$ denotes the parity of $k$) are isometries, inducing isomorphisms on twisted
cohomology groups,
\begin{equation}\label{T-duality-coh}
T : H^\bullet(Z, H) \stackrel{\cong}{\longrightarrow} H^{\bullet +1}(\widehat Z, \widehat H).
\end{equation}
Therefore under T-duality one has the exchange,
\begin{center}
{$R \Longleftrightarrow 1/R$}\quad and \quad
{{   {background H-flux} $\Longleftrightarrow$   {Chern class}}}
\end{center}
\end{theorem}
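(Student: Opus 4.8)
The plan is to reduce the entire statement to the fibrewise decomposition of invariant forms, which collapses the Hori map into an elementary algebraic operation. Every $\TT$-invariant form on $Z$ can be written uniquely as $G = G_0 + A\wedge G_1$, where $G_1 = i_v G$ and $G_0 = G - A\wedge i_v G$ are horizontal and invariant, hence pulled back from $X$; here $v$ generates the $\TT$-action and $A$ is the chosen connection. On invariant forms the fibre integration in \eqref{eqn:Hori} is simply the contraction $\int_{Z\times_X\widehat Z/\widehat Z} = i_v$ (up to the normalisation we suppress), exactly mirroring Definition~\ref{intfiberloop}. Since $A\wedge A = 0$ we have $e^{-A\wedge\widehat A} = 1 - A\wedge\widehat A$, and since $i_v\widehat A = 0$ and $v$ is $\widehat p$-vertical so $i_v\,\widehat p^\ast(\text{anything}) = 0$, a one-line computation gives the closed formula $T_* G = G_1 - \widehat A\wedge G_0$, now read as a form on $\widehat Z$ with $G_0,G_1$ regarded as pulled back from $X$ via $\widehat\pi$ and $\widehat A$ the dual connection. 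The symmetric computation for $\widehat T_*\widehat G = i_{\widehat v}\,(e^{A\wedge\widehat A}\,\widehat p^\ast\widehat G)$ yields $\widehat T_*\widehat G = \widehat G_1 - A\wedge\widehat G_0$.

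With these formulas in hand the invertibility becomes pure linear algebra: reading $T_*G = G_1 - \widehat A\wedge G_0$ as $\widehat G_0 = G_1,\ \widehat G_1 = -G_0$ and substituting into the formula for $\widehat T_*$ gives $\widehat T_*\circ T_* = -\mathrm{Id}$, and symmetrically $T_*\circ\widehat T_* = -\mathrm{Id}$, reproducing \eqref{tdual}. For the chain-map property I would work on the correspondence space and invoke the relation $\widehat H = H + d(A\wedge\widehat A)$ established earlier, together with the fact that fibre integration anticommutes with $d$ on invariant forms, $i_v d\omega = -d\,i_v\omega$ (the finite-dimensional analogue of \eqref{prop-d}). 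Expanding $(d+\widehat H)T_*G = -i_v\,d(e^{-A\wedge\widehat A}p^\ast G) + \widehat H\wedge i_v(e^{-A\wedge\widehat A}p^\ast G)$, using $d(A\wedge\widehat A) = F_A\wedge\widehat A - A\wedge F_{\widehat A}$ and $i_v\widehat H = 0$, the factor $d(A\wedge\widehat A)$ produced by differentiating the kernel cancels against the flux term via $\widehat H - d(A\wedge\widehat A) = H$, leaving $(d+\widehat H)T_* = -T_*(d+H)$. Thus $T_*$ is a cochain map intertwining $d+H$ with $-(d+\widehat H)$, and combined with the composition identity it descends to the isomorphism \eqref{T-duality-coh} on twisted cohomology.

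Finally, the isometry statement is a direct metric computation fed by the closed formula. Pointwise $|A|^2_g = R^{-2}$ and the fibre of $Z$ has length proportional to $R$, so $\|G\|^2_{L^2(Z)} = cR\int_X|G_0|^2 + cR^{-1}\int_X|G_1|^2$; dually $|\widehat A|^2_{\widehat g} = R^2$ and the fibre of $\widehat Z$ has length proportional to $R^{-1}$, so $\|T_*G\|^2_{L^2(\widehat Z)} = cR^{-1}\int_X|G_1|^2 + cR\int_X|G_0|^2$. These two expressions coincide, so $T_*$ is an isometry on each parity component, and the exchange $R\leftrightarrow 1/R$ is precisely the mechanism equating the two $L^2$-norms, while the interchange of $G_0$ (carrying the Chern class of $Z$) and $G_1$ realises the swap of background $H$-flux with Chern class. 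I expect the sign bookkeeping in the chain-map step to be the only genuinely delicate point: the graded Leibniz rules for $i_v$ applied against the degree-$3$ form $\widehat H$ and the degree-$2$ kernel must be tracked carefully, whereas the decomposition renders invertibility and the isometry essentially one-line verifications.
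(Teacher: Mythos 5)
Your proposal is correct, and it is the standard argument for this theorem: the paper itself does not reprove Theorem~\ref{thm:T-duality} (it is quoted from \cite{BEM04a,BEM04b}), but your route — decompose an invariant form as $G = G_0 + A\wedge G_1$ with $G_0, G_1$ basic, realize fibre integration on invariant forms as contraction with the vertical generator, derive the closed formula $T_*G = G_1 - \widehat{A}\wedge G_0$, and cancel fluxes via $\widehat{H} - H = d(A\wedge\widehat{A})$ — is precisely the classical proof and coincides with how the paper argues its loop-space generalization. Indeed, the proof of Theorem~\ref{main1} uses exactly the same three steps in looped form (the local decomposition $\omega_\alpha = b_\alpha + \overline{A}\cdot a_\alpha$, fibre integration as $i_{v_L}$ per Definition~\ref{intfiberloop}, and the identity \eqref{BCh closed} playing the role of your kernel differentiation), so your computation, including the signs giving $\widehat{T}_*\circ T_* = -\mathrm{Id}$ and $(d+\widehat{H})\circ T_* = -T_*\circ(d+H)$ as in \eqref{tdual}, matches the paper's conventions.
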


\subsection{The Poincar\'e Gerbe Module on the Correspondence Space and its Twisted Bismut-Chern Character} \label{Poincare}

$\, $

From Section~\ref{reviewT}, we can observe that on the correspondence space \( Z \times_X \widehat{Z} \) arising in T-duality, there appears a difference flux \( \widehat{H} -H \). Since \( [\widehat{H}] = [H] \), this difference must be the curvature of a flat gerbe, which we refer to as the \textbf{correspondence gerbe}. The correspondence space \( Z \times_X \widehat{Z} \) is a \( \TT \times \widehat{\TT} \)-bundle over \( X \), and over each point \( x \in X \), the fiber \( \TT \times \widehat{\TT} \) carries a canonical \textbf{Poincar\'e line bundle}.

In the following, we construct this correspondence gerbe as well as its holonomy line bundle explicitly and show how the family of Poincar\'e line bundles can be assembled into a gerbe module over the correspondence gerbe, which we call the \textbf{Poincar\'e Gerbe Module}. We then apply the framework reviewed in Section~\ref{exotic} to define the (horizontal) twisted Bismut--Chern character associated with the Poincar\'e Gerbe Module. This character will be essential in the next section for constructing the loop Hori formula.

\subsubsection{The correspondence gerbe on the correspondence space and its horizontal holonomy line bundle on the free loop space of the correspondence space} \label{corrgerbe}

$\, $

We adopt the same notations as in Section~\ref{reviewT}. As previously discussed, on the space \( Z \), the gerbe data is given by
\[
(H, B_\alpha, F_{\alpha\beta}, (L_{\alpha\beta}, \nabla^{L_{\alpha\beta}})),
\]
while on the dual side \( \widehat{Z} \), the corresponding dual gerbe data takes the form
\[
(\widehat{H}, \widehat{B}_\alpha, \widehat{F}_{\alpha\beta}, (\widehat{L}_{\alpha\beta}, \nabla^{\widehat{L}_{\alpha\beta}})).
\]

{\em In the present context, we work over the base manifold \( X \) using a Brylinski  cover adapted to the loop space, rather than a standard good cover. Nonetheless, all arguments from Section~\ref{reviewT} continue to apply without essential modification.}

The correspondence space \( Z \times_X \widehat{Z} \) is locally covered by open sets of the form \( U_\alpha \times \mathbb{T} \times \widehat{\mathbb{T}} \). On this space, we define the \textbf{correspondence gerbe with connection}:
\[
\left\{ {\widehat{p}}^* \widehat{L}_{\alpha\beta} \otimes (p^* L_{\alpha\beta})^{-1}, \; 1 \otimes {\widehat{p}}^* \nabla^{\widehat{L}_{\alpha\beta}} - p^* \nabla^{L_{\alpha\beta}} \otimes 1 \right\},
\]
whose curvature is clearly given by \( \widehat{H} - H \).

As introduced in~\eqref{holbundle}, we previously described the construction of the holonomy line bundle from gerbe data. In the following, we will give a horizontal version of this construction.

Let \( \mathcal{L}_h^B \) denote the \textbf{horizontal holonomy line bundle} with connection on \( LZ \), associated to the gerbe with connection \( (H, B_\alpha, F_{\alpha\beta}, (L_{\alpha\beta}, \nabla^{L_{\alpha\beta}})) \) on \( Z \). By \emph{horizontal}, we mean that the transition functions are built using the \textbf{horizontal holonomy} \( \mathrm{hol}_h(\nabla^{L_{\alpha\beta}}) \) defined with respect to the horizontal vector field \( K^h \) along loops.

The system of 1-forms \( \{ -i_{K^h} \overline{B_\alpha} \} \) satisfies the gauge transformation law:
\[
d\, \mathrm{hol}_h^{-1}(\nabla^{L_{\alpha\beta}}) = -i_{K^h} \overline{B_\alpha} + i_{K^h} \overline{B_\beta},
\]
which defines a connection \( \nabla^{\mathcal{L}_h^B} \) on \( \mathcal{L}_h^B \).

Since the flux \( H \) is \( \mathbb{T} \)-invariant, we have \( L_{K^v} \overline{H} = 0 \). Furthermore, since \( L_K \overline{H} = 0 \), it follows that \( L_{K^h} \overline{H} = 0 \) as well. Thus, we obtain a horizontal analogue of Theorem~\ref{flat}:
\be
\left( \nabla^{\mathcal{L}_h^B} - i_{K^h} + \overline{H} \right)^2 + L_{K^h} = 0.
\ee

On the dual side, we have the horizontal holonomy line bundle with connection \( (\mathcal{L}_h^{\widehat{B}}, \nabla^{\mathcal{L}_h^{\widehat{B}}}) \) over \( L\widehat{Z} \), arising from the gerbe with connection \( (\widehat{H}, \widehat{B}_\alpha, \widehat{F}_{\alpha\beta}, (\widehat{L}_{\alpha\beta}, \nabla^{\widehat{L}_{\alpha\beta}})) \) on \( \widehat{Z} \). This bundle satisfies the following horizontal flatness condition:
\[
\left( \nabla^{\mathcal{L}_h^{\widehat{B}}} - i_{\widehat{K}^h} + \overline{\widehat{H}} \right)^2 + L_{\widehat{K}^h} = 0.
\]

On the loop space of the correspondence space \( LZ \times_{LX} L\widehat{Z} \), equipped with the difference flux \( \overline{\widehat{H}} - \overline{H} \), we consider the tensor product of pullback bundles:
\[
((Lp)^*\mathcal{L}^B_h)^{-1} \otimes (L\widehat{p})^*\mathcal{L}_h^{\widehat{B}}.
\]

The \textbf{horizontal holonomy line bundle with connection of the correspondence gerbe} is then defined as:
\[
\left(((Lp)^*\mathcal{L}^B_h)^{-1} \otimes (L\widehat{p})^*\mathcal{L}_h^{\widehat{B}}, \, (Lp)^*\nabla^{(\mathcal{L}_h^B)^{-1}} \otimes 1 + 1 \otimes (L\widehat{p})^*\nabla^{\mathcal{L}_h^{\widehat{B}}}\right).
\]

This structure fits into the following diagram:
\[
\xymatrix@C=4em@R=3em{
& ((Lp)^*\mathcal{L}^B_h)^{-1} \otimes (L\widehat{p})^*\mathcal{L}_h^{\widehat{B}} \ar[d] & \\
\mathcal{L}^B_h \ar[d] 
& LZ \times_{LX} L\widehat{Z} \ar[dl]_{Lp} \ar[rd]^{L\widehat{p}} 
& \mathcal{L}_h^{\widehat{B}} \ar[d] \\
LZ \ar[rd]_{L\pi} 
& & L\widehat{Z} \ar[ld]^{L\widehat{\pi}} \\
& LX &
}
\]

Moreover, this bundle satisfies a horizontal version of Theorem~\ref{flat}, namely:
\[
\left((Lp)^*\nabla^{(\mathcal{L}_h^B)^{-1}} \otimes 1 + 1 \otimes (L\widehat{p})^*\nabla^{\mathcal{L}_h^{\widehat{B}}} - i_{K^h} + \overline{\widehat{H}} - \overline{H} \right)^2 + L_{K^h} = 0.
\]

\subsubsection{The Poincar\'e gerbe module over the correspondence gerbe}

$\, $

On the open set \( U_\alpha \times \mathbb{T} \times \widehat{\mathbb{T}} \), there exists a Poincaré line bundle \( \mathcal{P}_\alpha \), equipped with a connection \( \nabla^{\mathcal{P}_\alpha} \) whose connection 1-form, in local coordinates, is given by
\[
\frac{1}{2}\left(\theta_\alpha\, d\widehat{\theta}_\alpha - \widehat{\theta}_\alpha\, d\theta_\alpha\right).
\]
The curvature of this connection is the symplectic 2-form:
\[
d\theta_\alpha \wedge d\widehat{\theta}_\alpha.
\]

From equation~\eqref{buscher}, we have:
\[
\widehat{B}_\alpha = B_\alpha + A \wedge \widehat{A} - d\theta_\alpha \wedge d\widehat{\theta}_\alpha, \qquad
\widehat{B}_\beta = B_\beta + A \wedge \widehat{A} - d\theta_\beta \wedge d\widehat{\theta}_\beta.
\]
Hence, from~\eqref{Bfield}, it follows that:
\begin{equation} \label{curveqn}
\widehat{F}_{\alpha\beta} - F_{\alpha\beta} + d\theta_\alpha \wedge d\widehat{\theta}_\alpha = d\theta_\beta \wedge d\widehat{\theta}_\beta.
\end{equation}

Since \( H^2(U_{\alpha\beta} \times \mathbb{T} \times \widehat{\mathbb{T}}) \) has no torsion, equation~\eqref{curveqn} implies that:
\[
\mathcal{P}_\beta \otimes \mathcal{P}_\alpha^{-1} \cong \widehat{p}^*\widehat{L}_{\alpha\beta} \otimes (p^*L_{\alpha\beta})^{-1}.
\]

The 1-form
\[
\nabla^{\mathcal{P}_\beta} \otimes 1 - 1 \otimes \nabla^{\mathcal{P}_\alpha} + p^* \nabla^{L_{\alpha\beta}} \otimes 1 - 1 \otimes \widehat{p}^* \nabla^{\widehat{L}_{\alpha\beta}}
\]
is invariant under the \( L\mathbb{T} \times L\widehat{\mathbb{T}} \)-action and satisfies:
\be
\delta\left( \nabla^{\mathcal{P}_\beta}\otimes 1-1\otimes \nabla^{\mathcal{P}_\alpha}+p^* \nabla^{L_{\alpha\beta}}\otimes1 -1\otimes {\widehat p}^*\nabla^{\widehat L_{\alpha\beta}}\right)=0,  
\ee
where \( \delta \) is the Čech coboundary operator. Therefore, there exist 1-forms \( \Theta_\alpha \), \( \Theta_\beta \) such that
\[
\nabla^{\mathcal{P}_\beta} \otimes 1 - 1 \otimes \nabla^{\mathcal{P}_\alpha} + p^* \nabla^{L_{\alpha\beta}} \otimes 1 - 1 \otimes \widehat{p}^* \nabla^{\widehat{L}_{\alpha\beta}} = \Theta_\alpha - \Theta_\beta.
\]

Equation~\eqref{curveqn} then gives:
\[
d\Theta_\alpha - d\Theta_\beta = 0,
\]
so the forms \( \{ d\Theta_\alpha \} \) glue together to define a globally defined closed 2-form on \( X \), which we denote by \( \Lambda \).

Moreover, since
\[
(\nabla^{\mathcal{P}_\beta} + \Theta_\beta) \otimes 1 - 1 \otimes (\nabla^{\mathcal{P}_\alpha} + \Theta_\alpha) + p^* \nabla^{L_{\alpha\beta}} \otimes 1 - 1 \otimes \widehat{p}^* \nabla^{\widehat{L}_{\alpha\beta}} = 0,
\]
the pair \( (\mathcal{P}, \nabla^{\mathcal{P}} + \Theta) \), patched from \( \{ \mathcal{P}_\alpha, \nabla^{\mathcal{P}_\alpha} + \Theta_\alpha \} \), defines a gerbe module with connection over the correspondence gerbe:
\[
\left\{ \widehat{p}^*\widehat{L}_{\alpha\beta} \otimes (p^*L_{\alpha\beta})^{-1}, \ 1 \otimes \widehat{p}^* \nabla^{\widehat{L}_{\alpha\beta}} - p^* \nabla^{L_{\alpha\beta}} \otimes 1 \right\},
\]
whose flux is \( \widehat{H} - H \).

Recall, the horizontal holonomy line bundle with connection of the correspondence gerbe is given by:
\[
\left( ((Lp)^*\mathcal{L}^B_h)^{-1} \otimes (L\widehat{p})^*\mathcal{L}_h^{\widehat{B}}, \ (Lp)^* \nabla^{(\mathcal{L}^B_h)^{-1}} \otimes 1 + 1 \otimes (L\widehat{p})^* \nabla^{\mathcal{L}_h^{\widehat{B}}} \right).
\]

The curvature of the Poincaré gerbe module is:
\[
F_\alpha = \left( \nabla^{\mathcal{P}_\alpha} + \Theta_\alpha \right)^2 = d\theta_\alpha \wedge d\widehat{\theta}_\alpha + \Lambda.
\]

\subsubsection{The horizontal twisted Bismut-Chern character of the Poincar\'e gerbe module}

$\, $

We now turn to the horizontal version of the twisted Bismut--Chern character, i.e.,  in the construction of the Bismut--Chern character~\eqref{localBCh}, the holonomies ${}^{E_\alpha}\tau^1_0$ and ${}^{E'_\alpha}\tau^1_0$ will be replaced by their horizontal holonomies.

We define
\[
\mathrm{BCh}^h_{\widehat{H} - H}(\mathcal{P}, \nabla^{\mathcal{P}} + \Theta) \in \Omega^\bullet\left(LZ \times_{LX} L\widehat{Z},\, ((Lp)^*\mathcal{L}^B_h)^{-1} \otimes (L\widehat{p})^*\mathcal{L}_h^{\widehat{B}}\right)^{L\mathbb{T} \times L\widehat{\mathbb{T}}, K^h}
\]
as the \textbf{horizontal twisted Bismut--Chern character} of the Poincaré gerbe module \( (\mathcal{P}, \nabla^{\mathcal{P}} + \Theta) \), where $\Omega^\bullet\left(LZ \times_{LX} L\widehat{Z},\, ((Lp)^*\mathcal{L}^B_h)^{-1} \otimes (L\widehat{p})^*\mathcal{L}_h^{\widehat{B}}\right)^{L\mathbb{T} \times L\widehat{\mathbb{T}}, K^h}
$ denotes the space of $(L\mathbb{T} \times L\widehat{\mathbb{T}})$-invariant 
 differential forms $\omega$ with coefficients in  $((Lp)^*\mathcal{L}^B_h)^{-1} \otimes (L\widehat{p})^*\mathcal{L}_h^{\widehat{B}}$ on $LZ \times_{LX} L\widehat{Z}$, such that  $L_{K^h}\omega=0$. 

More explicitly, on the $\alpha$-patch of the loop space of the correspondence space, $L\pi^{-1}(U_\alpha)\times _{LU_\alpha} L\widehat\pi^{-1}(U_\alpha)$,  we have:

\begingroup
\begin{align} \label{local BCh}
&\mathrm{BCh}^h_{\widehat{H} - H, \alpha}(\mathcal{P}, \nabla^{\mathcal{P}} + \Theta)\\
=& \left( 1 + \sum_{n=1}^\infty (-u)^n 
\int\limits_{0 \leq s_1 \leq \cdots \leq s_n \leq 1} 
\widetilde{(\widehat{B}_\alpha - B_\alpha)}_{s_1} \cdots 
\widetilde{(\widehat{B}_\alpha - B_\alpha)}_{s_n} \right) \nonumber\\
&\quad \cdot \left[
\left( I + \sum_{n=1}^\infty (-u)^n 
\int\limits_{0 \leq s_1 \leq \cdots \leq s_n \leq 1} 
{}^{\mathcal{P}_\alpha} \tau^0_{s_1}(\widetilde{F_\alpha}_{s_1}) \circ \cdots \circ 
{}^{\mathcal{P}_\alpha} \tau^0_{s_n}(\widetilde{F_\alpha}_{s_n}) 
\right) \cdot {}^{\mathcal{P}_\alpha, h} \tau^1_0 \right] \nonumber\\
&\quad \otimes  \left( (Lp)^* \sigma_\alpha^{-1} \otimes (L\widehat{p})^* \widehat{\sigma}_\alpha \right) \nonumber\\
&= \left( 1 + \sum_{n=1}^\infty (-u)^n 
\int\limits_{0 \leq s_1 \leq \cdots \leq s_n \leq 1} 
\widetilde{(d\theta_\alpha \wedge d\widehat{\theta}_\alpha + \Lambda)}_{s_1} \cdots 
\widetilde{(d\theta_\alpha \wedge d\widehat{\theta}_\alpha + \Lambda)}_{s_n} \right) \nonumber\\
&\quad \cdot \left[
\left( I + \sum_{n=1}^\infty (-u)^n 
\int\limits_{0 \leq s_1 \leq \cdots \leq s_n \leq 1} 
\widetilde{(d\theta_\alpha \wedge d\widehat{\theta}_\alpha + \Lambda)}_{s_1} \cdots 
\widetilde{(d\theta_\alpha \wedge d\widehat{\theta}_\alpha + \Lambda)}_{s_n} 
\right) \cdot {}^{\mathcal{P}_\alpha, h} \tau^1_0 \right] \nonumber\\
&\quad \otimes  \left( (Lp)^* \sigma_\alpha^{-1} \otimes (L\widehat{p})^* \widehat{\sigma}_\alpha \right) \nonumber\\
&= \left[
\left( I + \sum_{n=1}^\infty (-u)^n 
\int\limits_{0 \leq s_1 \leq \cdots \leq s_n \leq 1} 
\widetilde{(A \wedge \widehat{A} + \Lambda)}_{s_1} \cdots 
\widetilde{(A \wedge \widehat{A} + \Lambda)}_{s_n} 
\right) \cdot {}^{\mathcal{P}_\alpha, h} \tau^1_0 \right] \nonumber\\
&\quad \otimes \left( (Lp)^* \sigma_\alpha^{-1} \otimes (L\widehat{p})^* \widehat{\sigma}_\alpha \right) \nonumber\\
&= \left[ e^{-u \overline{A \wedge \widehat{A} + \Lambda}} \cdot {}^{\mathcal{P}_\alpha, h} \tau^1_0 \right] \otimes \left( (Lp)^* \sigma_\alpha^{-1} \otimes (L\widehat{p})^* \widehat{\sigma}_\alpha \right). \label{eq:horizontalBCh}
\end{align}
\endgroup

By a horizontal analogue of part~(i) of Theorem~\ref{BCh prop}, we obtain the identity
\begin{equation} \label{BCh closed}
\left( (Lp)^*\nabla^{(\mathcal{L}_h^B)^{-1}} \otimes 1 
+ 1 \otimes (L\widehat{p})^* \nabla^{\mathcal{L}_h^{\widehat{B}}} 
- i_{K^h} + \overline{\widehat{H}} - \overline{H} \right) 
\mathrm{BCh}^h_{\widehat{H} - H, \alpha}(\mathcal{P}, \nabla^{\mathcal{P}} + \Theta) = 0.
\end{equation}

\smallskip

\noindent

Dually, we also have the \textbf{dual horizontal twisted Bismut--Chern character}
\[
\mathrm{BCh}^h_{H - \widehat{H}}(\mathcal{P}^{-1}, \nabla^{\mathcal{P}^{-1}} - \Theta) 
\in \Omega^\bullet\left( LZ \times_{LX} L\widehat{Z},\, (Lp)^*\mathcal{L}_h^B \otimes \left((L\widehat{p})^* \mathcal{L}_h^{\widehat{B}}\right)^{-1} \right)^{L\mathbb{T} \times L\widehat{\mathbb{T}}, K^h}.
\]

\noindent
Locally on the $\alpha$-patch, this is given by:
\begin{equation}
\begin{split}
&\mathrm{BCh}^h_{H - \widehat{H}}(\mathcal{P}^{-1}, \nabla^{\mathcal{P}^{-1}} - \Theta)\\
= {} & \left[
\left( I + \sum_{n=1}^\infty u^n \!
\int\limits_{0 \leq s_1 \leq \cdots \leq s_n \leq 1}
\widetilde{(A \wedge \widehat{A} + \Lambda)}_{s_1} \cdots 
\widetilde{(A \wedge \widehat{A} + \Lambda)}_{s_n} 
\right) \cdot {}^{\mathcal{P}_\alpha^{-1}, h} \tau^1_0 \right] \\
& \quad \otimes \left( (Lp)^* \sigma_\alpha \otimes (L\widehat{p})^* \widehat{\sigma}_\alpha^{-1} \right) \\
= {} & \left[ e^{u \, \overline{A \wedge \widehat{A} + \Lambda}} \cdot {}^{\mathcal{P}_\alpha^{-1}, h} \tau^1_0 \right] 
\otimes \left( (Lp)^* \sigma_\alpha \otimes (L\widehat{p})^* \widehat{\sigma}_\alpha^{-1} \right).
\end{split}
\end{equation}


\section{Loop Hori Formulae-the Loop Lifting of the Hori Map for T-duality} \label{LoopHori}

The purpose of this section is to accomplish the central objective of this paper: the construction of the \emph{loop Hori map} and the formulation of T-duality from the loop space perspective. In addition, we introduce an enhanced version, referred to as the \emph{graded loop Hori map}.

\subsection{ Loop Hori Formulae for T-duality}

As we have seen, looping the classical T-duality diagram:
\begin{equation} 
\label{eq:T-duality-classical}
\xymatrix @=6pc @ur {
(Z, H) \ar[d]_{\pi} &
(Z\times_X \widehat{Z}, H - \widehat{H}) \ar[d]_{\widehat{p}} \ar[l]^{p} \\
X & (\widehat{Z}, \widehat{H}) \ar[l]^{\widehat{\pi}} }
\end{equation}
we obtain the following infinite-dimensional loop space diagram:
\begin{equation}
\label{eq:T-duality-loop}
\xymatrix@C=4em@R=3em{
& ((Lp)^*\mathcal{L}^B_h)^{-1} \otimes (L\widehat{p})^*\mathcal{L}_h^{\widehat{B}} \ar[d] & \\
(\mathcal{L}^B_h, \nabla^{\mathcal{L}_h^B}) \ar[d] 
& LZ \times_{LX} L\widehat{Z} \ar[dl]_{Lp} \ar[rd]^{L\widehat{p}} 
& (\mathcal{L}_h^{\widehat{B}}, \nabla^{\mathcal{L}_h^{\widehat{B}}}) \ar[d] \\
(LZ, \overline{H}) \ar[rd]_{L\pi} 
& & (L\widehat{Z}, \overline{\widehat{H}}) \ar[ld]^{L\widehat{\pi}} \\
& LX &
}
\end{equation}

\medskip

Our objective in this section is to \textbf{lift the classical Hori map}
\begin{align}
T_* \colon \Omega^{\overline{k}}(Z)^\mathbb{T} \longrightarrow \Omega^{\overline{k+1}}(\widehat{Z})^{\widehat{\mathbb{T}}}, \\
T_* G = \int_{Z\times_X \widehat{Z}/\widehat{Z}} e^{-A \wedge \widehat{A}}\, p^*G, 
\label{eq:hori-classical}
\end{align}
to the loop space framework.

\medskip

We introduce the following notations for the loop space setup. Let
\be \Omega^\bullet(LZ, \mathcal{L}^B_h)^{L\mathbb{T}}\ee
 denote the space of smooth \( L\mathbb{T} \)-invariant differential forms on the free loop space \( LZ \), with coefficients in the horizontal holonomy line bundle \( \mathcal{L}^B_h \). 
 
 Dually, let
\be \Omega^\bullet(L\widehat{Z}, \mathcal{L}_h^{\widehat{B}})^{L\widehat{\mathbb{T}}} \ee
denote the space of smooth \( L\widehat{\mathbb{T}} \)-invariant differential forms on \( L\widehat{Z} \), with coefficients in the dual horizontal holonomy line bundle \( \mathcal{L}_h^{\widehat{B}} \).

Let \( G \in \Omega^\bullet(LZ, \mathcal{L}^B_h)^{L\mathbb{T}} \). Since the horizontal twisted Bismut--Chern character
\[
BCh^h_{\widehat{H}-H}(\mathcal{P}, \nabla^{\mathcal{P}}+\Theta)
\]
takes values in differential forms on \( LZ \times_{LX} L\widehat{Z} \), twisted by the line bundle 
\[
((Lp)^*\mathcal{L}^B_h)^{-1} \otimes (L\widehat{p})^*\mathcal{L}_h^{\widehat{B}},
\]
we consider the tensor product (using \( \widehat{\otimes} \) to denote wedge product in the differential form part):
\[
BCh^h_{\widehat{H}-H}(\mathcal{P}, \nabla^{\mathcal{P}}+\Theta) \widehat{\otimes} (Lp)^*G,
\]
which lies in
\[
\Omega^\bullet\left(LZ \times_{LX} L\widehat{Z}, (L\widehat{p})^*\mathcal{L}_h^{\widehat{B}}\right)^{L\mathbb{T} \times L\widehat{\mathbb{T}}, K^h}.
\]

Since the twisting appears only in the pullback bundle over \( L\widehat{Z} \), we can perform the integration along the fiber as defined in Definition~\ref{intfiberloop}:
\[
\int_{LZ \times_{LX} L\widehat{Z} / L\widehat{Z}} BCh^h_{\widehat{H}-H}(\mathcal{P}, \nabla^{\mathcal{P}}+\Theta) \widehat{\otimes} (Lp)^*G,
\]
which takes values in 
\[
\Omega^{\bullet+1}(L\widehat{Z}, \mathcal{L}_h^{\widehat{B}})^{L\widehat{\mathbb{T}}}.
\]
See Equations~\eqref{localtensor}–\eqref{consistency} for more technical details.

\medskip

\begin{definition} \label{loop Hori}
The \textbf{loop Hori map}
\[
LT_*: \Omega^\bullet(LZ, \mathcal{L}^B_h)^{L\mathbb{T}}[[u, u^{-1}]] \longrightarrow \Omega^{\bullet+1}(L\widehat{Z}, \mathcal{L}_h^{\widehat{B}})^{L\widehat{\mathbb{T}}}[[u, u^{-1}]]
\]
is defined by
\[
LT_* G := \int_{LZ \times_{LX} L\widehat{Z} / L\widehat{Z}} 
BCh^h_{\widehat{H}-H}(\mathcal{P}, \nabla^{\mathcal{P}}+\Theta) \widehat{\otimes} (Lp)^*G.
\]

\medskip

The \textbf{dual loop Hori map}
\[
L\widehat{T}_*: \Omega^\bullet(L\widehat{Z}, \mathcal{L}_h^{\widehat{B}})^{L\widehat{\mathbb{T}}}[[u, u^{-1}]] 
\longrightarrow \Omega^{\bullet+1}(LZ, \mathcal{L}^B_h)^{L\mathbb{T}}[[u, u^{-1}]]
\]
is defined by
\[
L\widehat{T}_* \widehat{G} := \int_{LZ \times_{LX} L\widehat{Z} / LZ} 
BCh^h_{H - \widehat{H}}(\mathcal{P}^{-1}, \nabla^{\mathcal{P}^{-1}} - \Theta) \widehat{\otimes} (L\widehat{p})^* \widehat{G}.
\]
\end{definition}

We now present the central result of this paper, expressing T-duality from the perspective of loop spaces via the \emph{loop Hori map}.

\begin{theorem}[Loop Space T-Duality] \label{main1}
\leavevmode
\begin{itemize}
    \item[(i)] The loop Hori map and its dual are inverses up to a sign:
    \[
    L\widehat{T}_* \circ LT_* = -\mathrm{Id}, \quad LT_* \circ L\widehat{T}_* = -\mathrm{Id}.
    \]
    
    \item[(ii)] Both maps are chain maps with respect to the twisted differentials. Specifically,
    \begin{align*}
        &\left( \nabla^{\mathcal{L}_h^{\widehat{B}}} - u\, i_{\widehat{K}^h} + u^{-1} \overline{\widehat{H}} \right) \circ LT_*
        = LT_* \circ \left( \nabla^{\mathcal{L}_h^B} - u\, i_{K^h} + u^{-1} \overline{H} \right), \\
        &\left( \nabla^{\mathcal{L}_h^B} - u\, i_{K^h} + u^{-1} \overline{H} \right) \circ L\widehat{T}_*
        = L\widehat{T}_* \circ \left( \nabla^{\mathcal{L}_h^{\widehat{B}}} - u\, i_{\widehat{K}^h} + u^{-1} \overline{\widehat{H}} \right).
    \end{align*}
\end{itemize}
\end{theorem}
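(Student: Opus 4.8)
The plan is to treat the two assertions separately, deriving (ii) from the equivariant flatness of the Bismut--Chern kernel together with the fiber-integration identities \eqref{prop-d} and \eqref{prop-i}, and (i) from a direct computation of the product of the two kernels. Throughout I write $W = LZ\times_{LX}L\widehat{Z}$, let $v_L$ be the looped generator of the $\mathbb{T}$-action coming from $Z$ (so $v_L$ is the fiber direction of $L\widehat{p}\colon W\to L\widehat{Z}$, while $Lp_*v_L$ is the generator on $LZ$), and let $\widehat{v}_L$ be the analogous generator from $\widehat{Z}$ (the fiber direction of $Lp$). I abbreviate $\Phi := BCh^h_{\widehat{H}-H}(\mathcal{P},\nabla^{\mathcal{P}}+\Theta)$ and its dual $\Phi' := BCh^h_{H-\widehat{H}}(\mathcal{P}^{-1},\nabla^{\mathcal{P}^{-1}}-\Theta)$, and denote the twisted differentials by $D_Z = \nabla^{\mathcal{L}^B_h}-u\,i_{K^h}+u^{-1}\overline{H}$ and $D_{\widehat{Z}} = \nabla^{\mathcal{L}^{\widehat{B}}_h}-u\,i_{\widehat{K}^h}+u^{-1}\overline{\widehat{H}}$.

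For part (ii), I would first record that $\int_{W/L\widehat{Z}}=i_{v_L}$ anticommutes with every piece of $D_{\widehat{Z}}$ pulled back to $W$. Since $v_L$ is $L\widehat{p}$-vertical, it annihilates both the pullback connection $1$-form of $(L\widehat{p})^*\mathcal{L}^{\widehat{B}}_h$ and the pullback flux $(L\widehat{p})^*\overline{\widehat{H}}$; hence \eqref{prop-d} upgrades to $\nabla^{\mathcal{L}^{\widehat{B}}_h}\circ i_{v_L}=-\,i_{v_L}\circ (L\widehat{p})^*\nabla^{\mathcal{L}^{\widehat{B}}_h}$ and $i_{v_L}(\overline{\widehat{H}}\wedge-)=-\,\overline{\widehat{H}}\wedge i_{v_L}(-)$, while \eqref{prop-i} gives $i_{\widehat{K}^h}\circ i_{v_L}=-\,i_{v_L}\circ i_{K^h}$ on the invariant forms at hand. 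Combining these, $D_{\widehat{Z}}\circ i_{v_L}=-\,i_{v_L}\circ\widetilde{D}_{\widehat{Z}}$, where $\widetilde{D}_{\widehat{Z}}$ is the $L\widehat{Z}$-part of the correspondence differential. I would then apply $D_{\widehat{Z}}$ to $LT_*G=i_{v_L}\big(\Phi\,\widehat{\otimes}\,(Lp)^*G\big)$, push it through $i_{v_L}$, and use the Leibniz rule on $W$ together with the equivariant flatness \eqref{BCh closed} of $\Phi$ to trade the $L\widehat{Z}$-part of the correspondence differential acting on the kernel for its $LZ$-part. What survives is $\pm\,i_{v_L}\big(\Phi\,\widehat{\otimes}\,(Lp)^*(D_ZG)\big)=\pm\,LT_*(D_ZG)$, and the signs in \eqref{prop-d}--\eqref{prop-i} are arranged to cancel the Koszul shift, yielding $D_{\widehat{Z}}\circ LT_*=LT_*\circ D_Z$. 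The dual identity follows by interchanging $Z$ and $\widehat{Z}$ and invoking the flatness of $\Phi'$.

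For part (i), I would compose the two maps on $W$ and exploit the explicit local expressions \eqref{eq:horizontalBCh}. Writing $L\widehat{T}_*\circ LT_*(G)=i_{\widehat{v}_L}\big(\Phi'\,\widehat{\otimes}\,(L\widehat{p})^*\,i_{v_L}(\Phi\,\widehat{\otimes}\,(Lp)^*G)\big)$ and using that the pullback of a fiber integral is computed by the iterated contraction $i_{\widehat{v}_L}i_{v_L}$ on $W$, the composite becomes an integration of the product kernel $\Phi\cdot\Phi'$ against $(Lp)^*G$ over the full torus fiber. The three defining factors then cancel: the exponentials multiply to $e^{-u\,\overline{A\wedge\widehat{A}+\Lambda}}\,e^{\,u\,\overline{A\wedge\widehat{A}+\Lambda}}=1$, the horizontal holonomies satisfy ${}^{\mathcal{P},h}\tau^1_0\cdot{}^{\mathcal{P}^{-1},h}\tau^1_0=\mathrm{Id}$, and the section factors $(Lp)^*\sigma_\alpha^{-1}\otimes(L\widehat{p})^*\widehat{\sigma}_\alpha$ and $(Lp)^*\sigma_\alpha\otimes(L\widehat{p})^*\widehat{\sigma}_\alpha^{-1}$ pair to the canonical trivialization. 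The nontrivial residue is the interplay of the two contractions with the connection terms via $i_{v_L}A=1$ and $i_{\widehat{v}_L}\widehat{A}=1$, which is exactly the loop-space Poincaré/Fourier--Mukai pairing on the $\mathbb{T}\times\widehat{\mathbb{T}}$-fiber; it reconstructs $G$ and produces the overall sign $-1$ from anticommuting $i_{v_L}$ past $i_{\widehat{v}_L}$. The identity $LT_*\circ L\widehat{T}_*=-\mathrm{Id}$ follows by the symmetric computation.

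I expect the principal difficulty to lie in part (i): making rigorous the claim that $(L\widehat{p})^*\circ\int_{W/L\widehat{Z}}$ followed by $\int_{W/LZ}$ collapses to the iterated contraction $i_{\widehat{v}_L}i_{v_L}$ on $W$ --- the precise loop-space analogue of the projection formula and of fiberwise Poincaré duality on $\mathbb{T}\times\widehat{\mathbb{T}}$ --- together with the attendant sign bookkeeping. A secondary, more technical point in part (ii) is the anticommutation of $i_{v_L}$ with the \emph{horizontal} rotation $i_{\widehat{K}^h}$: unlike the full rotation treated in \eqref{prop-i}, I must show the bracket $[v_L,\widehat{K}^h]$ acts trivially, for which I would combine the $L\widehat{p}$-verticality of $v_L$ with the constraints $L_{K^h}\omega=0$ and $L\mathbb{T}\times L\widehat{\mathbb{T}}$-invariance imposed on every form in the complex.
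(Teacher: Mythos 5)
Your plan for part (ii) is essentially the paper's argument: the paper also works locally, writes $G=\omega_\alpha\otimes\sigma_\alpha$, pushes $d$ and $i_{\widehat K^h}$ through the fiber integration via \eqref{prop-d}--\eqref{prop-i}, and uses the horizontal flatness \eqref{BCh closed} of the kernel to convert the $\widehat Z$-side differential into the $Z$-side one. The caveat you raise about $[v_L,\widehat K^h]$ is a fair technical point that the paper does not dwell on, but your proposed resolution is in the right spirit.

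Part (i), however, contains a genuine gap, and it is not merely the "rigor" issue you flag. Your reduction of $L\widehat T_*\circ LT_*$ to the iterated contraction $i_{\widehat v_L}i_{v_L}$ applied to the \emph{product} kernel $\Phi\cdot\Phi'$ is false, and if carried out it yields $0$ rather than $-\mathrm{Id}$. The obstruction is that $i_{v_L}\Phi'\neq 0$ (indeed $i_{v_L}\overline{A\wedge\widehat A}=\overline{\widehat A}$), so $i_{\widehat v_L}\bigl(\Phi'\wedge i_{v_L}(\Phi\wedge (Lp)^*G)\bigr)$ cannot be rewritten as $\pm\, i_{\widehat v_L}i_{v_L}(\Phi'\wedge\Phi\wedge (Lp)^*G)$: the contraction $i_{v_L}$ must be performed \emph{before} the dual kernel is multiplied in. Concretely, writing $\omega_\alpha=b_\alpha+\overline A\,a_\alpha$ with $a_\alpha=i_{v_L}\omega_\alpha$, the first fiber integration gives (as in \eqref{alpha1})
\[
e^{-\overline{A\wedge\widehat A+\Lambda}}\cdot\bigl[-\overline{\widehat A}\,b_\alpha+(1-\overline{\widehat A}\,\overline A)\,a_\alpha\bigr]\cdot{}^{\mathcal{P}_\alpha,h}\tau^1_0 ,
\]
and the term $-\overline{\widehat A}\,b_\alpha$, produced by $i_{v_L}$ hitting the exponential in $\Phi$, is exactly what the second contraction $i_{\widehat v_L}$ needs in order to recover $-b_\alpha$ and hence $-\omega_\alpha$. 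If you cancel the exponentials first, these cross-terms never appear, and since $i_{\widehat v_L}(Lp)^*G=0$ the naive product-kernel computation collapses to $i_{\widehat v_L}(a_\alpha)=0$. For the same reason, the sign $-1$ does not come from "anticommuting $i_{v_L}$ past $i_{\widehat v_L}$"; it comes from the explicit contractions $i_{\widehat v_L}\bigl(-\overline{\widehat A}\,b_\alpha\bigr)=-b_\alpha$ and $i_{\widehat v_L}\bigl(-\overline{\widehat A}\,\overline A\,a_\alpha\bigr)=-\overline A\,a_\alpha$. The paper's proof avoids your shortcut entirely by decomposing $\omega_\alpha$ into its $\overline A$-components and evaluating the two fiber integrations in sequence; you would need to redo part (i) along those lines.
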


\begin{proof} {\em For simplicity, we prove the above equalities for $u=1$.}

$\,$

{\bf (i)} Let \( G \in \Omega^\bullet(LZ, \mathcal{L}^B_h)^{L\mathbb{T}} \). Over the open set \( L\pi^{-1}(U_\alpha) \), we may write
\[
G = \omega_\alpha \otimes \sigma_\alpha, \quad \omega_\alpha \in \Omega^\bullet(L\pi^{-1}(U_\alpha))^{L\mathbb{T}}.
\]
It is not hard to see that any \( L\mathbb{T} \)-invariant vertical vector field on \( LZ \) is of the form \( f \cdot v_L \), where \( f \) is a function on \( LX \). Hence,
\[
a_\alpha = i_{v_L} \omega_\alpha
\]
is a horizontal form. Moreover,
\[
i_{v_L} \left( \omega_\alpha - \overline{A} \cdot a_\alpha \right) = 0,
\]
which implies that
\[
b_\alpha = \omega_\alpha - \overline{A} \cdot a_\alpha
\]
is also a horizontal form. In summary, we can express
\[
\omega_\alpha = b_\alpha + \overline{A} \cdot a_\alpha,
\]
where both \( a_\alpha \) and \( b_\alpha \) are horizontal forms, i.e forms on $LU_\alpha.$

Therefore on $L\pi^{-1}(U_\alpha)\times _{LU_\alpha} L\widehat\pi^{-1}(U_\alpha)$, one has
\be \label{localtensor}
\begin{split} 
&BCh^h_{\widehat H-H}(\mathcal{P}, \nabla^{\mathcal{P}}+\Theta)\widehat \otimes (Lp)^*G|_{L\pi^{-1}(U_\alpha)\times _{LU_\alpha} L\widehat\pi^{-1}(U_\alpha)}\\
=&\left(e^{-\overline{ A\wedge \widehat A+ \Lambda}}\cdot {}^{\mathcal{P}_\alpha, h} \tau^1_0
\otimes (Lp^*\sigma_\alpha^{-1} \otimes L\widehat p^*\widehat{\sigma}_\alpha)\right)\widehat \otimes \left[(b_\alpha + \overline{A} \cdot a_\alpha)\otimes Lp^*\sigma_\alpha\right]\\
=&\left[
e^{-\overline{ A\wedge \widehat A+ \Lambda}}\cdot {}^{\mathcal{P}_\alpha, h} \tau^1_0 \cdot (b_\alpha + \overline{A} \cdot a_\alpha)\right]\otimes L\widehat p^*\widehat \sigma_\alpha\\
\end{split}
\ee

Note that ${}^{\mathcal{P}_\alpha, h} \tau^1_0$ is a function on $LU_\alpha$, which satisfies
\be {}^{\mathcal{P}_\alpha, h} \tau^1_0
\otimes (Lp^*\sigma_\alpha^{-1} \otimes L\widehat p^*\widehat{\sigma}_\alpha)={}^{\mathcal{P}_\beta, h} \tau^1_0
\otimes (Lp^*\sigma_\beta^{-1} \otimes L\widehat p^*\widehat{\sigma}_\beta),\ee
and
\be (b_\alpha + \overline{A} \cdot a_\alpha)\otimes Lp^*\sigma_\alpha=(b_\beta + \overline{A} \cdot a_\beta)\otimes Lp^*\sigma_\beta.\ee
Hence after integration along the fiber, one has

\be \label{consistency} 
\begin{split}
&\left(\int_{L\pi^{-1}(U_\alpha)\times _{LU_\alpha} L\widehat\pi^{-1}(U_\alpha)/L\widehat\pi^{-1}(U_\alpha)}e^{-\overline{ A\wedge \widehat A+ \Lambda}}\cdot {}^{\mathcal{P}_\alpha, h} \tau^1_0 \cdot (b_\alpha + \overline{A} \cdot a_\alpha)\right)\otimes \widehat \sigma_\alpha\\
=&\left(\int_{L\pi^{-1}(U_\beta)\times _{LU_\beta} L\widehat\pi^{-1}(U_\beta)/L\widehat\pi^{-1}(U_\beta)}e^{-\overline{ A\wedge \widehat A+ \Lambda}}\cdot {}^{\mathcal{P}_\alpha, h} \tau^1_0 \cdot (b_\beta + \overline{A} \cdot a_\beta)\right)\otimes \widehat \sigma_\beta.
\end{split}
\ee
 
Since $\Lambda, \cdot {}^{\mathcal{P}_\alpha, h} \tau^1_0, a_\alpha, b_\alpha$ are all functions or forms on $LU_\alpha$, it is not hard to see that  
\be 
\begin{split}
&\int_{L\pi^{-1}(U_\alpha)\times _{LU_\alpha} L\widehat\pi^{-1}(U_\alpha)/L\widehat\pi^{-1}(U_\alpha)}e^{-\overline{ A\wedge \widehat A+ \Lambda}}\cdot {}^{\mathcal{P}_\alpha, h} \tau^1_0 \cdot (b_\alpha + \overline{A} \cdot a_\alpha)\\
=&i_{v_L} \left[e^{-\overline{ A\wedge \widehat A+ \Lambda}}\cdot {}^{\mathcal{P}_\alpha, h} \tau^1_0 \cdot (b_\alpha + \overline{A} \cdot a_\alpha)\right]\\
=& e^{-\overline{ A\wedge \widehat A+ \Lambda}}\cdot \left[-\overline{\widehat A}\, b_\alpha+(1-\overline{\widehat A}\, \overline A)a_\alpha\right]\cdot {}^{\mathcal{P}_\alpha, h} \tau^1_0.\\
\end{split}
\ee
Note that although the form on the above contains $A$, it is already a form on $L\widehat\pi^{-1}(U_\alpha)$. 

Hence 
\be \label{alpha1}
\begin{split}
&\left. LT_*(G)\right|_{L\widehat\pi^{-1}(U_\alpha)}\\
=&\left\{ e^{-\overline{ A\wedge \widehat A+ \Lambda}}\cdot \left[-\overline{\widehat A}\, b_\alpha+(1-\overline{\widehat A}\, \overline A)a_\alpha\right]\cdot {}^{\mathcal{P}_\alpha, h} \tau^1_0\right\}\otimes \widehat \sigma_\alpha.
\end{split}
 \ee

Now considering $L{\widehat T}_* (LT_*(G))$. 
It is not hard to see that on the $\alpha$-chart, this amounts to compute the integration along the fiber $\int_{L\pi^{-1}(U_\alpha)\times _{LU_\alpha} L\widehat\pi^{-1}(U_\alpha)/\pi^{-1}(U_\alpha)}$ of the following term
{\small
\be \label{alpha11}
\begin{split}
&BCh^h_{H-\widehat H}(\mathcal{P}^{-1}, \nabla^{\mathcal{P}^{-1}}-\Theta)\cdot \left\{ e^{-\overline{ A\wedge \widehat A+ \Lambda}} \left[-\overline{\widehat A}\, b_\alpha+(1-\overline{\widehat A}\, \overline A)a_\alpha\right]\cdot {}^{\mathcal{P}_\alpha, h} \tau^1_0\right\}\otimes L\widehat p^*\widehat \sigma_\alpha\\
=& \left[
e^{-\overline{ A\wedge \widehat A+ \Lambda}}\cdot {}^{\mathcal{P}_\alpha^{-1}, h} \tau^1_0 \right] 
\otimes (Lp^*\sigma_\alpha \otimes L\widehat p^*\widehat{\sigma}_\alpha^{-1})\widehat \otimes \left\{ e^{-\overline{ A\wedge \widehat A+ \Lambda}} \cdot  \left[-\overline{\widehat A}\, b_\alpha+(1-\overline{\widehat A}\, \overline A)a_\alpha\right]\cdot {}^{\mathcal{P}_\alpha, h} \tau^1_0\right\}\otimes L\widehat p^*\widehat \sigma_\alpha\\
=&\left[-\overline{\widehat A}\, b_\alpha+(1-\overline{\widehat A}\, \overline A)a_\alpha\right]\otimes Lp^*\sigma_\alpha
\end{split}
 \ee}
 
 But 
 \be \label{alpha2}
\begin{split}
&\int_{L\pi^{-1}(U_\alpha)\times _{LU_\alpha} L\widehat\pi^{-1}(U_\alpha)/L\pi^{-1}(U_\alpha)}\left[-\overline{\widehat A}\, b_\alpha+(1-\overline{\widehat A}\, \overline A)a_\alpha\right]\\
=&i_{\widehat v_{L}}\left[-\overline{\widehat A}\, b_\alpha+(1-\overline{\widehat A}\, \overline A)a_\alpha\right]\\
=&-b_\alpha-\overline A\, a_\alpha\\
=&-\omega_\alpha.
\end{split} \ee

Combining (\ref{alpha1}) and (\ref{alpha11}), one can see that on for $G|_{ L\pi^{-1}(U_\alpha)}$, 
\h
BCh^h_{H-\widehat H}(\mathcal{P}^{-1}, \nabla^{\mathcal{P}^{-1}}-\Theta)\cdot (L\widehat{p})^*(LT_*(G))=\left[-\overline{\widehat A}\, b_\alpha+(1-\overline{\widehat A}\, \overline A)a_\alpha\right]\otimes Lp^*\sigma_\alpha
\e
Then (\ref{alpha2}) tells us that 
\h 
\begin{split}
&\int_{L\pi^{-1}(U_\alpha)\times _{LU_\alpha} L\widehat\pi^{-1}(U_\alpha)/L\pi^{-1}(U_\alpha)} BCh^h_{H-\widehat H}(\mathcal{P}^{-1}, \nabla^{\mathcal{P}^{-1}}-\Theta)\cdot (L\widehat{p})^*(LT_*(G))\\
=&- \omega_\alpha \otimes \sigma_\alpha\\
=&-G|_{ L\pi^{-1}(U_\alpha)}.
\end{split}
\e

This shows that 
\be  L{\widehat T}_*( LT_*(G))=-G.\ee

So $L{\widehat T}_*\circ LT_*=-\mathrm{Id}$. And similarly, we can prove that $LT_*\circ L{\widehat T}_*=-\mathrm{Id}.$

$\, $

{\bf (ii)} 
On the $\alpha$-patch, we have 
\be \label{localdiff}
\begin{split}
&\left(\nabla^{\cL_h^{\widehat B}}-i_{{\widehat K}^h}+\overline {\widehat H}\right)(LT_*(G))\\
=& \left(\nabla^{\cL_h^{\widehat B}}-i_{{\widehat K}^h}+\overline {\widehat H}\right)\left(\int_{ LZ \times_{LX} L\widehat{Z}/L\widehat{Z}} BCh^h_{\widehat H-H}(\mathcal{P}, \nabla^{\mathcal{P}}\Theta)\widehat \otimes (Lp)^*G\right)\\
& \left(\nabla^{\cL_h^{\widehat B}}-i_{{\widehat K}^h}+\overline {\widehat H}\right)\left\{\left(\int_{L\pi^{-1}(U_\alpha)\times _{LU_\alpha} L\widehat\pi^{-1}(U_\alpha)/L\widehat\pi^{-1}(U_\alpha)}e^{-\overline{ A\wedge \widehat A+ \Lambda}}\cdot {}^{\mathcal{P}_\alpha, h} \tau^1_0 \cdot \omega_\alpha\right)\otimes \widehat \sigma_\alpha\right\}\\
=& \left\{(d-i_{\widehat K^h}\overline {\widehat B_\alpha}-i_{{\widehat K}^h}+\overline {\widehat H})\left(\int_{L\pi^{-1}(U_\alpha)\times _{LU_\alpha} L\widehat\pi^{-1}(U_\alpha)/L\widehat\pi^{-1}(U_\alpha)}e^{-\overline{ A\wedge \widehat A+ \Lambda}}\cdot {}^{\mathcal{P}_\alpha, h} \tau^1_0 \cdot \omega_\alpha\right)\right\} \otimes \widehat \sigma_\alpha.
\end{split}
\ee

But by (\ref{prop-d}),  
\be \label{d}
\begin{split}
&d \int_{L\pi^{-1}(U_\alpha)\times _{LU_\alpha} L\widehat\pi^{-1}(U_\alpha)/L\widehat\pi^{-1}(U_\alpha)}e^{-\overline{ A\wedge \widehat A+ \Lambda}}\cdot {}^{\mathcal{P}_\alpha, h} \tau^1_0 \cdot \omega_\alpha\\
=&\int_{L\pi^{-1}(U_\alpha)\times _{LU_\alpha} L\widehat\pi^{-1}(U_\alpha)/L\widehat\pi^{-1}(U_\alpha)} -d\left(e^{-\overline{ A\wedge \widehat A+ \Lambda}}\cdot {}^{\mathcal{P}_\alpha, h} \tau^1_0 \cdot \omega_\alpha \right)\\
=&\int_{L\pi^{-1}(U_\alpha)\times _{LU_\alpha} L\widehat\pi^{-1}(U_\alpha)/L\widehat\pi^{-1}(U_\alpha)} -d\left(e^{-\overline{ A\wedge \widehat A+ \Lambda}}\cdot {}^{\mathcal{P}_\alpha, h} \tau^1_0\right) \cdot \omega_\alpha-e^{-\overline{ A\wedge \widehat A+ \Lambda}}\cdot {}^{\mathcal{P}_\alpha, h} \tau^1_0\cdot d\omega_\alpha.
\end{split}
\ee

By (\ref{prop-i}), we have 
\be \label{i}
\begin{split}
&-i_{\widehat K^h} \int_{L\pi^{-1}(U_\alpha)\times _{LU_\alpha} L\widehat\pi^{-1}(U_\alpha)/L\widehat\pi^{-1}(U_\alpha)}e^{-\overline{ A\wedge \widehat A+ \Lambda}}\cdot {}^{\mathcal{P}_\alpha, h} \tau^1_0 \cdot \omega_\alpha\\
=& \int_{L\pi^{-1}(U_\alpha)\times _{LU_\alpha} L\widehat\pi^{-1}(U_\alpha)/L\widehat\pi^{-1}(U_\alpha)} i_{K^h}(e^{-\overline{ A\wedge \widehat A+ \Lambda}}\cdot {}^{\mathcal{P}_\alpha, h} \tau^1_0)\cdot \omega_\alpha+e^{-\overline{ A\wedge \widehat A+ \Lambda}}\cdot {}^{\mathcal{P}_\alpha, h} \tau^1_0 \cdot i_{K^h}\omega_\alpha.
\end{split}
\ee

However, by (\ref{BCh closed}), in the $\alpha$-patch, we have
\be (d+i_{K^h}\overline {B_\alpha}-i_{\widehat K^h}\overline {\widehat B_\alpha}-i_{K^h}+\overline{\widehat H}-\overline H)(e^{-\overline{ A\wedge \widehat A+ \Lambda}}\cdot {}^{\mathcal{P}_\alpha, h} \tau^1_0)=0. \ee

Hence, combing (\ref{d}) and (\ref{i}), we see that 
\be
\begin{split}
&(d-i_{\widehat K^h}\overline {\widehat B_\alpha}-i_{{\widehat K}^h}+\overline {\widehat H})\left(\int_{L\pi^{-1}(U_\alpha)\times _{LU_\alpha} L\widehat\pi^{-1}(U_\alpha)/L\widehat\pi^{-1}(U_\alpha)}e^{-\overline{ A\wedge \widehat A+ \Lambda}}\cdot {}^{\mathcal{P}_\alpha, h} \tau^1_0 \cdot \omega_\alpha\right)\\
=&(d-i_{{\widehat K}^h}) \left(\int_{L\pi^{-1}(U_\alpha)\times _{LU_\alpha} L\widehat\pi^{-1}(U_\alpha)/L\widehat\pi^{-1}(U_\alpha)}e^{-\overline{ A\wedge \widehat A+ \Lambda}}\cdot {}^{\mathcal{P}_\alpha, h} \tau^1_0 \cdot \omega_\alpha\right)\\
&+(-i_{\widehat K^h}\overline {\widehat B_\alpha}+\overline {\widehat H})\left(\int_{L\pi^{-1}(U_\alpha)\times _{LU_\alpha} L\widehat\pi^{-1}(U_\alpha)/L\widehat\pi^{-1}(U_\alpha)}e^{-\overline{ A\wedge \widehat A+ \Lambda}}\cdot {}^{\mathcal{P}_\alpha, h} \tau^1_0 \cdot \omega_\alpha\right)\\
=&\int_{L\pi^{-1}(U_\alpha)\times _{LU_\alpha} L\widehat\pi^{-1}(U_\alpha)/L\widehat\pi^{-1}(U_\alpha)}(-d+i_{K^h})\left(e^{-\overline{ A\wedge \widehat A+ \Lambda}}\cdot {}^{\mathcal{P}_\alpha, h} \tau^1_0\right)\cdot \omega_\alpha-e^{-\overline{ A\wedge \widehat A+ \Lambda}}\cdot {}^{\mathcal{P}_\alpha, h} \tau^1_0\cdot(d-i_{K^h})\omega_\alpha\\
&+\int_{L\pi^{-1}(U_\alpha)\times _{LU_\alpha} L\widehat\pi^{-1}(U_\alpha)/L\widehat\pi^{-1}(U_\alpha)}(i_{\widehat K^h}\overline {\widehat B_\alpha}-\overline {\widehat H})e^{-\overline{ A\wedge \widehat A+ \Lambda}}\cdot {}^{\mathcal{P}_\alpha, h} \tau^1_0 \cdot \omega_\alpha\\
=&\int_{L\pi^{-1}(U_\alpha)\times _{LU_\alpha} L\widehat\pi^{-1}(U_\alpha)/L\widehat\pi^{-1}(U_\alpha)} \left(e^{-\overline{ A\wedge \widehat A+ \Lambda}}\cdot {}^{\mathcal{P}_\alpha, h} \tau^1_0\right) \cdot (d-i_{K^h}\overline {B_\alpha}-i_{K^h}+\overline H)\omega_\alpha\\
=&\int_{L\pi^{-1}(U_\alpha)\times _{LU_\alpha} L\widehat\pi^{-1}(U_\alpha)/L\widehat\pi^{-1}(U_\alpha)} \left(e^{-\overline{ A\wedge \widehat A+ \Lambda}}\cdot {}^{\mathcal{P}_\alpha, h} \tau^1_0\right) \cdot (d-i_{K^h}\overline {B_\alpha}-i_{K^h}+\overline H)\omega_\alpha.
\end{split}
\ee

Note that in the $\alpha$-patch, 
\be  (\nabla^{\cL_h^B}-i_{K^h}+\overline H)G=\left\{(d-i_{K^h}\overline {B_\alpha}-i_{K^h}+\overline H)\omega_\alpha\right\}\otimes \sigma_\alpha.\ee

Hence continuing (\ref{localdiff}), we have in the $\alpha$-patch
\be 
\begin{split}
&\left(\nabla^{\cL_h^{\widehat B}}-i_{{\widehat K}^h}+\overline {\widehat H}\right)\circ LT_*(G)\\
=& \left\{(d-i_{\widehat K^h}\overline {\widehat B_\alpha}-i_{{\widehat K}^h}+\overline {\widehat H})\left(\int_{L\pi^{-1}(U_\alpha)\times _{LU_\alpha} L\widehat\pi^{-1}(U_\alpha)/L\widehat\pi^{-1}(U_\alpha)}e^{-\overline{ A\wedge \widehat A+ \Lambda}}\cdot {}^{\mathcal{P}_\alpha, h} \tau^1_0 \cdot \omega_\alpha\right)\right\} \otimes \widehat \sigma_\alpha\\
=& \int_{ LZ \times_{LX} L\widehat{Z}/L\widehat{Z}} BCh^h_{\widehat H-H}(\mathcal{P}, \nabla^{\mathcal{P}}\Theta)\widehat \otimes  (\nabla^{\cL_h^B}-i_{K^h}+\overline H)G\\
=&LT_* \left( (\nabla^{\cL_h^B}-i_{K^h}+\overline H)G\right).
\end{split}
\ee
Hence 
$$\left(\nabla^{\cL_h^{\widehat B}}-ui_{{\widehat K}^h}+u^{-1}\overline {\widehat H}\right)\circ LT_*=LT_*\circ (\nabla^{\cL_h^B}-ui_{K^h}+u^{-1}\overline H). $$
Similarly, we can also prove 
\be (\nabla^{\cL_h^B}-ui_{K^h}+u^{-1}\overline H)\circ L\widehat{T}_*=L\widehat{T}_*\circ  \left(\nabla^{\cL_h^{\widehat B}}-ui_{{\widehat K}^h}+u^{-1}\overline {\widehat H}\right). \ee

\end{proof}

As a consequence of Theorem~\ref{main1},
\begin{theorem}\label{main2}
The loop Hori map
\begin{footnotesize}
\[
LT_* \colon 
\left(
\Omega^\bullet(LZ, \mathcal{L}^B_h)^{L\mathbb{T}, K^h}[[u, u^{-1}]],\ 
\nabla^{\mathcal{L}_h^B} - u\, i_{K^h} + u^{-1} \overline{H}
\right)
\longrightarrow 
\left(
\Omega^{\bullet+1}(L\widehat{Z}, \mathcal{L}_h^{\widehat{B}})^{L\widehat{\mathbb{T}}, \widehat{K}^h}[[u, u^{-1}]],\ 
\nabla^{\mathcal{L}_h^{\widehat{B}}} - u\, i_{\widehat{K}^h} + u^{-1} \overline{\widehat{H}}
\right)
\]
\end{footnotesize}
is a quasi-isomorphism.
\end{theorem}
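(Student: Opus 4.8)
The plan is to deduce Theorem~\ref{main2} as an essentially formal consequence of Theorem~\ref{main1}: once $LT_*$ and $L\widehat{T}_*$ are known to be mutually inverse chain maps up to sign, $LT_*$ will be an isomorphism of cochain complexes, and a fortiori a quasi-isomorphism. Throughout I abbreviate the two differentials by
\[
D := \nabla^{\mathcal{L}_h^B} - u\, i_{K^h} + u^{-1}\overline{H}, \qquad
\widehat{D} := \nabla^{\mathcal{L}_h^{\widehat{B}}} - u\, i_{\widehat{K}^h} + u^{-1}\overline{\widehat{H}}.
\]

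First I would confirm that the two objects in the statement are genuine cochain complexes. The horizontal flatness condition recorded earlier (the horizontal analogue of Theorem~\ref{flat}), which for $u=1$ reads $(\nabla^{\mathcal{L}_h^B} - i_{K^h} + \overline{H})^2 + L_{K^h} = 0$, rescales by the conjugation $D = u^{1/2}\,u^{-N/2}(\nabla^{\mathcal{L}_h^B} - i_{K^h} + \overline{H})\,u^{N/2}$ (with $N$ the form-degree operator, $\deg u = 2$) to the operator identity $D^2 = -u\, L_{K^h}$ on the full space of $L\mathbb{T}$-invariant forms, and likewise $\widehat{D}^2 = -u\, L_{\widehat{K}^h}$ on the dual side. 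In particular, on the subspaces $\Omega^\bullet(LZ, \mathcal{L}^B_h)^{L\mathbb{T}, K^h}[[u,u^{-1}]]$ and $\Omega^{\bullet}(L\widehat{Z}, \mathcal{L}_h^{\widehat{B}})^{L\widehat{\mathbb{T}}, \widehat{K}^h}[[u,u^{-1}]]$ cut out by the conditions $L_{K^h}\omega=0$ and $L_{\widehat{K}^h}\omega=0$, these squares vanish, so $D$ and $\widehat{D}$ are bona fide differentials.

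Next I would verify that $LT_*$ and $L\widehat{T}_*$ restrict to these subcomplexes. By Theorem~\ref{main1}(ii) one has $\widehat{D}\circ LT_* = LT_*\circ D$ on all of $\Omega^\bullet(LZ, \mathcal{L}^B_h)^{L\mathbb{T}}[[u,u^{-1}]]$; squaring this intertwining relation gives $\widehat{D}^2\circ LT_* = LT_*\circ D^2$, that is,
\[
L_{\widehat{K}^h}\circ LT_* = LT_*\circ L_{K^h}
\]
after cancelling the invertible factor $-u$. Hence $L_{K^h}G = 0$ forces $L_{\widehat{K}^h}(LT_*G) = LT_*(L_{K^h}G) = 0$, so $LT_*$ indeed carries the $K^h$-closed subcomplex into the $\widehat{K}^h$-closed one; the identical argument applies to $L\widehat{T}_*$.

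Finally I would conclude. On these subcomplexes both $LT_*$ and $-L\widehat{T}_*$ are chain maps by Theorem~\ref{main1}(ii), and by Theorem~\ref{main1}(i) they satisfy $(-L\widehat{T}_*)\circ LT_* = \mathrm{Id}$ and $LT_*\circ(-L\widehat{T}_*) = \mathrm{Id}$. Therefore $LT_*$ is an isomorphism of cochain complexes with chain-map inverse $-L\widehat{T}_*$; passing to cohomology it induces an isomorphism with inverse $-[L\widehat{T}_*]$, and is in particular a quasi-isomorphism. I expect no genuine obstacle at this stage, since all the analytic substance has already been packaged into Theorem~\ref{main1}; the only point requiring care is the bookkeeping that $LT_*$ preserves the $L_{K^h}$-closedness condition defining the complexes, which the squaring argument above settles cleanly.
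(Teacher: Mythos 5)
Your proposal is correct and takes essentially the same route as the paper: the paper offers no separate proof of Theorem~\ref{main2}, presenting it purely as a consequence of Theorem~\ref{main1}, exactly as you do — mutually inverse chain maps (up to sign) give an isomorphism of complexes, hence a quasi-isomorphism. Your extra steps — checking via the horizontal flatness identity that $D^2=-u\,L_{K^h}$ so the subcomplexes are genuine complexes, and squaring the intertwining relation to show $LT_*$ carries the $L_{K^h}$-closed subcomplex into the $L_{\widehat{K}^h}$-closed one — merely make explicit bookkeeping that the paper leaves implicit.
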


\subsection{Restriction to Contant Loop Spaces: from Loop Hori to Hori}

$\, $

Restricting the loop Hori map to the constant loop spaces, we obtain a slight modified version of the classical Hori map~\eqref{eqn:Hori}. 

\begin{definition}
The \textbf{modified Hori map}
\[
T_*': \Omega^\bullet(Z)^{\mathbb{T}}[[u, u^{-1}]] \longrightarrow \Omega^{\bullet+1}(\widehat{Z})^{\widehat{\mathbb{T}}}[[u, u^{-1}]]
\]
is defined by
\[
T_*'(G) := \int_{Z \times_X \widehat{Z} / \widehat{Z}} e^{-u(A \wedge \widehat{A} + \Lambda)} \cdot p^*G.
\]

The \textbf{dual modified Hori map}
\[
\widehat{T}_*': \Omega^\bullet(\widehat{Z})^{\widehat{\mathbb{T}}}[[u, u^{-1}]] \longrightarrow \Omega^{\bullet+1}(Z)^{\mathbb{T}}[[u, u^{-1}]]
\]
is defined by
\[
\widehat{T}_*'(\widehat{G}) := \int_{\widehat{Z} \times_X Z / Z} e^{u(A \wedge \widehat{A} + \Lambda)} \cdot \widehat{p}^*(\widehat{G}).
\]
\end{definition}

Let
\[
h_{L\mathbb{T}, K^h}^\bullet(LZ, \mathcal{L}^B_h; \overline{H}) := H\left(
\Omega^\bullet(LZ, \mathcal{L}^B_h)^{L\mathbb{T}, K^h}[[u, u^{-1}]],\ 
\nabla^{\mathcal{L}^B_h} - u i_{K^h} + u^{-1} \overline{H}
\right)
\]
denote the cohomology of the horizontal exotic twisted $L\mathbb{T}$-equivariant complex.

{\bf Since \( \Lambda \) is a closed 2-form on \( X \), it follows that the modified Hori maps \( T_*' \) and \( \widehat{T}_*' \) enjoy the same formal properties as their unmodified counterparts described in Theorem~\ref{thm:T-duality}.}

\medskip

By Theorem~\ref{local}, the restriction to constant loops induces an isomorphism:
\[
\mathrm{res}: h_{L\mathbb{T}, K^h}^\bullet(LZ, \mathcal{L}^B_h; \overline{H}) 
\xrightarrow{\cong} 
H^\bullet(\Omega(Z)^{\mathbb{T}}[[u, u^{-1}]],\ d + u^{-1}H),
\]
and similarly on the dual side.

\begin{theorem} \label{main3}
There is a commutative diagram:
\begin{equation} \label{T-duality-commute}
\begin{gathered}
\xymatrix@C=4.5em@R=3em{
h_{L\mathbb{T}, K^h}^\bullet(LZ, \mathcal{L}^B_h; \overline{H}) 
\ar[d]_{\mathrm{res}}^{\cong}  
\ar[r]^{LT_*} & 
h_{L\widehat{\mathbb{T}}, \widehat{K}^h}^{\bullet+1}(L\widehat{Z}, \mathcal{L}_h^{\widehat{B}}; \overline{\widehat{H}}) 
\ar[d]_{\mathrm{res}}^{\cong} \\
H^\bullet(\Omega(Z)^{\mathbb{T}}[[u, u^{-1}]], d + u^{-1}H) 
\ar[r]_{T_*'} & 
H^{\bullet+1}(\Omega(\widehat{Z})^{\widehat{\mathbb{T}}}[[u, u^{-1}]], d + u^{-1} \widehat{H})
}
\end{gathered}
\end{equation}
\end{theorem}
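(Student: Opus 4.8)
The plan is to establish the commutativity already at the level of cochains and then descend to cohomology. Writing $\mathrm{res}$ for the restriction of an invariant form on $LZ$ (resp. $L\widehat{Z}$) to the submanifold of constant loops, canonically identified with $Z$ (resp. $\widehat{Z}$), the goal is the identity
\[
\mathrm{res}\bigl(LT_*(G)\bigr) = T_*'\bigl(\mathrm{res}(G)\bigr)
\qquad\text{for all } G \in \Omega^\bullet(LZ, \mathcal{L}^B_h)^{L\mathbb{T}, K^h}[[u, u^{-1}]].
\]
Since all four arrows in \eqref{T-duality-commute} are chain maps --- the loop Hori maps by Theorem~\ref{main1}(ii), the map $T_*'$ because $\Lambda$ is closed and so it inherits the chain-map property of the classical Hori map (Theorem~\ref{thm:T-duality}), and $\mathrm{res}$ because the vector field $K^h$ vanishes on constant loops, so that $\nabla^{\mathcal{L}_h^B} - u\,i_{K^h} + u^{-1}\overline{H}$ restricts to $d + u^{-1}H$ --- this cochain-level identity yields the asserted commutativity on cohomology at once.

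First I would exploit the explicit local formula \eqref{alpha1} for $LT_*(G)$, which expresses it over $L\widehat{\pi}^{-1}(U_\alpha)$ through the averaged forms $\overline{A \wedge \widehat{A} + \Lambda}$, $\overline{A}$, $\overline{\widehat{A}}$, the horizontal holonomy ${}^{\mathcal{P}_\alpha, h}\tau^1_0$, and the horizontal components $a_\alpha, b_\alpha$ of $G$. Upon restriction to constant loops every average $\overline{(\cdot)}$ collapses to the underlying form evaluated at the base point, the horizontal holonomy ${}^{\mathcal{P}_\alpha, h}\tau^1_0$ becomes $1$ --- so that the kernel reduces to $e^{-u(A\wedge\widehat{A}+\Lambda)}$ exactly as in \eqref{eq:horizontalBCh} --- and the coefficient bundle $\mathcal{L}_h^{\widehat{B}}$ trivializes canonically, its transition functions being horizontal holonomies. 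Comparing the resulting local expression with the local form of $T_*'(\mathrm{res}(G)) = i_v\bigl(e^{-u(A\wedge\widehat{A}+\Lambda)}\,p^*\mathrm{res}(G)\bigr)$, computed using $i_v A = 1$, $i_v\widehat{A} = 0$ and $i_v\Lambda = 0$, one checks the two agree patch by patch, giving the cochain identity.

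The step I expect to be the main obstacle is the compatibility of the looped fiber integration with restriction, namely $\mathrm{res}\circ i_{v_L} = i_v\circ\mathrm{res}$. One must verify that contracting an $L\mathbb{T}$-invariant form with the looped vertical field $v_L$ of Definition~\ref{intfiberloop} and then restricting to constant loops coincides with first restricting and then contracting with $v$; this rests on the fact that $v_L$ restricts to $v$ along constant loops and is compatible with the inclusion of the fixed-point locus. Once this is secured, together with the triviality of ${}^{\mathcal{P}_\alpha, h}\tau^1_0$ and the collapse of the averaging operation on constant loops, the remaining verifications are the routine local comparison indicated above, and the commutativity of \eqref{T-duality-commute} follows.
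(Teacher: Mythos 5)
Your proposal is correct and follows exactly the route the paper intends: the paper states Theorem~\ref{main3} without a written proof, relying on the local formula \eqref{alpha1}, the reduction of $\mathrm{BCh}^h_{\widehat H - H}(\mathcal{P},\nabla^{\mathcal{P}}+\Theta)$ to $e^{-u(A\wedge\widehat A+\Lambda)}$ on constant loops, and the compatibility of $i_{v_L}$ with restriction to the fixed-point locus, which is precisely your cochain-level identity $\mathrm{res}\circ LT_* = T_*'\circ\mathrm{res}$. Your identification of $\mathrm{res}\circ i_{v_L}=i_v\circ\mathrm{res}$ (valid since $v_L$ is tangent to the constant loops and restricts to $v$ there) as the one point needing explicit verification is apt, and the rest of your argument supplies the details the paper leaves implicit.
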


$\, $
\subsection{ The Graded Loop Hori Formulae} \label{graded}

$\, $

Motivated by the graded Hori formula introduced in~\cite{HM21}, we define a loop space analogue—the \emph{graded loop Hori map}—in this subsection.

We begin by recalling the classical setup. For each \( m \in \mathbb{Z} \), define the \textbf{level-$m$ Hori map} by
\[
T_{*, m}(G) := \int_{\mathbb{T}} e^{-m A \wedge \widehat{A}} \cdot G.
\]
Assembling these level-wise maps yields the \textbf{graded Hori map}
\[
gT_* \colon \bigoplus_{m \in \mathbb{Z}} \Omega^\bullet(Z)^{\mathbb{T}}[[u, u^{-1}]] \cdot y^m 
\longrightarrow \bigoplus_{m \in \mathbb{Z}} \Omega^{\bullet+1}(\widehat{Z})^{\widehat{\mathbb{T}}}[[u, u^{-1}]] \cdot y^m,
\]
defined by
\[
gT_*\left( \sum_{m \in \mathbb{Z}} \omega_m \cdot y^m \right) := \sum_{m \in \mathbb{Z}} T_{*, m}(\omega_m) \cdot y^m,
\]
where \( y \) is a formal grading variable. A dual map \( g\widehat{T}_* \) can be defined analogously on the dual side. The key identity shown in~\cite{HM21} is:
\[
g\widehat{T}_* \circ gT_* = -y \frac{\partial}{\partial y}, \qquad
gT_* \circ g\widehat{T}_* = -y \frac{\partial}{\partial y}.
\]

\begin{definition}
The \textbf{level-$m$ loop Hori map}
\[
LT_{*, m} \colon \Omega^\bullet(LZ, (\mathcal{L}_h^B)^{\otimes m})^{L\mathbb{T}}[[u, u^{-1}]]
\longrightarrow \Omega^{\bullet+1}(L\widehat{Z}, (\mathcal{L}_h^{\widehat{B}})^{\otimes m})^{L\widehat{\mathbb{T}}}[[u, u^{-1}]]
\]
is defined by
\[
LT_{*, m}(G) := \int_{LZ \times_{LX} L\widehat{Z} / L\widehat{Z}} 
BCh^h_{m(\widehat{H} - H)}\big(\mathcal{P}^{\otimes m}, \nabla^{\mathcal{P}^{\otimes m}} + m \Theta\big) 
\widehat{\otimes} (Lp)^*G.
\]

The \textbf{dual level-$m$ loop Hori map}
\[
L\widehat{T}_{*, m} \colon \Omega^\bullet(L\widehat{Z}, (\mathcal{L}_h^{\widehat{B}})^{\otimes m})^{L\widehat{\mathbb{T}}}[[u, u^{-1}]]
\longrightarrow \Omega^{\bullet+1}(LZ, (\mathcal{L}_h^B)^{\otimes m})^{L\mathbb{T}}[[u, u^{-1}]]
\]
is defined by
\[
L\widehat{T}_{*, m}(\widehat{G}) := \int_{LZ \times_{LX} L\widehat{Z} / LZ}
BCh^h_{m(H - \widehat{H})}\big((\mathcal{P}^{-1})^{\otimes m}, \nabla^{(\mathcal{P}^{-1})^{\otimes m}} - m \Theta\big) 
\widehat{\otimes} (L\widehat{p})^*(\widehat{G}).
\]

Assembling these maps for all \( m \in \mathbb{Z} \), we define the \textbf{graded loop Hori map}
\[
gLT_* \colon \bigoplus_{m \in \mathbb{Z}} \Omega^\bullet(LZ, (\mathcal{L}_h^B)^{\otimes m})^{L\mathbb{T}}[[u, u^{-1}]] \cdot y^m
\longrightarrow \bigoplus_{m \in \mathbb{Z}} \Omega^{\bullet+1}(L\widehat{Z}, (\mathcal{L}_h^{\widehat{B}})^{\otimes m})^{L\widehat{\mathbb{T}}}[[u, u^{-1}]] \cdot y^m
\]
via
\[
gLT_*\left( \sum_{m \in \mathbb{Z}} \omega_m \cdot y^m \right) := \sum_{m \in \mathbb{Z}} LT_{*, m}(\omega_m) \cdot y^m.
\]

The \textbf{dual graded loop Hori map} \( gL\widehat{T}_* \) is defined analogously.
\end{definition}

\begin{theorem}
The graded loop Hori maps satisfy the identities:
\[
gL\widehat{T}_* \circ gLT_* = -y \frac{\partial}{\partial y}, \qquad 
gLT_* \circ gL\widehat{T}_* = -y \frac{\partial}{\partial y}.
\]
\end{theorem}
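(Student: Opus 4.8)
The plan is to reduce the graded identity to a level-by-level statement and then adapt, essentially verbatim, the proof of Theorem~\ref{main1}(i). Since \( gLT_* \) and \( gL\widehat{T}_* \) act diagonally on the grading, \( gLT_*\bigl(\sum_m \omega_m y^m\bigr) = \sum_m LT_{*,m}(\omega_m)\,y^m \), and since \( -y\,\partial/\partial y \) multiplies the level-\( m \) component by \( -m \), the two asserted identities are equivalent to the family of level-wise identities
\[
L\widehat{T}_{*,m}\circ LT_{*,m} = -m\cdot\mathrm{Id}, \qquad LT_{*,m}\circ L\widehat{T}_{*,m} = -m\cdot\mathrm{Id}, \qquad m\in\mathbb{Z}.
\]
Because each direct sum has only finitely many nonzero components, there is no convergence issue in \( m \), and it suffices to fix a single \( m \). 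As in the proof of Theorem~\ref{main1}, I would set \( u=1 \) for simplicity.

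First I would record the level-\( m \) kernel. The gerbe module \( \mathcal{P}^{\otimes m} \) with connection \( \nabla^{\mathcal{P}^{\otimes m}}+m\Theta \) has curvature \( m\,F_\alpha = m(d\theta_\alpha\wedge d\widehat{\theta}_\alpha + \Lambda) \) and is a module over the correspondence gerbe with flux \( m(\widehat{H}-H) \). Running the computation~\eqref{local BCh} verbatim — the only change being the overall factor \( m \) in the \( B \)-field prefactor \( e^{-m\,\overline{\widehat{B}_\alpha - B_\alpha}} \) and in the curvature term \( e^{-m\,\overline{F_\alpha}} \), together with the same cancellation \( (\widehat{B}_\alpha - B_\alpha) + F_\alpha = A\wedge\widehat{A} + \Lambda \) — yields
\[
\mathrm{BCh}^h_{m(\widehat{H}-H),\alpha}\bigl(\mathcal{P}^{\otimes m},\nabla^{\mathcal{P}^{\otimes m}}+m\Theta\bigr) = \left[e^{-m\,\overline{A\wedge\widehat{A}+\Lambda}}\cdot{}^{\mathcal{P}_\alpha^{\otimes m},h}\tau^1_0\right]\otimes\left((Lp)^*\sigma_\alpha^{-m}\otimes(L\widehat{p})^*\widehat{\sigma}_\alpha^{m}\right),
\]
with the dual kernel obtained by replacing \( m \) by \( -m \) in the exponent and passing to \( \mathcal{P}^{-\otimes m} \). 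The closedness identity~\eqref{BCh closed} holds verbatim for the \( m \)-twisted superconnection, so \( LT_{*,m} \) and \( L\widehat{T}_{*,m} \) are well defined and land in sections of \( (\mathcal{L}_h^{\widehat{B}})^{\otimes m} \) and \( (\mathcal{L}_h^B)^{\otimes m} \) respectively; the patching over the \( \alpha \)-cover is inherited from the level-one case with the holonomy line bundles replaced by their \( m \)-th powers.

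The heart of the argument is then the computation of \( L\widehat{T}_{*,m}\circ LT_{*,m} \), carried out by repeating the proof of Theorem~\ref{main1}(i) line for line. Writing \( G = (b_\alpha + \overline{A}\cdot a_\alpha)\otimes\sigma_\alpha^{\otimes m} \) with \( a_\alpha, b_\alpha \) horizontal, the only new feature is the factor \( m \) in the exponent \( e^{-m\,\overline{A\wedge\widehat{A}+\Lambda}} \). Since the fiber integration \( \int_{LZ\times_{LX}L\widehat{Z}/L\widehat{Z}} = i_{v_L} \) of Definition~\ref{intfiberloop} is an odd derivation satisfying \( i_{v_L}\overline{A} = 1 \), \( i_{v_L}\overline{\widehat{A}} = 0 \), \( i_{v_L}\overline{\Lambda} = 0 \), one has \( i_{v_L}\,e^{-m\,\overline{A\wedge\widehat{A}+\Lambda}} = -m\,\overline{\widehat{A}}\,e^{-m\,\overline{A\wedge\widehat{A}+\Lambda}} \): the scalar \( m \) is simply pulled down from the exponent, so the infinite-series nature of the kernel on loop space causes no difficulty. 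This produces
\[
LT_{*,m}(G)\big|_\alpha = \left\{e^{-m\,\overline{A\wedge\widehat{A}+\Lambda}}\left[-m\,\overline{\widehat{A}}\,b_\alpha + (1 - m\,\overline{\widehat{A}}\,\overline{A})a_\alpha\right]\cdot{}^{\mathcal{P}_\alpha^{\otimes m},h}\tau^1_0\right\}\otimes\widehat{\sigma}_\alpha^{\otimes m},
\]
and applying \( L\widehat{T}_{*,m} \), where now \( i_{\widehat{v}_L}\overline{\widehat{A}} = 1 \) and \( i_{\widehat{v}_L}\overline{A} = 0 \), the two exponentials and the mutually inverse horizontal holonomies cancel, and a second contraction yields \( -m(b_\alpha + \overline{A}a_\alpha) = -m\,\omega_\alpha \); hence \( L\widehat{T}_{*,m}\circ LT_{*,m} = -m\cdot\mathrm{Id} \), the reverse composition being identical with the roles of \( v_L \) and \( \widehat{v}_L \) interchanged. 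Reassembling over \( m \) gives \( gL\widehat{T}_*\circ gLT_* = -y\,\partial/\partial y \) and its twin. The only genuinely new ingredient beyond Theorem~\ref{main1} is the bookkeeping of the factor \( m \), and the main point to get right — the one place where the graded case is not a literal transcription of the level-one case — is that this factor arises from a single application of the derivation \( i_{v_L} \) (resp. \( i_{\widehat{v}_L} \)) to the exponential kernel; once this is observed, the entire level-one computation transfers without modification.
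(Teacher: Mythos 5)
Your proof is correct and follows essentially the same route as the paper: reduce the graded identity to the level-wise statement $L\widehat{T}_{*,m}\circ LT_{*,m}=-m\cdot\mathrm{Id}$ and rerun the proof of Theorem~\ref{main1}(i), with the factor $m$ entering through the kernel $e^{-m\,\overline{A\wedge\widehat{A}+\Lambda}}$ and being pulled down by the contractions $i_{v_L}$, $i_{\widehat{v}_L}$ to give $-m\,\omega_\alpha$. Your write-up is in fact slightly more careful than the paper's (e.g.\ the $\widehat{\sigma}_\alpha^{\otimes m}$ in the target and the explicit justification of the level-$m$ kernel), but the argument is the same.
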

\begin{proof} The proof is similar to the proof of (i) in Theorem \ref{main1}. 

Take a 
$$G\in \Omega^\bullet(LZ, (\mathcal{L}^B_h)^{\otimes m})^{L\mathbb{T}}[[u, u^{-1}]]. $$ Suppose in the $\alpha$-patch, 
\be  G = \omega_\alpha \otimes (\sigma_\alpha)^{\otimes m}=(b_\alpha + \overline{A} \cdot a_\alpha)\otimes (\sigma_\alpha)^{\otimes m}.\ee

Then similar to (\ref{alpha1}), one has
\be 
\begin{split}
&\left. LT_{*, m}(G)\right|_{L\widehat\pi^{-1}(U_\alpha)}\\
=&\left\{ e^{-m\overline{ A\wedge \widehat A+ \Lambda}}\cdot \left[-m\overline{\widehat A}\, b_\alpha+(1-m\overline{\widehat A}\, \overline A)a_\alpha\right]\cdot {}^{\mathcal{P}_\alpha, h} \tau^1_0\right\}\otimes \widehat \sigma_\alpha.
\end{split}
 \ee
And similar to (\ref{alpha2}), one has
\be 
\begin{split}
&\int_{L\pi^{-1}(U_\alpha)\times _{LU_\alpha} L\widehat\pi^{-1}(U_\alpha)/L\pi^{-1}(U_\alpha)}\left[-m\overline{\widehat A}\, b_\alpha+(1-m\overline{\widehat A}\, \overline A)a_\alpha\right]\\
=&i_{\widehat v_{L}}\left[-m\overline{\widehat A}\, b_\alpha+(1-m\overline{\widehat A}\, \overline A)a_\alpha\right]\\
=&-mb_\alpha-m\overline A\, a_\alpha\\
=&-m\omega_\alpha.
\end{split} \ee
Hence one has 
\be L\widehat T_{*, m}\circ LT_{*, m}(G)=-mG.\ee

Then the desired equality follows. 
\end{proof}

\subsection{ An Example} \label{example}

We consider the simplest situation,
\[
  Z = X \times \mathbb T, 
  \qquad 
  \widehat Z = X \times \widehat{\mathbb T},
\]
with trivial gerbe data on both sides. Thus on $Z$ we take
\[
  \bigl(H,\, B_\alpha,\, F_{\alpha\beta},\, (L_{\alpha\beta}, \nabla^{L_{\alpha\beta}})\bigr)
  = (0,0,0,\text{trivial}),
\]
and similarly on $\widehat Z$,
\[
  \bigl(\widehat H,\, \widehat B_\alpha,\, \widehat F_{\alpha\beta},\, 
        (\widehat L_{\alpha\beta}, \nabla^{\widehat L_{\alpha\beta}})\bigr)
  = (0,0,0,\text{trivial}).
\]
Write $\theta$ (resp.\ $\widehat\theta$) for the angular coordinate on the
$\mathbb T$– (resp.\ $\widehat{\mathbb T}$–)factor; the canonical connection
$1$–forms are $d\theta$ and $d\widehat\theta$ on $Z$ and $\hat Z$ respectively.

\medskip

Let $p\colon Z\to X$ and $\widehat p\colon \widehat Z\to X$ be the projections,
and denote their loop maps by $Lp$ and $L\widehat p$.  In
Definition~\ref{loop Hori}, the loop Hori map
\[
  LT_*:\;
  \Omega^\bullet\!\bigl(LZ, \mathcal{L}^{B}_{h}\bigr)^{L\mathbb T}
  \bigl[\![u, u^{-1}]\!\bigr]
  \;\longrightarrow\;
  \Omega^{\bullet+1}\!\bigl(L\widehat Z, \mathcal{L}^{\widehat B}_{h}\bigr)^{L\widehat{\mathbb T}}
  \bigl[\![u, u^{-1}]\!\bigr]
\]
reduces, by triviality of the twists, to the untwisted identifications
\begin{align*}
  \Omega^\bullet\!\bigl(LZ, \mathcal{L}^{B}_{h}\bigr)^{L\mathbb T}\bigl[\![u,u^{-1}]\!\bigr]
  &= \Omega^\bullet\!\bigl(LX \times L\mathbb T\bigr)^{L\mathbb T}\bigl[\![u,u^{-1}]\!\bigr],\\[2pt]
  \Omega^{\bullet+1}\!\bigl(L\widehat Z, \mathcal{L}^{\widehat B}_{h}\bigr)^{L\widehat{\mathbb T}}\bigl[\![u,u^{-1}]\!\bigr]
  &= \Omega^{\bullet+1}\!\bigl(LX \times L\widehat{\mathbb T}\bigr)^{L\widehat{\mathbb T}}\bigl[\![u,u^{-1}]\!\bigr].
\end{align*}

For the twisted Bismut–Chern character (see~\eqref{local BCh}), the present
case leads to a particularly transparent factor:
\begin{equation}\label{eq:BCh-trivial}
  BCh^{h}_{\widehat H - H}\!\bigl(\mathcal P,\, \nabla^{\mathcal P}+\Theta\bigr)
  \;=\;
  \exp\!\bigl(-\,u\, \overline{\,d\theta \wedge d\widehat\theta\,}\bigr),
\end{equation}
where $u$ is the usual formal parameter of degree $2$, and the bar denotes the
form induced on
$L(X\times \mathbb T \times \widehat{\mathbb T})$ by the average operation (see section (\ref{reviewExotic})). 

\medskip

Accordingly, the loop Hori map takes the simple form
\[
  LT_*:\;
  \Omega^\bullet\!\bigl(LX \times L\mathbb T\bigr)^{L\mathbb T}\bigl[\![u,u^{-1}]\!\bigr]
  \;\longrightarrow\;
  \Omega^{\bullet+1}\!\bigl(LX \times L\widehat{\mathbb T}\bigr)^{L\widehat{\mathbb T}}\bigl[\![u,u^{-1}]\!\bigr],
\]
and is given explicitly by fibre integration along the correspondence
$LZ \times_{LX} L\widehat Z$:
\[
  LT_*\,G
  \;=\;
  \int_{LZ \times_{LX} L\widehat Z\,/\,L\widehat Z}
  \exp\!\bigl(-\,u\, \overline{\,d\theta \wedge d\widehat\theta\,}\bigr)
  \,\wedge\, (Lp)^{*}G,
  \qquad G\in \Omega^\bullet\!\bigl(LX \times L\mathbb T\bigr)^{L\mathbb T}\bigl[\![u,u^{-1}]\!\bigr].
\]

Dually, one has the dual loop Hori map
\[
  \widehat{LT}_*:\;
  \Omega^\bullet\!\bigl(LX \times L\widehat{\mathbb T}\bigr)^{L\widehat{\mathbb T}}\bigl[\![u,u^{-1}]\!\bigr]
  \;\longrightarrow\;
  \Omega^{\bullet+1}\!\bigl(LX \times L\mathbb T\bigr)^{L\mathbb T}\bigl[\![u,u^{-1}]\!\bigr],
\]
defined by fibre integration in the opposite direction with the conjugate kernel:
\begin{equation}\label{eq:dual-loop-hori}
  \widehat{LT}_*\,\widehat G
  \;=\;
  \int_{LZ \times_{LX} L\widehat Z\,/\,LZ}
  \exp\!\bigl(u\, \overline{\,d\theta \wedge d\widehat\theta\,}\bigr)
  \,\wedge\, (L\widehat p)^{*}\widehat G,
  \qquad 
  \widehat G\in \Omega^\bullet\!\bigl(LX \times L\widehat{\mathbb T}\bigr)^{L\widehat{\mathbb T}} \bigl[\![u,u^{-1}]\!\bigr].
\end{equation}

\medskip

In the present trivial model the two transformations are mutual inverses
in the sense that the compositions
\[
  \widehat{LT}_*\circ LT_*
  \quad\text{and}\quad
  LT_* \circ \widehat{LT}_*
\]
act as minus identity on the respective spaces of $L\mathbb T$- and
$L\widehat{\mathbb T}$-invariant forms, reflecting the expected involutivity of T-duality at the level of
loop-space differential forms.


\end{document}